\newcommand{\bm}[1]{\mbox{\boldmath$#1$}}
\def\eps{\varepsilon}
\def\al{\alpha}
\def\be{\beta}
\def\ga{\gamma}
\def\O{\mathcal{O}}
\newcommand{\se}{\sqrt{\eps}}
\newcommand{\bmU}{\bm{U}}
\newcommand{\bmD}{\bm{D}}
\newcommand{\bmF}{\bm{F}}
\newcommand{\bmI}{\bm{I}}
\newcommand{\bmG}{\bm{G}}
\newcommand{\bmZ}{\bm{Z}}
\newcommand{\bmxero}{\bm{0}}
\newtheorem{remark}{Remark}
\newtheorem{theorem}{Theorem}
\numberwithin{defn}{section}
\numberwithin{rem}{section}
\newtheorem{lemma}[theorem]{Lemma}
\title{Traveling pulse solutions in a three-component FitzHugh--Nagumo model}
\author{ Takashi~Teramoto 
\thanks{TT is partially supported by KAKENHI Grant-in-Aid for Scientific Research 17K05355.} \\
School of Medicine, \\ Asahikawa Medical University, \\
Asahikawa, 078-8510, Japan \\
	\texttt{teramoto@asahikawa-med.ac.jp} \\
	\And
	Peter van~Heijster  
	\thanks{PvH was supported under the Australian Research Councils Discovery Project DP190102545.}\\
	School of Mathematical Sciences, \\
	Queensland University of Technology, \\
	Brisbane, QLD 4001, Australia
}
\begin{document}
\maketitle

\begin{abstract}
We use geometric singular perturbation techniques combined with an action functional approach to 
study traveling pulse solutions in a three-component FitzHugh--Nagumo model. First, we derive the profile 
of traveling $1$-pulse solutions with undetermined width and propagating speed. Next, we compute 
the associated action functional for this profile from which we derive the conditions for existence 
and a saddle-node bifurcation as the zeros of the action functional and its derivatives. 
We obtain the same conditions by using a different analytical approach that exploits 
the singular limit of the problem. We also apply this methodology of the action functional to the problem 
for traveling $2$-pulse solutions and derive the explicit conditions for existence and a saddle-node bifurcation. 
From these we deduce a necessary condition for the existence of traveling $2$-pulse solutions. 
We end this article with a discussion related to Hopf bifurcations near the saddle-node bifurcation. 
\end{abstract}

\keywords{reaction-diffusion equations \and
singular limit \and action functional \and existence \and stability \and 
saddle node bifurcation}

\section{Introduction}
The study of spatially localized patterns in multi-component reaction-diffusion systems has a long history, see for instance the surveys of experimental and numerical studies in various physical and chemical contexts \cite{Purwins3, Vanag}. The myriad of  experimental and numerical studies highlight the necessity to develop a theoretical study of the existence, stability, bifurcation and dynamics of localized solutions \cite{Knobloch,MJWard}. 
Front and pulse solutions in one spatial dimension, and spot solutions 
in higher dimensions, have been of particular interest for the theoretical studies 
in, for instance, the singular limit of the two-component FitzHugh--Nagumo model
and Gray-Scott type models \cite{Doelman1, Kolokolnikov, Nishiura1}. 

The focus of this paper is on traveling pulse solutions 
in a three-component FitzHugh--Nagumo model. This model was 
originally proposed as a phenomenological model for the gas-discharged systems 
studied by Purwins et al. \cite{Purwins1, Purwins2, Purwins3}, and reformulated for the mathematical analysis in the singular limit by Doelman et al. \cite{vH_EXIS}. 
The results derived in these references indicate that this three-component 
model has richer and more complicated solutions (when compared 
to the original two-component model). 
For instance, on unbounded domains stable traveling spot solutions in higher dimensions \cite{vH_SPOT_ST} and stationary $2$-pulse solutions \cite{vH_EXIS} only exist in the extended three-component  model. 
In this paper, we will show the existence of traveling $2$-pulse solutions. Such clustered and localized moving solutions are specific to the following three-component model. 

The singularly perturbed three-component FitzHugh--Nagumo model 
under consideration is 
\begin{align}
 \label{FHNd}
 \left\{
\begin{aligned}
    U_t & = \eps^2 U_{xx} + U - U^3 - \eps(\al V + \be W + \ga) \,,\\  
        \tau V_t & = V_{xx} + U - V \,, \\
        \theta W_t & = D^2 W_{xx} + U - W \,, 
\end{aligned}
\right.
\end{align}
where $0 < \eps\ll 1; D>0; (x,t) \in \mathbb{R} \times \mathbb{R}^+$ and 
the parameters $\al, \be, \ga, D$ are assumed to be strictly
$\mathcal{O}(1)$ with respect to $\eps$. 
The small parameter $\eps$ plays the role of a perturbation parameter and
the fast $U$-component is weakly coupled to the two slow $V$- and $W$-components.
The system is bistable with two stable trivial background states ${\cal O}(\eps)$-close 
to $(U,V,W) = \pm(1,1,1)$. The singular perturbed nature of the problem has enabled mathematicians to study 
the various aspects of localized solutions supported by (\ref{FHNd}) intensively \cite{Martina,vH_EXIS,Kajiwara, Nishiura5,vH_STAB,vH_FRONT,vH_SPOT,vH_SPOT_ST}. 
For instance, Doelman et al. \cite{vH_EXIS, vH_STAB} determined under what conditions on the 
system parameters the model supports stable stationary pulse solutions. The authors adopted geometric singular perturbation theory (GSPT) with a Melnkov-type integral and an Evans function approaches to explicitly derive the existence and stability conditions for stationary $1$-pulse and $2$-pulse solutions. 

In \cite{vH_ACTION1}, we reconsidered the same problem and developed a methodology based on the variational formulation of the problem. This methodology consists of two parts: construction of GSPT solutions with the undermined pulse width and 
computation of an action functional associated with the GSPT solution profile. 
The pulse width will be determined by the extrema of this action functional and stable solutions will be minimizers. 
The action functional was originally used in a series of papers by Chen et al. \cite{CH1,CH2,CH3,CH4} to study the two-component FitzHugh--Nagumo model away from its singular limit with the activator $U$ strongly coupled with inhibitor in the $U$-equations. By investigating the extrema in the variational structure, the authors proved  the existence and stability of front and pulse solutions. It is worth noting that 
recently this case has been rigorously studied as well by Chen and Choi \cite{CH6}, 
and numerical studies on the stable traveling pulse solutions were given by Choi 
and Connors \cite{CH7}. Chen et al also considered the weak coupling case and 
derived the explicit conditions for existence and uniqueness of traveling pulse solutions  in the two-component model \cite{CH5}. 

 In our previous studies \cite{vH_ACTION1,vH_ACTION2}, we assumed that 
 $\tau$ and $\theta$ were $\mathcal{O}(1)$ with respect to $\eps$, and 
 dealt with the existence and stability of only the stationary pulse solutions (since traveling pulse solutions necessarily have $\tau$ and/or $\theta$ 
 of $\mathcal{O}(1/\eps^2)$  \cite{vH_EXIS, vH_STAB}). 
 In this paper, we are going to extend the methodology to this parameter 
 regime where the time constants $\tau$ and $\theta$ are 
 set to $\mathcal{O}(1/\eps^2)$. In this setting, the stationary pulse solutions potentially bifurcate to traveling pulse solutions or 
 breather solutions \cite{vH_EXIS, vH_STAB} and, to complicate the analysis, the essential spectrum is asymptotically close to the origin and additional eigenvalues pop out of the essential spectrum \cite{Martina, Martina2, vH_STAB}. See also Remark~\ref{R:0}. Here, we focus on the existence and stability of traveling $1$-pulse and $2$-pulse solutions. Similar results for traveling pulse solutions in a two-component system were  given in \cite{CH2,CH5}. In those papers, the mono-stable case with a different asymptotic scaling was treated first \cite{CH2} and the authors later extended their analysis to the bistable system \cite{CH5}, see also Remark~\ref{R:1}. 

We introduce the atypical co-moving frame $z := c(x-\eps^2 ct)$, originally proposed in \cite{Heinze}, 
to study traveling $1$-pulse solutions $\bar{Z}_{p}$ and traveling $2$-pulse solutions $\bar{Z}_{2p}$.
See Fig.~\ref{fig01_NEW} for an example of a traveling $1$-pulse and $2$-pulse solution. That is, a traveling pulse solution $\bar{Z}_p$ or $\bar{Z}_{2p}$, 
 represented by $(\bar{U},\bar{V},\bar{W})(z)$ with a wave speed 
 $\eps^2 c$, solves 
\begin{eqnarray}
 \label{comoving}
 \left\{
\begin{array}{rl}
   - \eps^2 c^2 \bar{U}_z & = \eps^2 c^2 \bar{U}_{zz} + \bar{U} - \bar{U}^3 - \eps(\al \bar{V} + \be \bar{W} + \ga) \,,\\  
   -  c^2 \hat{\tau} \bar{V}_z  & = c^2 \bar{V}_{zz} + \bar{U} - \bar{V} \,,\\  
   -  c^2 \hat{\theta} \bar{W}_z  & = D^2 c^2 \bar{W}_{zz} + \bar{U} - \bar{W} \,,
\end{array}
\right.
\end{eqnarray}
where $(\hat{\tau}, \hat{\theta}) := (\eps^2 \tau,\eps^2\theta)$ such that $\hat{\tau}$ and $\hat{\theta}$ are now $\mathcal{O}(1)$.
The second and third linear equations for the $V$- and $W$-components 
satisfy $\bar{V} = \mathcal{L}_{1c} \bar{U}$ and $\bar{W} = \mathcal{L}_{2c} \bar{U}$, 
with the operators 
\begin{eqnarray} 
\label{OPER}
\mathcal{L}_{1c} := \left(- c^2 \frac{d^2}{d z^2} - \hat{\tau} c^2 \frac{d}{d z} + 1 \right)^{-1}\,, \quad 
\mathcal{L}_{2c} :=\left(- D^2 c^2 \frac{d^2}{d z^2} -\hat{\theta} c^2 \frac{d}{d z} + 1 \right)^{-1} .\end{eqnarray} 
In this article, we set the rescaled time constants $(\hat{\tau}, \hat{\theta})$ to $(1, D^2)$
to be able to apply the variational formulation with an action functional.  
 In particular, in this setting the operators $\mathcal{L}_{1c}$ and $ \mathcal{L}_{2c}$ become self-adjoint operators, 
 i.e., $\langle v, \mathcal{L}_{1,2c} w \rangle_{L_{ex}^2} = \langle \mathcal{L}_{1,2c} v, w \rangle_{L_{ex}^2}$ for any $v, w$ in the weighted Hilbert space $L_{ex}^2$, 
corresponding to the inner product $\displaystyle{ \langle v, w \rangle_{L_{ex}^2}
= \int_{\mathbb{R}} e^x\,v \,w \,dx  }$ \cite{CH2,Heinze}. 
\begin{figure}[ht!]
 \centering
 \includegraphics[width=12cm]{./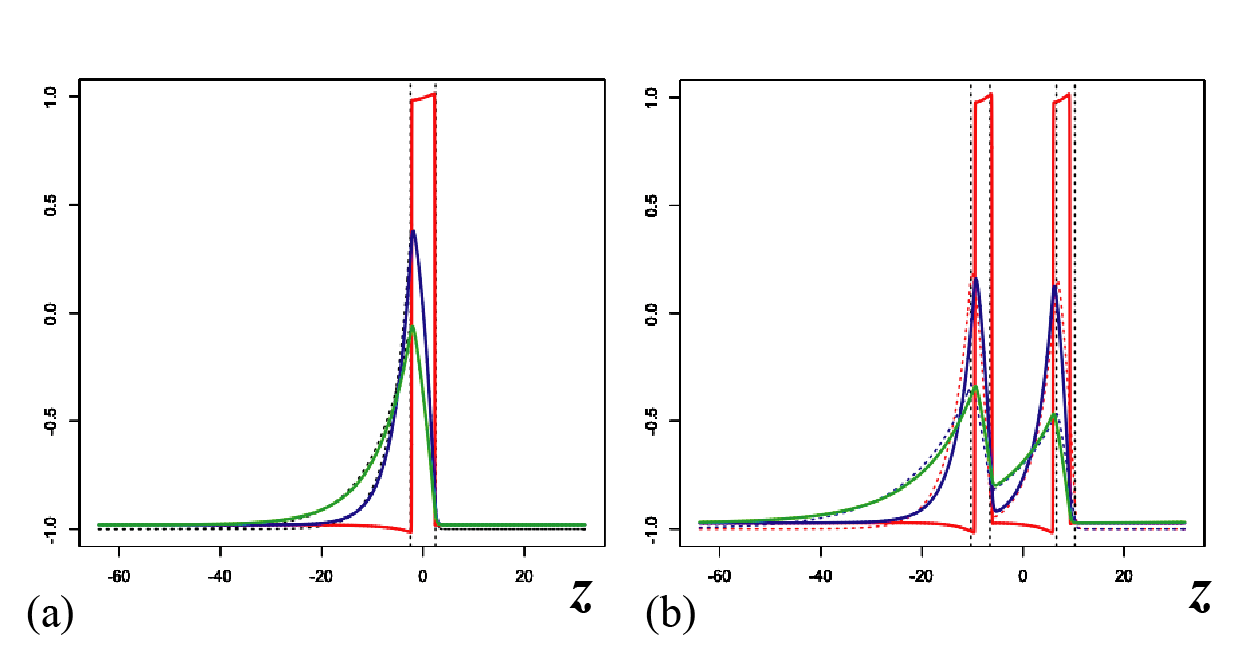}
  \caption{Right-going traveling $1$-pulse and $2$-pulse solutions. 
    Solid curves indicate the traveling pulse profiles obtained by numerical continuation of the original model (\ref{FHNd}), see Appendix~\ref{SS:NUM} for more details on the numerical continuation. Red is the $U$-profile, blue the  $V$-profile and green the $W$-profile.
Dotted curves indicate the leading order solution profiles 
as derived in this article, see Theorem~\ref{TH:1P} and Theorem~\ref{TH:2P}. 
    The system parameters used in (a) are $(\al, \be, \ga,  D^2, \eps) = (2, 1, 1, 2, 0.02)$ and the pulse width and propagating speed are $(2 z^*/c, \eps^2 c) = 
    (4.9592, 1.4175 \times 10^{-3})$ with $(z^*, c) = (8.78696, 3.54371)$.
     The system parameters used in (b) are $(\al, \be, \ga,  D^2, \eps) = (4, -1, 0.8, 3, 0.025)$ and the pulse characteristics and propagating speed are $(2y_1/c, 2y_2/c, 2y_3/c, \eps^2 c) = (3.7585, 13.1692, 3.6680, 2.3567 \times 10^{-3})$ with $(y_1, y_2, y_3, c) = (7.08604, 24.8289, 6.91551, 3.77071)$. Here, $2y_1/c$ and $2y_3/c$ are related to the width of the two pulses, while $2y_2/c$ is related to the distance between the two pulses.
}
  \label{fig01_NEW}
\end{figure}

The main results of this article related to the existence of traveling $1$-pulse and $2$-pulse solutions are stated in Theorem~\ref{TH:1P} and Theorem~\ref{TH:2P} below.
\begin{theorem}
\label{TH:1P}
Let $(\hat{\tau}, \hat{\theta}) = (\varepsilon^2 \tau, \varepsilon^2 \theta) = (1, D^2)$ and let $(\alpha, \beta, \gamma, D)$ be such that 
\begin{eqnarray}
\label{existence_criterion_one}
\left\{
\begin{array}{l}
\displaystyle{\alpha \bar{V}(z^*) +  \beta \bar{W}(z^*) + \gamma + \frac{\sqrt{2}}{3} c= 0} \,, \\
\displaystyle{\alpha \bar{V}(-z^*) +  \beta \bar{W}(-z^*) + \gamma - \frac{\sqrt{2}}{3}c = 0 }
\end{array}
\right.
\end{eqnarray}
has positive solutions $z^*$ and $c$. Then, 
for small $\varepsilon$ enough, (\ref{FHNd}) supports a traveling $1$-pulse solution $\bar{Z}_{p} = (\bar{U}, \bar{V}, \bar{W})$ that travels with propagating speed $\varepsilon^2 c$ and has leading order width $2z^*/c$, and which 
goes asymptotically to $(U_b,U_b,U_b)$ as $x \to \pm \infty$. Here, $U_b$ is the most negative root of the cubic 
equation $u^3 - u + \varepsilon ( (\alpha + \beta) u + \gamma) = 0$ and
the scaled values of the $V$-component at $ \pm z^*$ are, to leading order, given by
\begin{eqnarray} 
\label{VV}
\displaystyle{ \bar{V}(\pm z^*) =  \frac{1}{\phi_v} \left( \mp1 \pm (1\mp\phi_v) e^{\mp(1\pm\phi_v) z^*}  \right) }
\end{eqnarray}
where $\displaystyle{\phi_v = \sqrt{1 + \frac{4}{c^2}}}$. 
The values of the $W$-component at $ \pm z^*$ are as (\ref{VV}) but with $\phi_v$ replaced by $\displaystyle{\phi_w = \sqrt{1 + \frac{4}{c^2D^2}}}$. 

In addition, a saddle node bifurcation occurs on the solution branch of 
a traveling $1$-pulse solution $\bar{Z}_{p}$ at 
\begin{align}
\label{SN}
\begin{aligned}
0 & =  \frac{2 \sqrt{2}}{3} \left(e^{z^*} + e^{-z^*}  \right) - \frac{8 \alpha}{c^3 \phi_v^3} \left(  e^{z^*} + e^{-z^*} - 2 e^{-\phi_v z^*} - 2 \phi_v z^* e^{- \phi_v z^*} \right)   \\ & 
\quad
- \frac{8 \beta}{c^3 D^2 \phi_w^3} \left(  e^{z^*} + e^{-z^*} - 2 e^{-\phi_w z^*} - 2 \phi_w z^* e^{- \phi_w z^*} \right). 
\end{aligned}
\end{align}
\end{theorem}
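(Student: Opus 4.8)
Let me fix notation: write $F_1:=\alpha\bar V(z^*)+\beta\bar W(z^*)+\gamma+\tfrac{\sqrt2}{3}c$ and $F_2:=\alpha\bar V(-z^*)+\beta\bar W(-z^*)+\gamma-\tfrac{\sqrt2}{3}c$, so (\ref{existence_criterion_one}) is $F_1=F_2=0$. The plan is to combine the geometric singular perturbation construction of the singular $1$-pulse profile (carried out in the preceding sections) with solvability (Melnikov) conditions at the two interfaces, and then to read off the saddle-node condition (\ref{SN}) as the locus where this existence system degenerates.

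First I would recall the singular profile $\bar Z_p(\cdot\,;z^*,c)=(\bar U,\bar V,\bar W)$: on the two slow segments $\bar U$ sits $\mathcal{O}(\varepsilon)$-close to the relevant roots of $u^3-u+\varepsilon((\alpha+\beta)u+\gamma)=0$; it makes fast $\pm\tanh$-type transitions in layers of width $\mathcal{O}(\varepsilon)$ at $z=\pm z^*$; and $\bar V=\mathcal L_{1c}\bar U$, $\bar W=\mathcal L_{2c}\bar U$ are the piecewise-exponential solutions of the linear equations in (\ref{comoving}) — self-adjoint on $L^2_{ex}$ precisely because $(\hat\tau,\hat\theta)=(1,D^2)$ — whose interface values are (\ref{VV}) and its $\phi_v\mapsto\phi_w$ analogue. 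The existence conditions (\ref{existence_criterion_one}) are then the Fredholm solvability conditions at next order in the layers: projecting the $\mathcal{O}(\varepsilon)$ fast equation onto the kernel $\bar U_{0,\zeta}$ of the linearized fast operator, the drift term $-\varepsilon c\,\bar U_{0,\zeta}$ contributes $\mp\tfrac{\sqrt2}{3}c$ (via $\int\bar U_{0,\zeta}^2\,d\zeta=\tfrac{2\sqrt2}{3}$ and $\int\bar U_{0,\zeta}\,d\zeta=\pm2$) and the slow fields contribute $\alpha\bar V(\pm z^*)+\beta\bar W(\pm z^*)+\gamma$, which gives exactly $F_1=0$ and $F_2=0$. (Equivalently these are $\partial_{z^*}\mathcal J_{\mathrm{red}}=\partial_c\mathcal J_{\mathrm{red}}=0$ for the reduced action $\mathcal J_{\mathrm{red}}(z^*,c):=\mathcal J[\bar Z_p(\cdot\,;z^*,c);c]$, the route promised by the action-functional method of \cite{vH_ACTION1}, which I would record as a cross-check; the two first derivatives are $(z^*,c)$-dependent invertible linear combinations of $F_1$ and $F_2$.) If $(z^*,c)$ is a positive solution of (\ref{existence_criterion_one}) with $\det\frac{\partial(F_1,F_2)}{\partial(z^*,c)}\neq0$, a Lyapunov--Schmidt/implicit-function argument in the weighted space (as in \cite{vH_EXIS,CH2,CH5}) upgrades $\bar Z_p$ to a genuine traveling $1$-pulse of (\ref{FHNd}) for $\varepsilon$ small, asymptotic to $(U_b,U_b,U_b)$; this settles the existence half.

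For the saddle-node assertion, the idea is that along the solution branch (say parametrized by $\gamma$, with $\alpha,\beta,D$ fixed) a fold occurs precisely where the implicit function theorem for (\ref{existence_criterion_one}) fails, i.e.\ where $\det\frac{\partial(F_1,F_2)}{\partial(z^*,c)}=0$ (with the usual transversality and non-degeneracy conditions understood); equivalently, where the reduced Hessian $\mathrm{Hess}\,\mathcal J_{\mathrm{red}}$ is singular, which at a critical point equals $\det M\cdot\det\frac{\partial(F_1,F_2)}{\partial(z^*,c)}$ for an invertible $M$ and so gives the same condition. So the task is to compute this $2\times2$ determinant from (\ref{VV}). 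The ingredients are $\partial_{z^*}\bar V(\pm z^*)=\tfrac{4}{c^2\phi_v}e^{\mp(1\pm\phi_v)z^*}$ and, using $\phi_v'(c)=-4/(c^3\phi_v)$, the derivatives $\partial_c\bar V(\pm z^*)$, which come out proportional to $\tfrac{4}{c^3\phi_v^3}$ times combinations of $1,\ e^{\mp(1\pm\phi_v)z^*}$ and $z^*e^{\mp(1\pm\phi_v)z^*}$, together with the $\phi_v\mapsto\phi_w,\ c^2\mapsto c^2D^2$ analogues for $\bar W$. Expanding the determinant and regrouping the interface exponentials, the $\alpha^2,\beta^2,\alpha\beta$ terms and the linear terms (the latter carrying the $\tfrac{\sqrt2}{3}c$ contributions) collapse so that
\[
\det\frac{\partial(F_1,F_2)}{\partial(z^*,c)}
=-\frac{2}{c^2}\Big(\frac{\alpha}{\phi_v}e^{-\phi_v z^*}+\frac{\beta}{D^2\phi_w}e^{-\phi_w z^*}\Big)\,\Theta(z^*,c),
\]
where $\Theta(z^*,c)$ is exactly the right-hand side of (\ref{SN}); the key simplification is that every boundary-layer block recombines into $e^{z^*}+e^{-z^*}-2e^{-\phi_v z^*}-2\phi_v z^*e^{-\phi_v z^*}$ (and its $\phi_w$ analogue). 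Since the first factor equals $-\tfrac{e^{z^*}}{2}\partial_{z^*}F_1$ (and $-\tfrac{e^{-z^*}}{2}\partial_{z^*}F_2$), its vanishing is the non-generic case in which the branch is tangent to a level set of $z^*$; excluding that, $\det\frac{\partial(F_1,F_2)}{\partial(z^*,c)}=0$ reduces on the solution branch to $\Theta(z^*,c)=0$, which is (\ref{SN}).

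I expect the real obstacle to be purely computational: carrying out the $\partial_{z^*}$ and $\partial_c$ differentiations of (\ref{VV}) (the $c$-dependence enters both explicitly and through $\phi_v(c),\phi_w(c)$), expanding the $2\times2$ determinant, and — the delicate point — recognizing the factorization above, i.e.\ that all the exponential blocks collapse into the combinations appearing in (\ref{SN}) and that the common factor $\tfrac{\alpha}{\phi_v}e^{-\phi_v z^*}+\tfrac{\beta}{D^2\phi_w}e^{-\phi_w z^*}$ can be divided out. A secondary point needing care is the justification that this leading-order determinant condition genuinely detects a fold of the full (not merely the leading-order) branch, which again follows from the Lyapunov--Schmidt reduction together with non-degeneracy in the transverse directions; optionally one also checks that the action-functional route produces the same condition and that (\ref{SN}) is independent of which external parameter is taken as the continuation parameter.
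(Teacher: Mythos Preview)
Your main line of argument---Fredholm/Melnikov solvability at the two layers for the existence system $F_1=F_2=0$, and then $\det\partial(F_1,F_2)/\partial(z^*,c)=0$ for the fold---is correct and is essentially the paper's \emph{secondary} derivation in \S\ref{SS:HYBD}. The paper's \emph{primary} route is different: it evaluates the reduced action $J_c(\bar Z_p)$ explicitly (see (\ref{Jzero})--(\ref{DJzero})) and obtains the two existence conditions not from two first partials but from the pair $J_c(\bar Z_p)=0$ (a consequence of translation invariance, $J_c(\Xi[a]\bar Z_p)=e^aJ_c(\bar Z_p)$, applied at a critical point) and $\partial_{z^*}J_c(\bar Z_p)=0$; the saddle-node (\ref{SN}) is then simply $\partial_cJ_c(\bar Z_p)=0$, read off directly from (\ref{CJzero}). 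Your factorization is in fact exactly the bridge between the two approaches: since $e^{z^*}\partial_{z^*}F_1=e^{-z^*}\partial_{z^*}F_2=\tfrac{4}{c^2}\big(\tfrac{\alpha}{\phi_v}e^{-\phi_v z^*}+\tfrac{\beta}{D^2\phi_w}e^{-\phi_w z^*}\big)$ and $J_c/\varepsilon=2e^{z^*}F_1-2e^{-z^*}F_2$, one gets $\det\partial(F_1,F_2)/\partial(z^*,c)=-\tfrac{2}{c^2}\big(\tfrac{\alpha}{\phi_v}e^{-\phi_v z^*}+\tfrac{\beta}{D^2\phi_w}e^{-\phi_w z^*}\big)\cdot\tfrac1\varepsilon\partial_cJ_c(\bar Z_p)$, which is your product. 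The paper's route buys (\ref{SN}) in one line from the explicit action; yours makes the implicit-function-theorem origin of the fold transparent.

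One point needs correcting, however: your parenthetical cross-check is based on a misreading of the action-functional mechanism. The existence conditions are \emph{not} ``$\partial_{z^*}\mathcal J_{\mathrm{red}}=\partial_c\mathcal J_{\mathrm{red}}=0$'' with both first partials being invertible combinations of $(F_1,F_2)$. In fact $\partial_cJ_c(\bar Z_p)$ is not a combination of $F_1,F_2$ at all---it equals $2e^{z^*}\partial_cF_1-2e^{-z^*}\partial_cF_2$ and is precisely the saddle-node condition (\ref{SN}), not an existence condition; if it vanished identically on $\{F_1=F_2=0\}$ your own determinant would be zero along the whole branch. The correct pair is $(J_c,\partial_{z^*}J_c)=0$, and these are the invertible combinations: $J_c/\varepsilon=2e^{z^*}F_1-2e^{-z^*}F_2$, $\partial_{z^*}J_c/\varepsilon=2e^{z^*}F_1+2e^{-z^*}F_2$. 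Consequently the ``reduced Hessian'' interpretation you offer for the fold does not go through either; what degenerates is $\partial(J_c,\partial_{z^*}J_c)/\partial(z^*,c)$, whose determinant at a solution equals $8\det\partial(F_1,F_2)/\partial(z^*,c)$, consistent with your main computation. So keep your Jacobian argument, but drop or rewrite the action-functional aside.
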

\begin{theorem}
\label{TH:2P}
Let $(\hat{\tau}, \hat{\theta}) = (\varepsilon^2 \tau, \varepsilon^2 \theta) = (1, D^2)$ and let $(\alpha, \beta, \gamma, D)$ be such that 
\begin{align}
\label{existence_criterion_two}
\alpha \bar{V}_i(y_1,y_2,y_3) + \beta \bar{W}_i(y_1,y_2,y_3) + \ga + (-1)^i\frac{\sqrt{2}}{3} c  & =  0 \,,  \qquad i=1,\ldots,4\,,
\end{align}
has positive solutions $y_1, y_2, y_3$ and $c$. Then, 
for small $\varepsilon$ enough, (\ref{FHNd}) supports a traveling $2$-pulse solution $\bar{Z}_{2p} = (\bar{U}, \bar{V}, \bar{W})$ which goes 
asymptotically to $(U_b,U_b,U_b)$ as $x \to \pm \infty$, 
that travels with propagating speed $\varepsilon^2 c$, has leading order widths $2y_1/c$ and $2y_3/c$ and the distance between the two pulses are to leading order $2y_2/c$.
Here, $\bar{V}_i$ for $i = 1, \ldots, 4$, are the scaled values of the $V$-component at the four interfaces $z_i$ (so $2 y_i=z_{i+1}-z_i$), and these are, to leading order, given by
\begin{align} 
\label{VV2}
\begin{aligned}
\bar{V}_1(y_1, y_2, y_3) & =  \frac{1}{\phi_v} - \frac{1+\phi_v}{\phi_v} e^{(1-\phi_v) y_1} \left( 1 - e^{(1-\phi_v) y_2} ( 1 - e^{(1-\phi_v) y_3} ) \right)  \, , \\
\bar{V}_2(y_1, y_2, y_3) & =  \frac{1+\phi_v}{\phi_v} e^{(1-\phi_v) y_2} \left( 1 - e^{(1-\phi_v) y_3} \right)  + \frac{1-\phi_v}{\phi_v} e^{-(1+\phi_v) y_1} - \frac{1}{\phi_v}  \, , \\
\bar{V}_3(y_1, y_2, y_3) & = \frac{1}{\phi_v} - \frac{1+\phi_v}{\phi_v} e^{(1-\phi_v) y_3} - \frac{1-\phi_v}{\phi_v} e^{-(1+\phi_v) y_2}  \left( 1 - e^{-(1+\phi_v) y_1} \right)  \, ,  \\
\bar{V}_4(y_1, y_2, y_3) & = \frac{1- \phi_v}{\phi_v} e^{-(1+\phi_v) y_3} \left( 1 - e^{-(1+\phi_v) y_2}(1 - e^{-(1+\phi_v) y_1} ) \right)  - \frac{1}{\phi_v}   \,, 
\end{aligned}
\end{align}
where, again, $\displaystyle{\phi_v = \sqrt{1 + \frac{4}{c^2}}}$ and  
the values of the $W$-component at the interfaces are as (\ref{VV2}) but with $\phi_v$ replaced by $\displaystyle{\phi_w = \sqrt{1 + \frac{4}{c^2D^2}}}$.

In addition, a saddle node bifurcation occurs on the solution branch of traveling $2$-pulse solutions $\bar{Z}_{2p}$ at 
\begin{align}
\label{SN2}
\begin{aligned}
0&=- \frac{8 \alpha}{c^3 \phi_v^3} f(z_1, z_2, z_3, z_4; \phi_v) 
- \frac{8 \beta}{c^3 D^2 \phi_w^3} f(z_1, z_2, z_3, z_4; \phi_w)  \\
&\quad
+ \frac{2 \sqrt{2}}{3} (e^{z_1} + e^{z_2} + e^{z_3} + e^{z_4})\,, 
\end{aligned}
\end{align}
where 
\begin{align}
\label{func4}
\begin{aligned}
f(z_1, z_2, z_3, z_4; \phi) 
& =   
e^{z_1} \left( 1 - (1 + \phi y_1) e^{(1-\phi) y_1} + (1+\phi(y_1 + y_2)) e^{(1-\phi) (y_1+y_2)} 
\right.  \\ 
& \quad  \left.  - (1 + \phi(y_1 + y_2 + y_3)) e^{(1-\phi) (y_1 + y_2 + y_3) }  \right) 
  \\ 
& \quad
+ 
e^{z_2} \left( 1 - (1+\phi y_2) e^{(1-\phi) y_2}   + (1 + \phi (y_2 + y_3) ) e^{(1-\phi) (y_2+y_3)}  \right.
 \\ 
& \quad \left.
- (1 + \phi y_1 ) e^{-(1+\phi) y_1 }  \right)   + e^{z_3} \left( 1 - (1 + \phi y_3 ) e^{(1-\phi) y_3}  
\right. \\ & \quad \left.
+ (1 + \phi(y_1 + y_2)) e^{-(1+\phi) (y_1+y_2)} - (1+\phi y_2) e^{-(1+\phi) y_2}  \right) 
\\ & \quad
+ e^{z_4} \left( 1 - (1 + \phi(y_1 + y_2 + y_3)) e^{-(1+\phi) (y_1+y_2+y_3)}
\right.  \\  
& \quad \left. + (1 + \phi (y_2 + y_3)) e^{-(1+\phi) (y_2+y_3)} - (1 + \phi y _3) e^{-(1+\phi) y_3 }  \right).   
\end{aligned} 
\end{align}
\end{theorem}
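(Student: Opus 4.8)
The plan is to repeat, for a four‑interface configuration, the argument that establishes Theorem~\ref{TH:1P}. First I would build the leading‑order singular $2$‑pulse profile $\bar Z_{2p}(z;y_1,y_2,y_3)$ with the three geometric quantities $y_1,y_2,y_3$ (the two half‑widths and the half‑gap, the overall position of the pattern being fixed by a normalization) and the speed $c$ left undetermined: to leading order $\bar U$ is a concatenation of four Allen--Cahn heteroclinics joining the stable roots $\pm 1$ of $u-u^3$ across interfaces $z_1<z_2<z_3<z_4$, and on the four slow plateaus $\bar U$ is $\O(\eps)$‑close to one of these roots (or to $U_b$ on the outer plateaus). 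Because $(\hat\tau,\hat\theta)=(1,D^2)$ makes $\mathcal{L}_{1c},\mathcal{L}_{2c}$ self‑adjoint on $L_{ex}^2$, the slow components are $\bar V=\mathcal{L}_{1c}\bar U$, $\bar W=\mathcal{L}_{2c}\bar U$; solving these linear problems with the explicit Green's function of $-c^2\partial_{zz}-c^2\partial_z+1$ (characteristic roots $(-1\pm\phi_v)/2$, respectively $(-1\pm\phi_w)/2$) and evaluating at the four interfaces yields exactly the expressions (\ref{VV2}) for $\bar V_i$ and their $\phi_w$‑analogues for $\bar W_i$. This is a bookkeeping‑heavy but routine extension of the $1$‑pulse computation behind (\ref{VV}), carried out under the same genericity assumptions on $(\al,\be,\ga,D)$ (in particular that the cubic in the statement has three real roots).

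Second, I would evaluate the action functional associated with (\ref{comoving}) — the functional whose Euler--Lagrange equation in the weighted space is (\ref{comoving}), which exists precisely because of the self‑adjointness above — on this profile, producing a reduced function $\mathbb{J}(y_1,y_2,y_3,c)$ of four real variables. Its $\eps$‑expansion is governed by the four interface energies, each contributing the Allen--Cahn constant $\tfrac{2\sqrt 2}{3}$ weighted by the exponential weight evaluated at $z_i$, together with the $\O(\eps)$ nonlocal contributions $\langle\bar U,\mathcal{L}_{1c}\bar U\rangle$, $\langle\bar U,\mathcal{L}_{2c}\bar U\rangle$ and the $\ga\bar U$ term. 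Leading‑order traveling $2$‑pulses correspond to critical points of $\mathbb{J}$, and stationarity of $\mathbb{J}$ with respect to the interface positions and to the speed amounts to a scalar solvability (jump) condition at each of the four interfaces; dividing out the nonvanishing weight factors, these are precisely the four equations (\ref{existence_criterion_two}), with the alternating term $(-1)^i\tfrac{\sqrt 2}{3}c$ arising as the contribution of the advective term $-\eps^2c^2\bar U_z$ to the $i$‑th solvability condition, its sign flipping between up‑ and down‑jumps. A standard persistence argument — Lyapunov--Schmidt reduction combined with Fenichel theory and exponential dichotomies in $L_{ex}^2$ — then promotes any nondegenerate solution of (\ref{existence_criterion_two}) to a genuine traveling $2$‑pulse of (\ref{FHNd}) for $\eps$ small enough, with the stated widths, gap and speed.

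Third, for the saddle‑node statement I would identify the fold as the failure of the implicit function theorem for the map $\mathcal{F}\colon(y_1,y_2,y_3,c)\mapsto(\text{left-hand sides of }(\ref{existence_criterion_two}))$, i.e.\ $\det D\mathcal{F}=0$, equivalently the degeneracy of the Hessian of $\mathbb{J}$ along the reduced solution branch. Differentiating (\ref{VV2}) and its $\phi_w$‑counterpart with respect to $y_1,y_2,y_3$ and with respect to $c$ — using $\partial_c\phi_v=-4/(c^3\phi_v)$ and $\partial_c\phi_w=-4/(c^3D^2\phi_w)$ — assembling the $4\times 4$ Jacobian and expanding its determinant should, after the cancellations forced by the structure of the Green's function, collapse to the single condition (\ref{SN2}) with $f$ as in (\ref{func4}): the pure exponentials $e^{z_i}$ come from the weight/speed derivatives and the $\al/(c^3\phi_v^3)$ and $\be/(c^3D^2\phi_w^3)$ blocks from the $\partial_c\phi_v$ and $\partial_c\phi_w$ terms. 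As a check, (\ref{SN2}) can be rederived via the singular‑limit approach used for (\ref{SN}), and the necessary condition for existence mentioned in the introduction follows by examining the admissible signs in (\ref{existence_criterion_two}).

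The main obstacle is this last step. In contrast with the $1$‑pulse case, where the relevant Jacobian is $2\times 2$, here one must expand a $4\times 4$ determinant whose entries are sums of several exponentials in $y_1,y_2,y_3$ carrying both the $\phi_v$‑ and $\phi_w$‑modes and their growing/decaying partners induced by the weight, and showing that it reduces to the comparatively compact combination (\ref{SN2})--(\ref{func4}) requires carefully organized algebra rather than brute force. A secondary difficulty is the persistence analysis at the fold itself, where the Lyapunov--Schmidt reduction is no longer invertible: there one must verify the appropriate cubic nondegeneracy condition on $\mathbb{J}$ to conclude that the branch genuinely folds, which I would handle by the standard normal‑form computation on the reduced equation.
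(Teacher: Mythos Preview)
Your overall strategy mirrors the paper's --- build the singular $2$-pulse profile, evaluate the action functional on it, and read off existence and fold conditions from the reduced problem --- but there is a conceptual gap in how you treat the speed $c$, and this gap is precisely what makes your saddle-node computation look so forbidding.

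In the action-functional framework, traveling waves are \emph{not} critical points of the reduced function $\mathbb{J}(y_1,y_2,y_3,c)$ with respect to $c$. For each fixed $c$, a traveling wave is a critical point of $J_c$ in the function space, and the translation identity $J_c(\Xi[a]\bar Z)=e^{a}J_c(\bar Z)$ forces $J_c(\bar Z_{2p})=0$ at any genuine solution. Hence the four scalar equations determining $(y_1,y_2,y_3,c)$ are
\[
J_c(\bar Z_{2p})=0,\qquad \partial_{y_i}J_c(\bar Z_{2p})=0\quad(i=1,2,3),
\]
and it is \emph{these} four relations --- not $\partial_c J_c=0$ --- that the paper recombines into the four interface conditions~(\ref{existence_criterion_two}). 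Your claim that ``stationarity of $\mathbb J$ with respect to \dots\ the speed'' yields one of the interface conditions is not correct: $\partial_c J_c$ is exactly the expression~(\ref{SN2}), which is visibly not of the form $\alpha\bar V_i+\beta\bar W_i+\gamma\pm\tfrac{\sqrt2}{3}c$.

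Once this is fixed, your ``main obstacle'' evaporates. Because $\partial_c J_c$ is \emph{not} part of the existence system, the Jacobian of $(J_c,\partial_{y_1}J_c,\partial_{y_2}J_c,\partial_{y_3}J_c)$ with respect to $(y_1,y_2,y_3,c)$ has first row $(0,0,0,\partial_c J_c)$ on the solution branch, and its determinant factors as $-\partial_c J_c\cdot\det(\mathrm{Hess}_{y}J_c)$. The paper therefore obtains~(\ref{SN2}) by a single differentiation of the explicit formula~(\ref{J2zero}) with respect to $c$: no $4\times4$ determinant, no miraculous collapse. Your proposed route via $\det D\mathcal F$ would at best recover this product and then still require a separate argument that $\mathrm{Hess}_{y}J_c$ is nonsingular at the fold in order to isolate~(\ref{SN2}).
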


We remark that the existence condition (\ref{existence_criterion_two}) for a traveling $2$-pulse solution encompasses the existence condition (\ref{existence_criterion_one}) for a traveling $1$-pulse solution. 
That is, upon taking the limit of $y_2 \to+ \infty$ in (\ref{existence_criterion_two}), i.e., upon letting the distance between the two pulses of the traveling $2$-pulse solution go to infinity, (\ref{existence_criterion_two}) separates into twice
(\ref{existence_criterion_one}). Once for $y_1$ and once for $y_3$ and both values approach $z^*$. 
This means that the coexistence of both traveling $1$-pulse and $2$-pulse solutions is guaranteed as discussed in \S\ref{SS:EXIST2}, see Fig.~\ref{fig06}. Numerical counterparts are shown in Fig.~\ref{fig07} and Fig.~\ref{fig08} in the final section. 
Another direct consequence of the existence condition (\ref{existence_criterion_two}) is the following.
\begin{lemma} \label{L:2P}
A necessary condition for the existence of a traveling $2$-pulse solution in (\ref{FHNd}) with $(\hat{\tau}, \hat{\theta}) = (\varepsilon^2 \tau, \varepsilon^2 \theta) = (1, D^2)$ is $\al\be<0$.
\end{lemma}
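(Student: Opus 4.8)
The plan is to eliminate $\gamma$ and $\varepsilon^2 c$ from the four relations in (\ref{existence_criterion_two}) so as to obtain homogeneous equations in $(\alpha,\beta)$, and then read off $\operatorname{sign}(\alpha\beta)$ from the signs of the coefficients. Since the $\gamma$-term in (\ref{existence_criterion_two}) does not depend on $i$, while the $c$-term equals $-\tfrac{\sqrt2}{3}c$ for $i=1,3$ and $+\tfrac{\sqrt2}{3}c$ for $i=2,4$, subtracting the $i=3$ relation from the $i=1$ relation and the $i=4$ from the $i=2$ cancels both $\gamma$ and $c$ and gives
\begin{align}
\label{eq:hompair}
\begin{aligned}
\alpha\,(\bar V_1-\bar V_3)+\beta\,(\bar W_1-\bar W_3)&=0,\\[1mm]
\alpha\,(\bar V_2-\bar V_4)+\beta\,(\bar W_2-\bar W_4)&=0 .
\end{aligned}
\end{align}
If a traveling $2$-pulse solution exists then $(\alpha,\beta)\neq(0,0)$, and in fact neither $\alpha$ nor $\beta$ can vanish: if $\alpha=0$ (resp. $\beta=0$) the $V$-component (resp. $W$-component) decouples from the $U$-equation and (\ref{FHNd}) reduces to an effectively two-component FitzHugh--Nagumo model, which supports no traveling $2$-pulse solutions (cf. the introduction and \cite{CH5}); for the same reason we may take $D\neq1$, so $\phi_v\neq\phi_w$. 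Hence $(\alpha,\beta)$ spans the kernel of the coefficient matrix of (\ref{eq:hompair}), and the first relation yields $\alpha(\bar V_1-\bar V_3)=-\beta(\bar W_1-\bar W_3)$, i.e. $(\alpha,\beta)$ is parallel to $\bigl(\bar W_1-\bar W_3,\ -(\bar V_1-\bar V_3)\bigr)$, so
\[
\operatorname{sign}(\alpha\beta)=-\operatorname{sign}\bigl((\bar V_1-\bar V_3)(\bar W_1-\bar W_3)\bigr).
\]
Therefore the lemma is equivalent to: $\bar V_1-\bar V_3$ and $\bar W_1-\bar W_3$ are nonzero and of the same sign (the companion relation in (\ref{eq:hompair}) gives the analogous statement for $\bar V_2-\bar V_4$ and $\bar W_2-\bar W_4$).

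The key structural observation for the next step is that, by (\ref{VV2}) and the fact that $\bar W_i$ is obtained from $\bar V_i$ through the substitution $\phi_v\mapsto\phi_w$, the difference $\bar V_1-\bar V_3$ equals one explicit function $g(\phi;y_1,y_2,y_3)$ evaluated at $\phi=\phi_v$, while $\bar W_1-\bar W_3=g(\phi_w;y_1,y_2,y_3)$, with $\phi_v,\phi_w\in(1,\infty)$. It therefore suffices to show that, for the interface data $(y_1,y_2,y_3)$ arising from solutions of (\ref{existence_criterion_two}), $g(\,\cdot\,;y_1,y_2,y_3)$ does not change sign on $(1,\infty)$ (and similarly $\tilde g(\phi):=\bar V_2-\bar V_4$). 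Inserting (\ref{VV2}), $g$ is a finite combination of exponentials $e^{(1-\phi)(\cdot)}$, decaying in the interface lengths at rate $\phi-1$, and $e^{-(1+\phi)(\cdot)}$, decaying at the strictly larger rate $\phi+1$; the idea is to exploit this separation of decay rates to pin down the sign.

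The main obstacle is precisely this last sign estimate: $\bar V_1-\bar V_3$ is a sum of exponential terms of opposite sign, and for an arbitrary triple $(y_1,y_2,y_3)$ it need not be single-signed in $\phi$, so the estimate must use that $(y_1,y_2,y_3)$ actually solves (\ref{existence_criterion_two}). I therefore expect the relation of (\ref{eq:hompair}) from $i=1,3$ to be insufficient on its own and to need to be combined with the companion relation and with the positivity $\alpha(\bar V_1-\bar V_2)+\beta(\bar W_1-\bar W_2)=\tfrac{2\sqrt2}{3}c>0$ (obtained by subtracting the $i=2$ from the $i=1$ relation of (\ref{existence_criterion_two})). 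Should a direct estimate prove intractable, a fallback is a continuation argument starting from the well-separated limit $y_2\to+\infty$, in which $y_1,y_3\to z^*$ and $\bar V_1-\bar V_3,\ \bar W_1-\bar W_3\to0^+$ by the remark preceding the lemma, so that the required sign holds there and one would then propagate it along the connected branch of traveling $2$-pulse solutions.
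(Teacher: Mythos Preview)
Your overall strategy --- eliminate $\gamma$ and $c$ from (\ref{existence_criterion_two}) to obtain a homogeneous relation in $(\alpha,\beta)$, then argue that the $V$- and $W$-coefficients necessarily share a sign because they are the same function of $\phi$ evaluated at $\phi_v$ and $\phi_w$ --- is precisely the paper's. But your argument is incomplete: you correctly flag the sign estimate on $\bar V_1-\bar V_3$ as ``the main obstacle'', do not resolve it, and your proposed continuation fallback is only a heuristic. So as it stands this is a plan, not a proof.

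The paper sidesteps your obstacle by taking the \emph{sum} of your two homogeneous relations, i.e.\ it works with the single equation
\[
\alpha(\bar V_1+\bar V_2-\bar V_3-\bar V_4)+\beta(\bar W_1+\bar W_2-\bar W_3-\bar W_4)=0,
\]
rather than with $\bar V_1-\bar V_3$ and $\bar V_2-\bar V_4$ separately. Introducing $g_1(y,\phi)=e^{-\phi y}(\cosh y+\phi\sinh y)$ and $g_2(y,\phi)=e^{-\phi y}(\sinh y+\phi\cosh y)$, the $V$-coefficient becomes
\[
\frac{2}{\phi}\Bigl(g_1(y_2)-g_1(y_1{+}y_2{+}y_3)+g_2(y_1{+}y_2)-g_2(y_1)-g_2(y_2{+}y_3)+g_2(y_3)\Bigr),
\]
and the claim is that its sign is determined by $\operatorname{sign}(y_1-y_3)$ alone, independently of $\phi>1$, via the monotonicity of $g_1$ and the convexity of $g_2$ in $y$. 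Since the $W$-coefficient is the same expression with $\phi_w$ in place of $\phi_v$, both coefficients share a sign, forcing $\alpha\beta<0$. The point is that the summed combination is nearly symmetric under $y_1\leftrightarrow y_3$, which is what makes a $\phi$-free sign argument via convexity possible; your individual differences $\bar V_1-\bar V_3$ lack this structure, which is why you got stuck.
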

Note that this is also a necessary condition for the existence of a stationary $2$-pulse solution \cite{vH_EXIS}.
 
This article is organized as follows, in \S\ref{SS:PROF} we derive the profile of a traveling $1$-pulse solution 
with undetermined width and propagating speed. In \S\ref{SS:ACTION}, we compute the associate action functional 
for this profile and we derive the conditions for existence and saddle node bifurcation as stated in Theorem~\ref{TH:1P}. Observe that the existence result (\ref{existence_criterion_one}) has previously been derived in \cite{vH_EXIS}, while the 
results for the 
saddle node bifurcation is new. In \S\ref{SS:HYBD}, we derive the same conditions by using a different analytical approach utilizing the singular limit \cite[e.g.]{JJIAM,PhysD}. In \S\ref{SS:APPL}, we apply the methodology of the action functional to the problem for traveling $2$-pulse solution and derive the new results as stated in Theorem~\ref{TH:2P}. Furthermore, we deduce the necessary condition of Lemma~\ref{L:2P} by studying the existence condition (\ref{existence_criterion_two}).
We end the article with a summary and a discussion related to the collision of traveling pulse solutions and to Hopf bifurcations near the saddle node bifurcation, see \S\ref{SS:COL}. See also Remark~\ref{R:0}. 

\begin{remark}\label{R:0}
In \cite{vH_ACTION1}, we used the same methodology to study the existence and the stability of stationary pulse solutions for $\tau$ and $\theta$ of $\mathcal{O}(1)$. Here, we extend this methodology to the current setting of $\tau$ and $\theta$ large ($\mathcal{O}(1/\eps^2)$). However, in the current setting we cannot infer any stability results from the minimizers of the action functional since additional small eigenvalues pop out of the essential spectrum upon increasing $\tau$ and/or $\theta$ from $\mathcal{O}(1)$ to $\mathcal{O}(1/\eps^2)$ \cite{Martina, Martina2, vH_STAB} and these small eigenvalues are not tracked by the action functional approach. These additional small eigenvalues can of course destabilize the traveling pulse solutions, see for instance Fig.~\ref{fig08} in \S\ref{SS:COL}.
\end{remark}

\begin{remark} \label{R:1}
In \cite{CH5}, Chen et al. studied a geometric variational functional for a 
two-component FitzHugh--Nagumo model in which the activator $U$ is  
weakly coupled with inhibitor in the $U$-equations similar to the present paper with bistable case. In this article, we clearly highlight the importance and added complexity of having a third $W$-component in (\ref{FHNd}). However, our analysis for the three-component system can formally be reduced to cover the existence conditions for traveling pulse solutions for the corresponding two-component FitzHugh--Nagumo model 
\begin{align}
 \label{FHN2}
 \left\{
\begin{aligned}
    U_t & = \eps^2 U_{xx} + U - U^3 - \eps(\al V + \ga) \,,\\  
        \tau V_t & = V_{xx} + U - V. 
\end{aligned}
\right.
\end{align}
\end{remark}

\section{Traveling $1$-pulse solutions} \label{S2}
In this section, we prove the main result of this article related to the existence and saddle node bifurcation of traveling $1$-pulse solutions as stated in Theorem~\ref{TH:1P} and as shown in Fig.~\ref{fig01_NEW}.   
\subsection{The profile of a traveling $1$-pulse solution}
\label{SS:PROF}
Without loss of generality, we only consider right-going traveling $1$-pulse solutions, i.e., $c>0$, and
 we follow \cite{vH_EXIS, vH_ACTION1} to first determine the leading order profile of a right-going traveling $1$-pulse solution (with unknown width and speed). Note that we only show the crucial steps of this derivation and we refer to \cite{vH_EXIS, vH_ACTION1} for more details regarding the methodology. 

\begin{figure}
 \centering
 \includegraphics[width=7cm]{./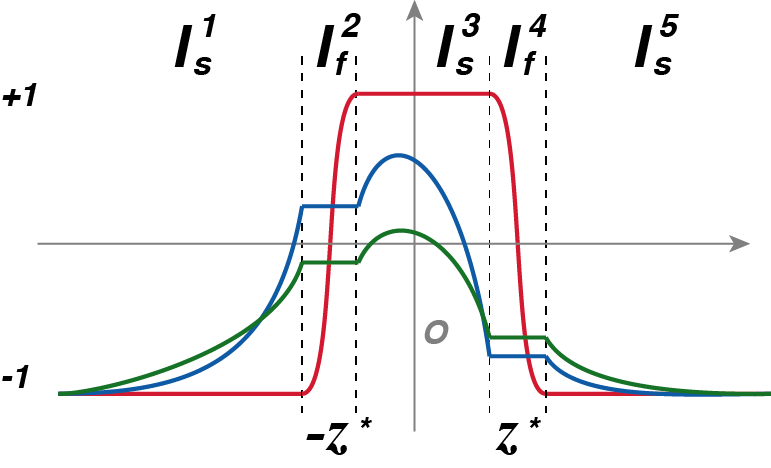}
  \caption{
    Schematic picture of the three slow region $I_s^{1,3,5}$ and two fast regions $I_f^{2,4}$
    introduced in (\ref{SLOWFAST})
 and used in this article to study traveling $1$-pulse solutions.
}
  \label{fig01_DOM}
\end{figure}

We divide the spatial domain into three slow regions $I_s^{1,3,5}$ and two fast regions $I_f^{2,4}$ to study (\ref{comoving}):
\begin{eqnarray}
\label{SLOWFAST}
\begin{array}{rl}
&I_s^1:=(-\infty,-z^*-\se]\,, \,\, I_f^2:=(-z^*-\se,-z^*+\se)\,,  \\
& I_s^3:=[-z^*+\se, z^*-\se]\,, \,\,
I_f^4:=(z^*-\se,z^*+\se)\,, \\
& I_s^{5}:=[z^*+\se, \infty)\,,  
\end{array}
\end{eqnarray}  
where $\pm z^*$ are the locations of the interfaces of the traveling pulse solution, 
that is, $\bar{U}(\pm z^*)=0$, see Fig.~\ref{fig01_DOM} and \cite{vH_EXIS, vH_ACTION1, vH_STAB}. 
We say that the width of the traveling pulse is given by $h := 2 z^*$. 
Rescaling by $\xi = (z \pm z^*)/ \eps$ depending on which fast region we are focusing, 
the first equation in (\ref{comoving}) for the $U$-component becomes 
\begin{eqnarray}
- \eps c^2 \bar{U}_\xi  & = c^2 \bar{U}_{\xi \xi} + \bar{U} - \bar{U}^3 - \eps (\alpha \bar{V} + \be \bar{W} + \ga)  \,. 
\label{comovingU}
\end{eqnarray} 
Upon using a regular expansion in $\eps$, $\bar{Z}_{p}(\xi) = \bar{Z}_0(\xi) + \O(\eps)$, we can, to leading order, analytically solve the above equation \cite[e.g.]{vH_EXIS}. From this we determine the leading order profile of $\bar{U}$: 
\begin{eqnarray}
\label{Uzero}
&\bar{U}_{0} (z)= \left\{
\begin{array}{rl}
 -1\,, \qquad& z\in I_s^1\,,\\
\displaystyle{ \tanh{\left(\frac{z+z^*}{\sqrt2 \eps c}\right)} } \,, \qquad& z\in I_f^2\,,\\
 1\,, \qquad& z\in  I_s^3\,,\\
\displaystyle{ -\tanh{\left(\frac{z-z^*}{\sqrt2 \eps c}\right)} } \,, \qquad& z\in I_f^4\,,\\
-1\,, \qquad& z\in I_s^{5}  \,.
\end{array}
\right.\, 
\end{eqnarray}
To leading order the $V$- and $W$-components are constant over the two fast fields \cite[e.g.]{vH_EXIS}, and 
we rewrite the second and third equations in (\ref{comoving}) for the $V$- and $W$-components to determine their profiles in the slow fields 
\begin{eqnarray}
- c^2 \bar{V}_z   =  c^2 \bar{V}_{zz} + \bar{U} - \bar{V} \,, \qquad
- D^2 c^2 \bar{W}_z   =  D^2 c^2 \bar{W}_{zz} + \bar{U} - \bar{W} \,,
 \label{comovingV} 
\end{eqnarray} 
which satisfy $\bar{V} = \mathcal{L}_{1c} \bar{U}$ and $\bar{W} = \mathcal{L}_{2c} \bar{U}$ with the self-adjoint operators $\mathcal{L}_{1c}$ and  
$\mathcal{L}_{2c}$ (with $\hat{\theta}=D^2$) defined in (\ref{OPER}).  

Upon using a regular expansion in $\eps$, 
$\bar{Z}_{p}(\xi) = \bar{Z}_0(\xi) + \O(\eps)$ with $\bar{Z}_0 = (\bar{U}_0, \bar{V}_0, \bar{W}_0)$, we solve the linear equations (\ref{comovingV}) %
\begin{eqnarray*}
&\bar{V}_{0}(z) = \left\{
\begin{array}{rl}
\displaystyle{ \frac{2(1+\phi_v)}{\phi_v} e^{-\frac{1-\phi_v}{2} z} \sinh \left(-\frac{1-\phi_v}{2} z^* \right) - 1 }
\,, \qquad&z\in I_s^1\,,\\
\displaystyle{ \frac{1}{\phi_v} \left( 1 - (1+\phi_v) e^{(1-\phi_v) z^*} \right) } \,, \qquad& z\in  I_f^2\,,\\
\displaystyle{ - \frac{1+\phi_v}{\phi_v} e^{-\frac{1-\phi_v}{2} (z-z^*)} 
+ \frac{1-\phi_v}{\phi_v} e^{-\frac{1+\phi_v}{2} (z+z^*)} + 1}\,, \qquad&z\in I_s^3\,,\\
 \displaystyle{ \frac{1}{\phi_v} \left( (1-\phi_v) e^{-(1+\phi_v) z^*} - 1 \right) } \,, \qquad& z\in  I_f^4  \,, \\
\displaystyle{ \frac{2(1-\phi_v)}{\phi_v} e^{-\frac{1+\phi_v}{2} z} \sinh \left( -\frac{1+\phi_v}{2} z^* \right) - 1 } \,, \qquad&z\in I_s^{5} \,, 
\end{array}
\right.
\end{eqnarray*}
where $\displaystyle{\phi_v = \sqrt{1 + \frac{4}{c^2}}}$. 
The profile of $\bar{W}_0(z)$ is obtained by replacing $\phi_v$ by $\phi_w = \displaystyle{\sqrt{1 + \frac{4}{c^2D^2}}}$ in $\bar{V}_0(z)$.
A typical profile of a traveling $1$-pulse solution is given in Fig. \ref{fig01_NEW}(a).

\subsection{The action functional}
\label{SS:ACTION}
Next we use the action functional approach \cite[e.g.]{CH2, vH_ACTION1} to determine the width, speed, and stability of a traveling $1$-pulse solution.
The action functional for a traveling pulse solution is similar to 
the action functional for the standing pulse \cite{CH1, vH_ACTION1}. 
In particular, the action functional for a traveling $1$-pulse solution $\bar{Z}_{p} = (\bar{U},\bar{V}, \bar{W})$ 
-- whose profile with unknown width $h:=2 z^*$ and propagating speed $c$ 
have been computed in the previous section -- is given by 
\begin{align}
\label{Jc}
\begin{aligned}
J_c(u)
& =  \int_{-\infty}^{\infty} e^{z} \left( \frac{ \varepsilon^2 c^2}{2} 
u_z^2 + F(u) - F(U_b)+ \frac{ \varepsilon \al}{2} (u \mathcal{L}_{1c} u-U_b^2)
\right. \\ & \qquad + \frac{ \varepsilon \be}{2} (u \mathcal{L}_{2c} u-U_b^2)  + \varepsilon \ga (u-U_b) \Big) dz \,,  
\end{aligned}
\end{align}
with $F(u) := u^4/4 - u^2/2$,  and $U_b$ begin the zero of 
$ u^3-u+\eps( (\al + \be) u + \ga) $ near $u \simeq -1$. 
Thus $U_b = -1 + \mathcal{O}(\varepsilon)$ and $(U_b, U_b, U_b)$ is the constant 
steady states attained by the traveling 1-pulse solution at both ends.  
We introduce the Hilbert space $\mathbb{H}_{ex}^1$ corresponding to the inner product 
$\displaystyle{ \langle v,w \rangle_{\mathbb{H}_{ex}^1} = \int_\mathbb{R} e^x \left( v w + v_x w_x \right) dx}$. 
The variational approach will find the weak solutions in $\mathbb{H}_{ex}^1$ to (\ref{comoving}) and 
a class $\mathcal{A}$ of admissible functions is defined as $\displaystyle{\mathcal{A} \equiv \{ u - U_b \in \mathbb{H}_{ex}^1 \} }$ \cite{CH1}. 
In other words, we consider the functional $J_c : \mathcal{A} \to \mathbb{R}$ for $c > 0$. 

The Gateaux derivative of $J_c$ is  
\begin{align*}
\frac{\delta J_c}{\delta u} \Phi &=  \lim \limits_{t \to 0} \frac{J_c(u + t \Phi) - J_c(u)}{t} \,  \\
&= \int_{-\infty}^{\infty} e^{z} \left( \varepsilon^2 c^2 u_z \Phi_z + 
 (u^3 -  u) \Phi + \frac{\varepsilon \al}{2} (\Phi \mathcal{L}_{1c} u + u \mathcal{L}_{1c} \Phi) 
\right. \\&\qquad + \frac{\varepsilon \be}{2} (\Phi \mathcal{L}_{2c} u + u \mathcal{L}_{2c} \Phi)
 + \varepsilon \ga \Phi \Big) dz\,,  \\ 
&= \int_{-\infty}^{\infty} e^{z} \left(- \varepsilon^2 c^2 u_{zz} 
- \varepsilon^2 c^2 u_{z} + u^3 - u + \varepsilon (\al  \mathcal{L}_{1c} u + \be  \mathcal{L}_{2c} u + \ga)  \right) \Phi dz\,, 
\end{align*}
with the self-adjoint operators $\mathcal{L}_{1c}$ and $\mathcal{L}_{2c}$ (with $\hat{\theta}=D^2$) defined in (\ref{OPER}).
Thus, we find that $\displaystyle{\frac{\delta J_c}{\delta u} \Phi =0}$ for all $\Phi \in C_0^{\infty}$ if 
$\bar{U}$ is the weak solution of the equation  
\begin{align*}
\varepsilon^2 c^2 \bar{U}_{zz} + \varepsilon^2 c^2 \bar{U}_{z} - \bar{U}^3 + \bar{U} - 
\varepsilon (\al  \mathcal{L}_{1c} \bar{U} + \be  \mathcal{L}_{2c} \bar{U} + \ga ) = 0. 
\end{align*}
That is, the critical points of $J_c$ 
satisfy the Euler-Lagrange equation associated with $J_c$ and these 
coincide with the traveling $1$-pulse solutions for (\ref{comoving}) 
when we set $\bar{V} = \mathcal{L}_{1c} \bar{U}$ and  $\bar{W} = \mathcal{L}_{2c} \bar{U}$.

Let $\Xi[a]$ be the translation operator along the $z$-axis for a distance $a \in \mathbb{R}$ 
given by $\Xi[a](\bar{Z}_p(z)) = \bar{Z}_{p}(z-a)$. 
It follows that 
\begin{align*}
J_c(\Xi[a] (\bar{Z}_{p}))
& =    \int_{-\infty}^{\infty} e^{z} \left( \frac{ \varepsilon^2 c^2}{2} 
(\Xi[a] U_z)^2 + F(\Xi[a] U) - F(U_b) 
\right.  \\
&  \qquad 
+ \frac{ \varepsilon \al}{2} ((\Xi[a] U) \mathcal{L}_{1c} (\Xi[a] U)-U_b^2)  
+ \frac{ \varepsilon \be}{2} ((\Xi[a] U) \mathcal{L}_{2c} (\Xi[a] U)-U_b^2) 
  \\
&  \qquad
+ \varepsilon \ga ((\Xi[a] U) -U_b) \Big) dz \,,   \\
& =    \int_{-\infty}^{\infty} e^{z+a} \left( \frac{ \varepsilon^2 c^2}{2} 
U_z^2 + F(U) - F(U_b) +  \frac{ \varepsilon \al}{2} (U \mathcal{L}_{1c} U-U_b^2)  \right.   \\
& \qquad  
+  \frac{ \varepsilon \be}{2} (U(\xi) \mathcal{L}_{2c} U(\xi)-U_b^2) 
+ \varepsilon \ga (U(\xi) -U_b) \Big) dz \,\\&=  e^a J_c(\bar{Z}_p).    
\end{align*}
%
Any spatial translation of a traveling pulse remains a traveling pulse, this leads to a one-dimensional manifold of translated solutions. Suppose the critical point $\bar{Z}_p$ is smooth and both it and its derivative decay sufficiently fast as $z \to \pm \infty$, 
by setting the test function $\Phi = \bar{U}_z$, integration by parts leads to  
\begin{align*}
J_c(\bar{Z}_{p}) & =  \left[ e^{z} \left( \frac{ \varepsilon^2 c^2}{2} \bar{U}_z^2 + F(\bar{U}) - F(U_b) 
+ \frac{ \varepsilon \al}{2} (\bar{U} \mathcal{L}_{1c} \bar{U}-U_b^2) \right. \right.  \\
& \qquad
+ \frac{ \varepsilon \be}{2} (\bar{U} \mathcal{L}_{2c} \bar{U}-U_b^2) 
+ \varepsilon \ga (\bar{U}-U_b) \Big) \Big]_{-\infty}^\infty = 0.
\end{align*}
due to the assumed asymptotic behavior of $\bar{U}$ and its derivative as $z \to \pm \infty$. Hence besides $\bar{Z}_p$ being a critical point of $J_c$, we have in addition $J_c(\bar{Z}_{p})= 0$. 
\begin{lemma}
The action functional $J_c$ of a traveling pulse solution $\bar{Z}_p$, and its derivative with respect to $z^*$,  are given by
\begin{align}
\label{Jzero}
\begin{aligned}
 \frac{J_c (\bar{Z}_{p})}{\varepsilon}  &= 
 \frac{2 \alpha}{\phi_v}
\left(  2 e^{-\phi_v z^*} - e^{z^*} - e^{-z^*}\right) + 
 \frac{2 \beta}{\phi_w}
\left(  2 e^{-\phi_w z^*} - e^{z^*} - e^{-z^*}\right)   \\
& \quad
+ 2 \ga (e^{z^*} - e^{-z^*})
+\frac{2 \sqrt{2}}{3} c (e^{z^*} + e^{-z^*}) + \mathcal{O}(\se)
  \,,  
  \end{aligned}
\end{align}
and 
\begin{align}
\label{DJzero}
\begin{aligned} \frac{1}{\varepsilon} \frac{\partial }{\partial z^*} J_c(\bar{Z}_{p}) &= 
 \frac{2 \alpha }{\phi_v}
\left( e^{-z^*} - e^{z^*} - 2 \phi_v e^{-\phi_v z^*} \right) + 
\frac{2 \beta }{\phi_w}
\left( e^{-z^*} - e^{z^*} - 2 \phi_w e^{-\phi_w z^*} \right)   \\
& \quad
+ 2 \ga (e^{z^*} + e^{-z^*}) +\frac{2 \sqrt{2}}{3} c (e^{z^*} - e^{-z^*}) + \mathcal{O}(\se)
 \,.
\end{aligned}
\end{align}
\end{lemma}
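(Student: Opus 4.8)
The plan is to compute the action functional $J_c(\bar{Z}_p)$ by substituting the explicitly constructed leading-order profile $\bar{Z}_0 = (\bar{U}_0, \bar{V}_0, \bar{W}_0)$ from \S\ref{SS:PROF} into the integral~(\ref{Jc}) and evaluating it region by region over the five intervals $I_s^1, I_f^2, I_s^3, I_f^4, I_s^5$. First I would note that since $\bar{V} = \mathcal{L}_{1c}\bar{U}$ and $\bar{W} = \mathcal{L}_{2c}\bar{U}$ identically (these are the exact linear relations, not merely leading-order), the terms $\frac{\eps\al}{2} e^z (\bar{U}\mathcal{L}_{1c}\bar{U} - U_b^2)$ and the analogous $\be$-term can be rewritten as $\frac{\eps\al}{2} e^z(\bar{U}\bar{V} - U_b^2)$ and integrated directly using the closed-form expressions for $\bar{V}_0$ (and $\bar{W}_0$); these produce the $\al/\phi_v$ and $\be/\phi_w$ contributions. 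The $F(u) - F(U_b)$ and $\eps\ga(u-U_b)$ terms give the $\ga(e^{z^*}-e^{-z^*})$ piece (the $F$ difference vanishes to leading order on the slow plateaus where $\bar{U}_0 = \pm 1$ and contributes nothing at leading order). The kinetic term $\frac{\eps^2 c^2}{2} e^z \bar{U}_z^2$ and the $\eps$-correction of the $\tanh$ transition are what generate the $\frac{2\sqrt{2}}{3}c(e^{z^*}+e^{-z^*})$ term, via the standard computation $\int (\sech^2)^2 \sim$ using $\frac{\eps^2 c^2}{2}(\bar{U}_\xi)^2 = F(\bar{U}) - F(\pm 1) + \O(\eps)$ on the fast fields and the value $e^{\pm z^*}$ of the weight frozen at the interface.

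Concretely, the key steps in order: (i) split $J_c(\bar{Z}_p) = \sum_{j} J_c^{(j)}$ over the five regions; (ii) on $I_s^{1,3,5}$ use $\bar{U}_0 = \pm 1$, so $F(\bar{U}_0) - F(U_b) = \O(\eps)$ and $\bar{U}_{0,z} = 0$, leaving only the $\al, \be, \ga$ linear/quadratic terms, which are elementary exponential integrals given the explicit $\bar{V}_0, \bar{W}_0$; (iii) on the fast fields $I_f^{2,4}$ rescale $\xi = (z\mp z^*)/\eps$, expand $e^z = e^{\pm z^*}(1 + \O(\eps))$, and use the leading-order identity relating the kinetic and potential terms of the $\tanh$ front to reduce the fast-field contribution to $e^{\pm z^*} \int_{-\infty}^{\infty} \eps^2 c^2 (\bar{U}_{0,\xi})^2 \,d\xi / (\text{normalization})$, yielding the $\frac{2\sqrt 2}{3}c$ coefficient from $\int \sech^4 = \frac{4}{3}$ and $\frac{\eps^2 c^2}{2}\bar U_\xi^2 = \frac{1}{2}\sech^4(\cdot/\sqrt2\,\eps c)/(\text{stuff})$; (iv) sum everything, collect the coefficients of $e^{\pm z^*}$ and $e^{-\phi_v z^*}$, $e^{-\phi_w z^*}$, and verify the result matches~(\ref{Jzero}). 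For~(\ref{DJzero}), differentiate~(\ref{Jzero}) termwise in $z^*$; the only subtlety is justifying that differentiating the $\O(\se)$ remainder stays $\O(\se)$, or alternatively re-deriving $\partial_{z^*} J_c$ directly by noting that moving the interface location changes both the region boundaries and the profile — but the boundary-variation terms cancel because $\bar{U}_0$ is continuous there and $F(\pm 1) - F(U_b) = \O(\eps)$.

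The main obstacle I anticipate is bookkeeping in step (ii)/(iii): the explicit $\bar{V}_0(z)$ is a three-piece exponential expression on each slow region, and the integrals $\int_{z^*+\se}^{\infty} e^z \bar{U}_0 \bar{V}_0 \,dz$ etc. must be assembled so that the spurious terms growing like $e^z$ at $+\infty$ cancel against the $-U_b^2 \approx -1$ normalization (this is exactly why the weighted space $L^2_{ex}$ and the subtraction of the background state were set up) — one must be careful that the combination is integrable and that the surviving boundary terms at $z = \pm z^* \pm \se$ combine correctly with the fast-field pieces. A secondary subtlety is tracking which terms are genuinely $\O(\se)$ versus $\O(\eps)$: the fast-field width is $2\se$ and the transition scale is $\eps c$, so the corrections from expanding $e^z$ across the fast field and from the $\O(\eps)$ correction to the front profile need to be checked to land at $\O(\se)$ as claimed (the $\sqrt{\eps}$ rather than $\eps$ coming from the $\se$-width of the matching region). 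Once the region-wise integrals are in hand, collecting terms to match the stated closed form is routine algebra, and~(\ref{DJzero}) follows by differentiation with the remainder estimate handled as above.
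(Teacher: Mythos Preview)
Your proposal is correct and follows essentially the same approach as the paper: split the integral over the five slow/fast regions, evaluate the slow-region integrals using the explicit exponential form of $\bar{V}_0,\bar{W}_0$ with $\bar{U}_0 = \pm 1$ to obtain the $\alpha,\beta,\gamma$ contributions, rescale the fast-region integrals by $\xi=(z\pm z^*)/\eps$ and use the $\tanh$ profile to extract the $\tfrac{2\sqrt{2}}{3}c(e^{z^*}+e^{-z^*})$ term, then differentiate in $z^*$ for~(\ref{DJzero}). The paper writes the fast-field integrand directly as $\tfrac{1}{4}\sech^4 + \tfrac{1}{4}\tanh^4 - \tfrac{1}{2}\tanh^2 + \tfrac{1}{4}$ (kinetic plus potential, which collapses to $\tfrac{1}{2}\sech^4$) rather than invoking a kinetic--potential identity, but this is the same computation.
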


\begin{proof}
To prove the lemma, 
we split the definite integral of $J_c$ into the five regions   
\begin{align*}
J_c (\bar{Z}_{p}) =   \int_{ I_s^1} + \int_{ I_f^2} + \int_{ I_s^3} + \int_{ I_f^4} + \int_{ I_s^5} \,.
\end{align*}
Upon using that $\bar{U}_0=-1$ in the slow regions $I_s^{1,5}$ (\ref{Uzero}), we get, to leading order,  
\begin{align*}
 \int_{ I_s^1} + \int_{ I_s^3} + \int_{ I_s^5}  
& =  -  \varepsilon \int_{-\infty}^{-z^* -\se} e^z \left( \alpha
\frac{1+\phi_v}{\phi_v} e^{-\frac{1-\phi_v}{2} z} \sinh \left( -\frac{1-\phi_v}{2} z^* \right) \right.   \\
& \quad \displaystyle{ \left. + \beta
\frac{1+\phi_w}{\phi_w} e^{-\frac{1-\phi_w}{2} z} \sinh \left( -\frac{1-\phi_w}{2} z^* \right)  
 \right) dz } \, \\
& \quad  +  \varepsilon \int_{-z^* +\se}^{z^* -\se} e^z \left( \frac{\alpha}{2}\left(
\frac{1-\phi_v}{\phi_v} e^{-\frac{1+\phi_v}{2} (z+z^*)}  \right. \right. \\
& \quad \left. - \frac{1+\phi_v}{\phi_v} e^{-\frac{1-\phi_v}{2} (z-z^*)}\right)    
+ \frac{\beta}{2}\left(
\frac{1-\phi_w}{\phi_w} e^{-\frac{1+\phi_w}{2} (z+z^*)} \right. \\
& \quad
\left. \left. - \frac{1+\phi_w}{\phi_w} e^{-\frac{1-\phi_w}{2} (z-z^*)} \right) 
+ 2 \ga \right) dz \\
& \quad  - \varepsilon \int_{z^* +\se}^{\infty} e^z \left( \alpha
\frac{1-\phi_v}{\phi_v} e^{-\frac{1+\phi_v}{2} z} \sinh \left( -\frac{1+\phi_v}{2} z^* \right) \right.  \\
& \quad \left. + \beta
\frac{1-\phi_w}{\phi_w} e^{-\frac{1+\phi_w}{2} z} \sinh \left( -\frac{1+\phi_w}{2} z^* \right) 
 \right) dz  \\ 
& =    \frac{2  \varepsilon \al}{\phi_v} \left( 2 e^{-\phi_v z^*} - e^{z^*} - e^{-z^*}  \right)
\\ &\quad+  \frac{2  \varepsilon \be}{\phi_w} \left( 2 e^{-\phi_w z^*} - e^{z^*} - e^{-z^*}  \right)+
2 \varepsilon \ga (e^{z^*} - e^{-z^*}) \,.  
\end{align*}
As for two fast regions $I_f^2$ and $I_f^4$ with $\xi = (z + z^*)/ \varepsilon$ 
and $\xi = (z - z^*)/\varepsilon$, respectively, we get, to leading order, 
\begin{align*}
 \int_{ I_f^2} + \int_{ I_f^4}  
& =  \varepsilon \int_{-\xi_* - 1/ \se}^{-\xi_* + 1/ \se} e^{\varepsilon \xi}
\left( \frac{1}{4} {\rm sech}^4 \left( \frac{\xi + \xi_*}{\sqrt{2} c} \right) + \frac{1}{4} \tanh^4 \left( \frac{\xi + \xi_*}{\sqrt{2} c} \right) \right.\\
&\quad \left.
- \frac{1}{2} \tanh^2 \left( \frac{\xi + \xi_*}{\sqrt{2} c} \right)  + \frac{1}{4} \right) d \xi  
\\ &\quad + \ \varepsilon \int_{\xi_* - 1/ \se}^{\xi_* + 1/ \se} e^{\varepsilon \xi}
\left( \frac{1}{4} {\rm sech}^4 \left( \frac{\xi - \xi_*}{\sqrt{2} c} \right) 
+ \frac{1}{4} \tanh^4 \left( \frac{\xi - \xi_*}{\sqrt{2} c} \right) \right.
\\ &\quad  \left. - \frac{1}{2} \tanh^2 \left( \frac{\xi - \xi_*}{\sqrt{2} c} \right)  + \frac{1}{4} \right) d \xi  \\
& = \frac{2 \sqrt{2}}{3} \varepsilon c (e^{\varepsilon \xi_*} + e^{-\varepsilon \xi_*}) .
\end{align*}
Combining these integrals gives (\ref{Jzero}), and subsequently taking the derivative with respect to $z^*$ gives
(\ref{DJzero}).
\end{proof}

As $\bar{Z}_p$ is a critical point with $J_c(\bar{Z}_{p}) = 0$, we can set the left-hand sides of (\ref{Jzero}) and (\ref{DJzero}) to zero to obtain, 
\begin{align}
\label{COND1}
\begin{aligned}
0 = \frac{1}{\varepsilon}\frac{\partial J_c}{\partial z^*} &= 
 4 e^{z^*} \left( \frac{\alpha}{\phi_v}  (e^{-(1+\phi_v) z^*} - 1) + 
\frac{\beta}{\phi_w}  (e^{-(1+\phi_w) z^*} - 1) + \ga + \frac{\sqrt{2}}{3} c \right)
\\
& =  4 e^{z^*} \left( \alpha \bar{V}_0(z^*) + 
\beta \bar{W}_0(z^*) + \ga + \frac{\sqrt{2}}{3} c \right).
\end{aligned}
\end{align}
The system inherits the symmetry $(z^*,c) \to (-z^*, -c)$. 
That is, whenever there is a right-going traveling pulse solution ($c>0$) there is also a left-going traveling pulse solution ($c < 0$), since the traveling pulse solutions do not have a preferred direction. Recalling $J_c(\bar{Z}_{p}) = 0$ (\ref{Jzero}) and $\displaystyle{\frac{\partial}{\partial z^*} J_c(\bar{Z}_{p}) =0}$ (\ref{DJzero}) again, we obtain a second condition similar to (\ref{COND1}). 
\begin{align*}
4 e^{-z^*} \left( \alpha \bar{V}_0(-z^*) + \beta \bar{W}_0(-z^*) + \ga - \frac{\sqrt{2}}{3} c \right) = 0. 
\end{align*}
Combining these two conditions yields the existence conditions for a traveling $1$-pulse solution in terms of the two undetermined variables $z^*$ and $c$, 
\begin{align}
\label{existence_criterion}
\begin{aligned} 
0&= \alpha \bar{V}_0(z^*) + \beta \bar{W}_0(z^*) + \ga + \frac{\sqrt{2}}{3} c \,,  \\
0&= \alpha \bar{V}_0(-z^*) + \beta \bar{W}_0(-z^*) + \ga - \frac{\sqrt{2}}{3} c   \,, 
\end{aligned}
\end{align}
which is the same as (\ref{existence_criterion_one}) of Theorem~\ref{TH:1P}.

By solving (\ref{existence_criterion_one})/(\ref{existence_criterion}) for $z^*$ and $c$, we get the solution branches with respect to $D^2$ as shown in Fig. \ref{fig02} for $(\alpha, \beta, \gamma) = (2,1,1)$. 
At $D^2 \approx 1.315$, traveling pulse solutions are emanated from 
the standing pulse solutions in a subcritical manner. 
The asymptotic behavior of the solution branches for large $D^2$ approaches 
$c_\infty$ which satisfies $\displaystyle{ \frac{\sqrt{2} c}{3}  =  \frac{\alpha}{\phi_v} + \beta - \gamma}$. For instance, $c_\infty = \sqrt{14}$ for $(\alpha, \beta, \gamma) = (2,1,1)$.
\begin{figure}[ht!]
 \centering
 \includegraphics[width=10cm]{./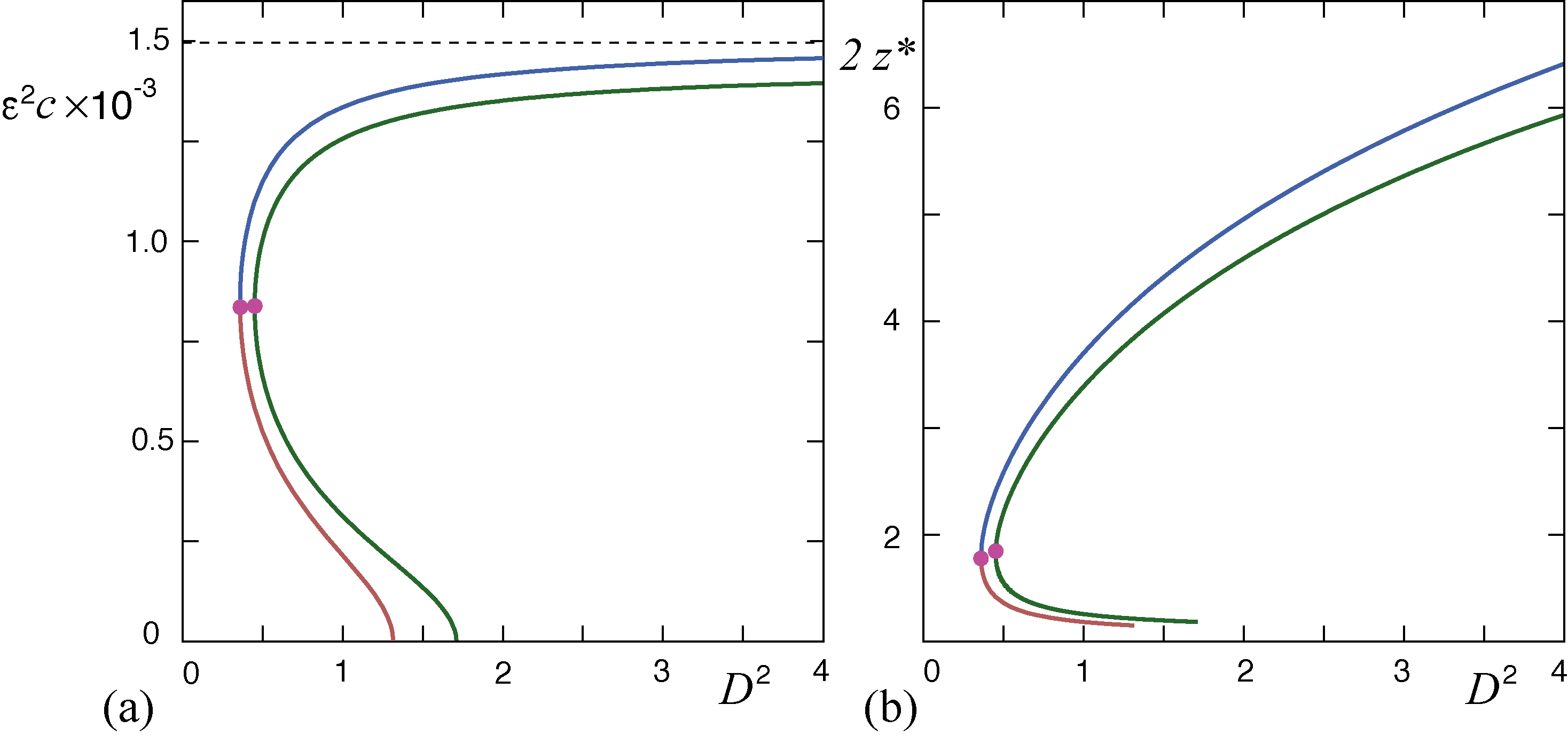}
  \caption{ The bifurcation diagram with respect to $D^2$: 
     the vertical axes in (a) and (b) are the propagating velocity $c$ 
     and the pulse width $2 z^*$, respectively. The system parameters are set to $(\al, \be, \ga) = (2, 1, 1)$. Blue and red curves indicate solution branches obtained by solving the existence conditions (\ref{existence_criterion}) with $\displaystyle{ \frac{\partial}{\partial c} J_c(\bar{Z}_{p}) >0}$ and  $<0$, respectively. The resulting values $c$ and $2 z^*$  were rescaled to the original scale of $\varepsilon^2 c$ and $2 z^*/c$.
     The green curves are obtained by numerical continuation of the original PDE with $\eps = 0.02$, see Appendix~\ref{SS:NUM} for more details on the numerical continuation.  
The pink disks indicate the locations of turning points of solution branches. 
     In the left panel, the dotted line indicates $c_\infty= \varepsilon^2 \sqrt{14}$. 
}
  \label{fig02}
\end{figure}

As for the stability of the traveling pulse solutions (however, see Remark~\ref{R:0}), we first account for the second derivative with respect to $z^*$ 
\begin{align*}
 \frac{1}{\varepsilon} \frac{\partial^2 }{\partial (z^*)^2} J_c(\bar{Z}_{p}) &= 
\frac{16}{c^2} \left( \frac{\alpha}{\phi_v} e^{-\phi_v z^*} + \frac{\beta}{\phi_w D^2} e^{-\phi_w z^*}  \right), 
\end{align*} 
where we make use of $J_c(\bar{Z}_{p}) = 0$ in (\ref{Jzero}). When both $\alpha$ and $\beta$ are set to be positive, this expression 
is always positive (which is related to stable eigenvalues \cite{vH_ACTION1}) even for the solutions on the lower branch of saddle-node structure  in Fig. \ref{fig02}. 
Next, we consider the derivative with respect to $c$
\begin{align}
\label{CJzero}
\begin{aligned}
\frac{1}{\varepsilon} \frac{\partial}{\partial c}  J_c(\bar{Z}_{p}) &=
- \frac{8 \alpha}{c^3  \phi_v^3}
\left( e^{z^*} + e^{-z^*} - 2 e^{-\phi_v z^*} - 2 \phi_v z^* e^{-\phi_v z^*} \right)  \\
& \quad - \frac{8 \beta}{c^3 D^2 \phi_w^3}
\left( e^{z^*} + e^{-z^*} - 2 e^{-\phi_w z^*} - 2 \phi_w z^* e^{-\phi_w z^*} \right)  \\
& 
\quad +\frac{2 \sqrt{2}}{3} (e^{z^*} + e^{-z^*})  \,. 
\end{aligned}
\end{align}
By solving $\displaystyle{ \frac{\partial}{\partial c} J_c(\bar{Z}_{p})= 0}$ combined with $J_c(\bar{Z}_{p}) = 0$ 
and $\displaystyle{\frac{\partial}{\partial z^*} J_c(\bar{Z}_{p})=0}$,  we can detect the 
saddle-node bifurcation point as $D^2 \approx 0.359922$ with $(z^*, c) \approx (1.90119, 2.12014)$ 
for  $(\al, \be, \ga) = (2, 1, 1)$. The derivative $\displaystyle{ \frac{\partial}{\partial c} J_c(\bar{Z}_{p})}$ changes its sign 
from minus to plus at the turning point of the solution branch. 
Therefore, the traveling $1$-pulse solution, as long as the remaining small eigenvalues coming from the essential spectrum still have negative real part, see \cite{vH_STAB}, \S\ref{SS:COL} and Remark~\ref{R:0}, recovers their stability via a saddle-node bifurcation. 
Figure 4 show  the contour plots of the leading order component
 $J_c(\bar{Z}_{p})/ \varepsilon$ 
for $(\al, \be, \ga) = (2, 1, 1)$. For $D^2 = 0.80$ as in Fig. \ref{fig03}(a), 
the existence condition is solved by $(z^*, c) \approx (3.71331, 2.87201)$ 
and $(0.892803, 1.31111)$. The first and second solutions satisfy 
$\displaystyle{ \frac{\partial}{\partial c} J_c(\bar{Z}_{p}) >0}$ (node) and  $<0$ (saddle), respectively.  
In the neighborhood of a turning point, for example for $D^2 = 0.361$ as in Fig. \ref{fig03}(b),
the two curves of  $\displaystyle{\frac{\partial}{\partial z^*} J_c(\bar{Z}_{p})=0}$ and 
$\displaystyle{ \frac{\partial}{\partial c} J_c(\bar{Z}_{p}) = 0}$ intersect around 
the bottom of the basin at $\displaystyle{J_c(\bar{Z}_{p})=0}$. 
\begin{figure}
  \begin{center}
  \scalebox{0.60}{
 \includegraphics{./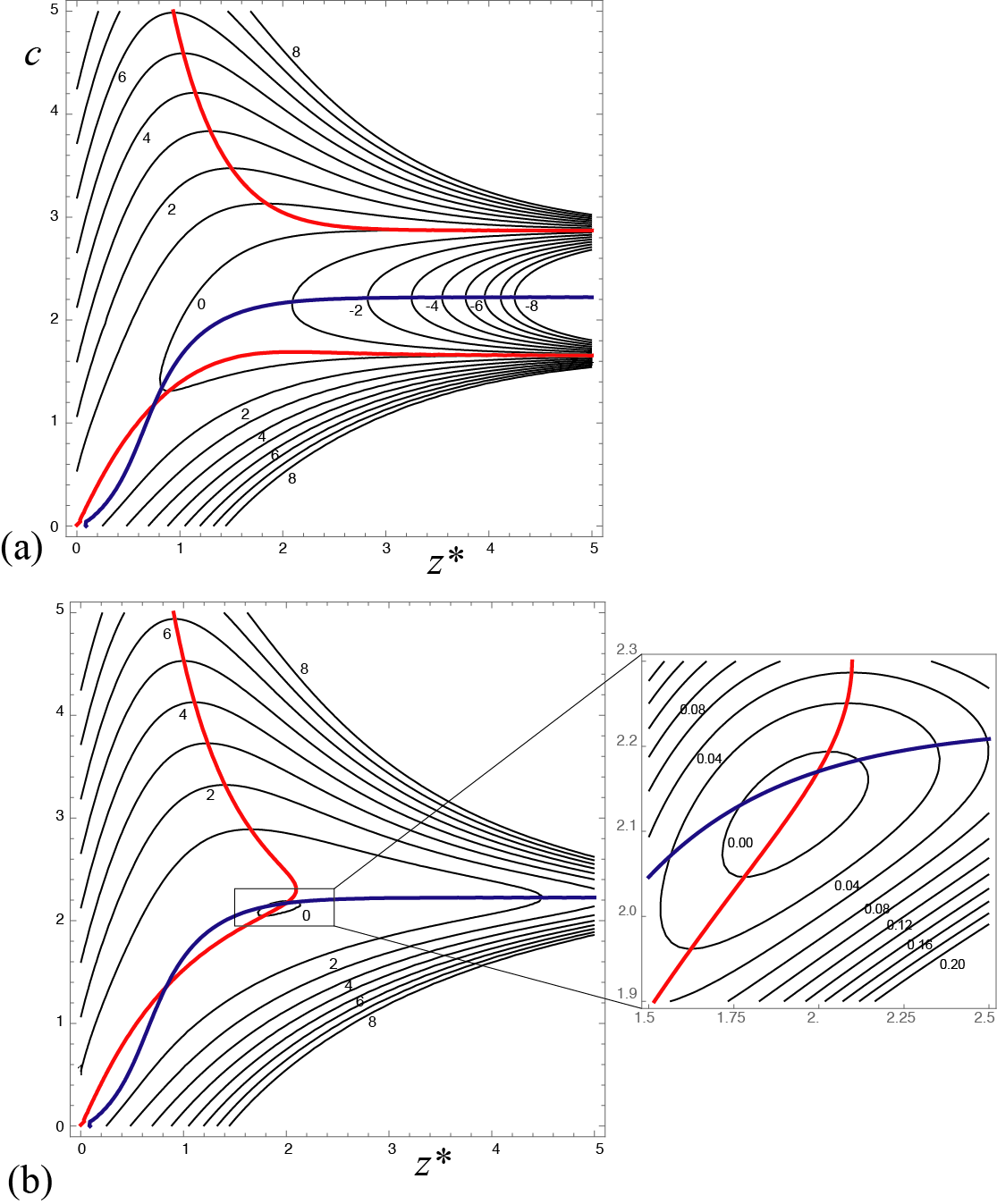}}
    \vspace*{-0.5cm}
    \caption{(a)(b) Contour plots of the leading order component of $J_c(\bar{Z}_{p})/ \varepsilon$ 
    for $D^2 = 0.80$ and $0.361$, respectively (black lines).  The system parameters are set to $(\al, \be, \ga) = (2, 1, 1)$.
    The red curve is determined by $ \partial J_c(\bar{Z}_{p})/ \partial z^*= 0$, while the blue line is $ \partial J_c(\bar{Z}_{p})/ \partial c= 0$. 
    The lower right panel shows a zoom in around the intersection of $J_c(\bar{Z}_{p}) = 0$, 
    $ \partial J_c(\bar{Z}_{p})/ \partial z^*= 0$ and $ \partial J_c(\bar{Z}_{p})/ \partial c= 0$ near $(z^*, c) \approx (1.90119, 2.12014 )$. 
     }
\label{fig03}
  \end{center}
\end{figure} 

We conclude this subsection by looking the following necessary parameter condition 
on the solvability of $z^*$ in $\displaystyle{ \frac{\partial}{\partial c} J_c(\bar{Z}_{p})= 0}$
with (\ref{CJzero}). 
\begin{lemma}
Let $c > 0$ and $z^* \in (0, \infty]$ satisfying (\ref{existence_criterion_one}) and (\ref{SN}). Then, 
\begin{align*}
\displaystyle{\frac{\sqrt{2} c}{12}} & 
\displaystyle{ <  \frac{\alpha}{(c^2 + 4) \phi_v} + \frac{\beta}{(c^2 D^2 + 4) \phi_w}. }
\end{align*} 
\end{lemma}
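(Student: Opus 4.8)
The plan is to observe that, once the saddle-node relation~(\ref{SN}) is imposed, the asserted inequality is \emph{equivalent} to the positivity of a single weighted sum of the two terms on its right-hand side, and then to establish that positivity — trivially when $\alpha,\beta\ge 0$, and with the help of the existence conditions~(\ref{existence_criterion_one}) otherwise.

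For the reduction, write $C := e^{z^*}+e^{-z^*}$, $G_v := e^{-\phi_v z^*}(1+\phi_v z^*)$ and $G_w := e^{-\phi_w z^*}(1+\phi_w z^*)$, and set $A := \alpha/\bigl((c^2+4)\phi_v\bigr)$ and $B := \beta/\bigl((c^2 D^2+4)\phi_w\bigr)$, so that the claim reads $\tfrac{\sqrt2}{12}c<A+B$. Using $c^2\phi_v^2=c^2+4$ (hence $c^3\phi_v^3=c(c^2+4)\phi_v$) and the analogous identity for $\phi_w$, dividing~(\ref{SN}) by $8C$ and multiplying by $c$ turns it into
\begin{equation*}
\frac{\sqrt2}{12}\,c \;=\; \frac{C-2G_v}{C}\,A+\frac{C-2G_w}{C}\,B \;=\; A+B-\frac{2\,(G_vA+G_wB)}{C}.
\end{equation*}
Since $e^t>1+t$ for $t>0$ we have $0<G_v<1$ and $0<G_w<1$ for $z^*>0$, while $C\ge 2$, so $0<C-2G_v<C$; consequently the claim $\tfrac{\sqrt2}{12}c<A+B$ holds if and only if
\begin{equation*}
G_vA+G_wB>0. \tag{$\star$}
\end{equation*}
If $\alpha\ge 0$ and $\beta\ge 0$ this is immediate, because then $A,B\ge 0$ and they cannot both vanish (otherwise~(\ref{SN}) would force $C=0$).

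For the general case I would bring in the existence conditions. Subtracting the two equations of~(\ref{existence_criterion_one}) and inserting~(\ref{VV}) produces the additional linear relation
\begin{equation*}
(c^2+4)(2-g_v)\,A+(c^2 D^2+4)(2-g_w)\,B=\frac{2\sqrt2}{3}\,c,\qquad g_v:=(1+\phi_v)e^{(1-\phi_v)z^*}+(1-\phi_v)e^{-(1+\phi_v)z^*},
\end{equation*}
and likewise for $g_w$. One checks that $0<g_v<2$ for every $z^*>0$ ($g_v(0)=2$ and $g_v'(z^*)=(1-\phi_v^2)\bigl(e^{(1-\phi_v)z^*}-e^{-(1+\phi_v)z^*}\bigr)<0$ there, since $\phi_v>1$). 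Thus this relation together with the rewritten~(\ref{SN}) forms a $2\times2$ linear system for $(A,B)$ with strictly positive coefficients; solving it expresses $A$ and $B$, hence the sign of $G_vA+G_wB$, through $z^*,c,D$ alone, and in particular $\alpha$ and $\beta$ cannot both be negative. Assuming for definiteness $\beta<0$ (the case $\alpha<0$ is the same with $v\leftrightarrow w$), the relation forces $A>0$; eliminating $A$ between it and~(\ref{SN}) and using the identity $G_v(C-2G_w)-G_w(C-2G_v)=C(G_v-G_w)$ reduces $(\star)$ to $\tfrac{\sqrt2}{12}c\,G_v+|B|\,(G_v-G_w)>0$. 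When $G_v\ge G_w$, i.e.\ $\phi_v\le\phi_w$, i.e.\ $D\le 1$, this is obvious; when $D>1$ it follows from the explicit value of $|B|$ obtained from the linear system, together with $e^t>1+t$ and the fact that $\phi\mapsto e^{-\phi z^*}(1+\phi z^*)$ is decreasing.

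I expect the last step to be the main obstacle: controlling $|B|$ (respectively $|A|$) in the sub-case $\beta<0$, $D>1$ (respectively $\alpha<0$, $D<1$). There one must keep track of the sign of the determinant of the $2\times2$ system and then verify a genuinely computational inequality among $e^{\pm z^*}$, $e^{-\phi_v z^*}$, $e^{-\phi_w z^*}$ and the affine factors $1+\phi_v z^*$, $1+\phi_w z^*$; the monotonicity of $\phi\mapsto(1+\phi z^*)/\phi^2$ and of $\phi\mapsto e^{-\phi z^*}(1+\phi z^*)$, together with $e^t>1+t$, are the tools for it. (At $z^*=+\infty$ the two sides of $(\star)$ coincide, so the inequality there degenerates to an equality; the argument gives the strict inequality for every finite $z^*$.)
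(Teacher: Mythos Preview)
Your reduction to $(\star)$ is correct, and the case $\alpha,\beta\ge 0$ is indeed immediate. However, your treatment of the mixed-sign case is where the proposal has a genuine gap: you yourself identify the obstacle (bounding $|B|$ when $\beta<0$, $D>1$), and the sketch you give---invoking monotonicity of $\phi\mapsto(1+\phi z^*)/\phi^2$ and $e^t>1+t$---does not close it. The determinant of your $2\times2$ system has no obvious sign, and the resulting computational inequality among $e^{\pm z^*}$, $e^{-\phi_v z^*}$, $e^{-\phi_w z^*}$, $1+\phi_v z^*$, $1+\phi_w z^*$ is not verified. As it stands, the mixed-sign case is a plan rather than a proof.

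The paper avoids this entire case distinction by taking a different and much shorter route. It never uses~(\ref{existence_criterion_one}); only~(\ref{SN}) is needed. Write $\mathcal{F}(z^*,c):=e^{-z^*}\cdot\frac{1}{\varepsilon}\frac{\partial}{\partial c}J_c(\bar Z_p)$, so that~(\ref{SN}) reads $\mathcal{F}(z^*,c)=0$, and observe that the asserted inequality is precisely $\mathcal{F}(\infty,c)<0$ (this is your identity $c^3\phi_v^3=c(c^2+4)\phi_v$ again). One computes $\mathcal{F}(0,c)=\tfrac{4\sqrt2}{3}>0$ and then studies $\partial_{z^*}\mathcal{F}$: it equals $-2e^{-2z^*}$ times a bracket which is $\tfrac{2\sqrt2}{3}>0$ at $z^*=0$ and tends to $\mathcal{F}(\infty,c)$ as $z^*\to\infty$. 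The argument is that when $\mathcal{F}(\infty,c)\ge 0$ the function $\mathcal{F}(\cdot,c)$ is monotone decreasing from $\tfrac{4\sqrt2}{3}$ to $\mathcal{F}(\infty,c)\ge 0$ and hence never vanishes; therefore a finite positive root of~(\ref{SN}) forces $\mathcal{F}(\infty,c)<0$. This is a global shape-of-$\mathcal{F}$ argument rather than an algebraic one, and it treats all sign combinations of $(\alpha,\beta)$ uniformly---which is what your approach lacks. (Your remark that the strict inequality degenerates to an equality at $z^*=+\infty$ is correct and consistent with this picture.)
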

We define the function $\mathcal{F}(z^*,c)$, from (\ref{CJzero}) of 
the derivative with respect to $c$, as follows: 
\begin{eqnarray}
\displaystyle{ \frac{1}{\varepsilon} \frac{\partial}{\partial c} } J_c(\bar{Z}_{p}) &=& 
=: \displaystyle{ e^{z^*} \mathcal{F}(z^*, c)} \,. \nonumber
\end{eqnarray}
Note that that $\displaystyle{ \mathcal{F}(0,c) = \frac{4 \sqrt{2}}{3} } > 0$ and 
$\displaystyle{  \mathcal{F}(\infty,c) = \frac{2 \sqrt{2}}{3} - \frac{8 \alpha}{c^3 \phi_v^3}
- \frac{8 \beta}{c^3 D^2 \phi_w^3}   }$. 

Next we consider the derivative of $\mathcal{F}$ with respect to $z^*$, 
\begin{eqnarray}
\displaystyle{ \frac{\partial}{\partial z^*} \mathcal{F}(z^*,c) } &=& 
\displaystyle{ - 2 e^{-2 z^*}
\left( \frac{2 \sqrt{2}}{3}
-  \frac{8 \alpha}{c^3\phi_v^3} (1 - e^{(1-\phi_v) z^*} (1 + (1 + \phi_v) \phi_v z^*) ) \right.
} \nonumber \\
& & \displaystyle{ \left. 
-  \frac{8 \beta}{c^3 D^2 \phi_w^3} (1 - e^{(1-\phi_w) z^*} (1 + (1 + \phi_w) \phi_w z^*) )
\right). } \nonumber
\end{eqnarray}
Here $\displaystyle{ \frac{\partial}{\partial z^*} \mathcal{F}(0,c) = - \frac{4 \sqrt{2}}{3}  < 0} $, 
and $\displaystyle{ \frac{\partial}{\partial z^*} \mathcal{F}(\infty,c) =  
\lim \limits_{z^* \to \infty} \frac{\partial}{\partial z^*} \mathcal{F}(z^*,c) = 0}$. 
It is easy to see that the sign of $\displaystyle{ \frac{\partial}{\partial z^*} \mathcal{F}(\infty,c) }$ depends on that of $\mathcal{F}(\infty,c)$.

If $\displaystyle{ \mathcal{F}(\infty,c) > 0} $, $\mathcal{F}$ is monotonically decreasing and both $\mathcal{F}(0,c)$ and $\mathcal{F}(\infty,c)$ are positive. Then $\mathcal{F} > 0$ for all $z^*$. 
If $\displaystyle{ \mathcal{F}(\infty,c) 
< 0 }$, by the intermediate value theorem, there exist one positive 
$z^*$ such that $\mathcal{F}(z^*,c) = 0$. We can show that $\mathcal{F}$ 
reach a negative minimum at the non-negative root of $\displaystyle{\frac{\partial \mathcal{F}}{\partial z^*}  = 0}$. Then it increases again and converges to $\mathcal{F}(\infty,c) < 0$. This concludes the proof of the lemma. 

\subsection{Comparison with singular limit analysis}
\label{SS:HYBD}
Here, we derive the same conditions for the existence and saddle node bifurcation of a traveling $1$-pulse solution as in Theorem~\ref{TH:1P} by analyzing the singular limit of (\ref{FHNd}) in more detail, see \cite[e.g.]{JJIAM,PhysD} for more details on this technique. 
In other words, we provide another (sketch of a) proof of Theorem~\ref{TH:1P} showcasing the similarities and complementary character of the two techniques.

The $U$-component of a traveling pulse solution satisfies (\ref{comovingU}) 
in the comoving frame. 
Introducing the regular expansions $\tilde{U} = \tilde{U}_0 + \eps \tilde{U}_1 + \eps^2 \tilde{U}_2 \ldots$ 
and $c = c_0 + \eps c_1 + \eps^2 c_2 + \ldots$, and equating equal terms with respect to $\eps$, 
the leading order equation becomes $0 = c_0^2 \tilde{U}_{0 \xi \xi} + \tilde{U}_0 - \tilde{U}_0^3$. 
Solving it, we obtain $\displaystyle{ \tilde{U}_0 = \pm \tanh \left( \xi/ (\sqrt{2} c_0) \right) }$. 
For the next order ${\mathcal O}( \eps)$, we have 
\begin{align*}
-c_0^2 \tilde{U}_{0 \xi} & = c_0^2 \tilde{U}_{1 \xi \xi} + \tilde{U}_1 - 3 \tilde{U}_0^2 \tilde{U}_1 - (\alpha V + \beta W + \gamma),   \\
& =:  {\mathcal A}(\tilde{U}_1) - (\alpha V + \beta W + \gamma). 
\end{align*}
Note that ${\mathcal A}$ is self-adjoint, the derivative $\tilde{U}_{0 \xi}$ satisfies ${\mathcal A}(\tilde{U}_{0 \xi}) = 0$, and $(\alpha V + \beta W + \gamma)$ is evaluated at either $z^*$ or $-z^*$. 
Taking the inner product with $\tilde{U}_{0 \xi}$ and applying Fredholm's alternative 
$$\langle \tilde{U}_{0 \xi}, {\mathcal A}(\tilde{U}_1)) \rangle = 0$$ to obtain the solvability condition \cite{ToddKeith},  we find the traveling front solution up to the leading order as 
\begin{eqnarray}
\displaystyle{ 
\tilde{U} = \pm \tanh \left( \frac{x - \varepsilon^2 c t \pm z^*/c}{\sqrt{2} \eps}  \right), \qquad c = \pm \frac{3}{\sqrt{2}} \left. (\alpha V + \beta W + \gamma) \right |_{\mp z^*}. 
}
\end{eqnarray}
It is remarked that the propagating velocity must be in the order of ${\mathcal O}(\eps^2)$. 

We consider a traveling pulse solution as a solution which consists of a front and back and the positions of 
the interfaces are given by $l_2$ and $l_1$, respectively (with $l_1<l_2$). In the singular limit of $\eps \rightarrow 0$,  
the rectangular shape of $U$-component of the traveling pulse profile is replaced by a piecewise constant 
function $U(x; l_2, l_1) = F(x-l_2) - F(x-l_1) - 1$ with $F(x) = 1$ for $x \le 0$ and $-1$ for $x > 0$. 
Thus we obtain the following mixed ODE-PDE system, associated with (\ref{FHNd}), describing the dynamics of a traveling pulse solution
\begin{align}
\left \{
\begin{aligned}
\dot{l}_1 &= \frac{3 \eps^2}{ \sqrt{2}} (\alpha V (l_1) + \beta W (l_1)+\gamma), \\
 \dot{l}_2 &= -\frac{3 \eps^2}{ \sqrt{2}} (\alpha V(l_2) + \beta W(l_2) + \gamma),   \\
\tau V_t &= V_{xx} + U - V,  \\
\theta W_t &= D^2 W_{xx} + U - W. 
\end{aligned}
\right.
\label{eqn_hybrid}
\end{align}
As for the $V$- and $W$-components, the traveling pulse solution, $\tilde{V}(z)$ and $\tilde{W}(z)$ with the comoving frame $z = c(x - \eps^2 c t)$, satisfies (\ref{comovingV}) 
\begin{eqnarray*}
- c^2 \tilde{V}_{z} & = &  c^2 \tilde{V}_{zz} + \tilde{U} - \tilde{V}, 
\\  
- D^2 c^2 \tilde{W}_{z} & = & D^2 c^2 \tilde{W}_{zz} + \tilde{U} - \tilde{W},
\end{eqnarray*}
where we again used $(\hat{\tau}, \hat{\theta}) = (\eps^2 \tau, \eps^2 \theta) = (1, D^2)$. 
Then, we solve the linear ODEs 
\begin{align}
\label{profileV}
\tilde{V}(z)  =  \left\{ 
\begin{aligned}
& \frac{2 \rho_{v-}}{ \rho_{v+} - \rho_{v-}} \left( e^{-\rho_{v+} l_2} - e^{-\rho_{v+} l_1}  \right) e^{\rho_{v+} z} - 1 \, , && z \le l_1, \\
&\frac{2 \rho_{v-}}{ \rho_{v+} - \rho_{v-}} e^{\rho_{v+} (z-l_2)} - \frac{2 \rho_{v+}}{ \rho_{v+} - \rho_{v-}}  e^{\rho_{v-} (z -l_1)}  - 1 \, , && l_1 < z \le l_2, \\
&\frac{2 \rho_{v+}}{ \rho_{v+} - \rho_{v-}} \left( e^{-\rho_{v-} l_2} - e^{-\rho_{v-} l_1}  \right) e^{\rho_{v-} z} - 1 \, ,  && z > l_2,  \\
\end{aligned}
\right. 
\end{align}
where $\displaystyle{ \rho_{v\pm} :=  (-1 \pm \phi_v)/2 }$. Replacing $\rho_{v\pm}$
by $\displaystyle{ \rho_{w\pm} :=  (-1 \pm \phi_w)/2 }$ in (\ref{profileV}) gives the profile $\tilde{W}(x)$.
Observe that, upon setting $l_1 = -z^*$ and $l_2 = z^*$ in the above, the two ODEs in (\ref{eqn_hybrid})
with $\dot{l}_2 = \dot{l}_1 = c$ coincide with the existence criterion 
(\ref{existence_criterion_one})/(\ref{existence_criterion}) obtained from the action functional approach. 

Next, we investigate the eigenvalue problem for (\ref{eqn_hybrid}) as $L \Psi = \lambda \Psi$ with 
$\Psi(z) := (\psi_1, \psi_2, p(z), q(z))^T $ given by
\begin{align}
\label{eigen_hybrid}
\left \{ 
\begin{aligned}
 \frac{3 }{\sqrt{2}} ( \alpha ( c \tilde{V}_z(l_1) \psi_1 + p(l_1)) 
+ \beta (c \tilde{W}_z(l_1) \psi_1 + q(l_1) ) )  & = \hat{\lambda} \psi_1 \, , \\
 - \frac{3}{\sqrt{2}} (  \alpha (c \tilde{V}_z(l_2) \psi_2 + p(l_2))
+ \beta ( c \tilde{W}_z(l_2) \psi_2 + q(l_2) ) )  & = \hat{\lambda} \psi_2 \, ,  \\
 \hat{\tau} c^2 p_{zz} +  \hat{\tau} c^2 p_z - p + \psi_2 \delta(z - l_2) - \psi_1 \delta(z - l_1) & = \hat{\tau} \hat{\lambda} p \, , \\
  \hat{\theta} c^2 q_{zz} +  \hat{\theta} c^2 q_z - q + \psi_2 \delta(z - l_2) - \psi_1 \delta(z - l_1) & = \hat{\theta} \hat{\lambda} q \, , 
\end{aligned}
\right.
\end{align}
where we rescaled $\lambda = \eps^2 \hat{\lambda}$ and with $\delta$ the standard Kronecker delta-function.
Clearly, $(\psi_1, \psi_2, p, q) =  (1,1,\tilde{V}_z(z),\tilde{W}_z(z))$ is a solution of (\ref{eigen_hybrid}) associated 
to the translation free zero $\hat{\lambda} = 0$.
We solve the last two equations for $p(z)$  and $q(z)$ with the suitable conditions 
\begin{align*}
 p_z(l_1 - 0) - p_z(l_1 + 0) = - \frac{2 \psi_1}{c \hat{\tau}}  \, , &  \qquad
 p_z(l_2 - 0) - p_z(l_2 + 0) = \frac{2 \psi_2}{c \hat{\tau}} ,  \\
q_z(l_1 - 0) - q_z(l_1 + 0) = - \frac{2 \psi_1}{c \hat{\theta}}  \, , & \qquad  q_z(l_2 - 0) - q_z(l_2 + 0) = \frac{2 \psi_2}{c \hat{\theta}} . 
\end{align*}
Then, we obtain 
\begin{align*}
  p(l_1) = -\frac{2}{c \hat{\tau}(\kappa_{v+} - \kappa_{v-})} \left( \psi_1 - \psi_2 e^{-\kappa_{v+} h}  \right),  & \,  p(l_2) = \frac{2}{c \hat{\tau}(\kappa_{v+} - \kappa_{v-})} \left( \psi_2 - \psi_1 e^{\kappa_{v-} h}  \right),   \\
 q(l_1) = -\frac{2}{c \hat{\theta}(\kappa_{w+} - \kappa_{w-})} \left( \psi_1 - \psi_2 e^{-\kappa_{w+} h}  \right),  &\,   q(l_2) = \frac{2}{c \hat{\theta}(\kappa_{w+} - \kappa_{w-})} \left( \psi_2 - \psi_1 e^{\kappa_{w-} h}  \right),
\end{align*}
where $\displaystyle{ \kappa_{v\pm} = \kappa_{v\pm}(\hat{\tau}):= \frac{1}{2} \left( -1 \pm \sqrt{1 + \frac{4}{c^2 \hat{\tau}} ( 1 + \hat{\tau} \hat{\lambda} )} \right) }$, $\displaystyle{ \kappa_{w \pm} := \kappa_{v\pm}}(\hat{\theta})$,
and $h := l_2 - l_1$. 
We also have 
\begin{align}
\label{VVVV}
\begin{aligned}
  \tilde{V}_z(l_1) = \frac{2 \rho_{v+} \rho_{v-}}{\rho_{v+} - \rho_{v-}} \left( e^{-\rho_{v+} h} - 1 \right) \, ,  & \quad \tilde{V}_z(l_2) = \frac{2 \rho_{v+} \rho_{v-}}{\rho_{v+} - \rho_{v-}} \left(1- e^{\rho_{v-} h} \right), 
\end{aligned}
\end{align}
and $\tilde{W}_z(l_1)$ and $ \tilde{W}_z(l_2)$ are obtained by replacing $\rho_{v\pm}$ with $\rho_{w\pm}$ in the above expressions for $\tilde{V}_z(l_1)$ and $ \tilde{V}_z(l_2)$, respectively.

Substituting (\ref{VVVV}) into (\ref{eigen_hybrid}), and taking into account that the first two equations has a non-trivial solution 
with respect to $(\psi_1, \psi_2)$, we get the following result.
\begin{theorem} 
Let $\tau, \theta = {\cal O}(1/\varepsilon^2)$ and  $ (\hat{\tau}, \hat{\theta}) = (\varepsilon^2 \tau, \varepsilon^2 \theta)  = (1, D^2)$, and let $(\alpha, \beta, \gamma)$ be such that there exist a traveling $1$-pulse solution. The eigenvalues $\hat{\lambda}$ associated with the stability of the traveling $1$-pulse solution to 
(\ref{eqn_hybrid}) is determined by 
\begin{align}
\label{hybrid_eigen}
\begin{aligned}
0 & = 
\left(  \frac{2 c \alpha \rho_{v+} \rho_{v-}}{\rho_{v+} - \rho_{v-}} \left( e^{\rho_{v-} h} - 1 \right)
- \frac{2 \alpha}{c\hat{\tau}(\kappa_{v+} - \kappa_{v-})}+ \frac{2 c \beta \rho_{w+} \rho_{w-}}{\rho_{w+} - \rho_{w-}} \left( e^{\rho_{w-} h} - 1 \right) \right.    \\  
 & \quad
\left.
- \frac{2 \beta}{c\hat{\theta}(\kappa_{w+} - \kappa_{w-})}
- \frac{\sqrt{2} \hat{\lambda}}{3} \right)   
 \left(  \frac{2 c \alpha \rho_{v+} \rho_{v-}}{\rho_{v+} - \rho_{v-}} \left( e^{-\rho_{v+} h} - 1 \right)
- \frac{2 \alpha }{c \hat{\tau}(\kappa_{v+} - \kappa_{v-})} \right.  \\  
& \quad  \left.
+  \frac{2 c \beta \rho_{w+} \rho_{w-}}{\rho_{w+} - \rho_{w-}} \left( e^{-\rho_{w+} h} - 1 \right)
- \frac{2 \beta }{c \hat{\theta}(\kappa_{w+} - \kappa_{w-})} 
- \frac{\sqrt{2} \hat{\lambda}}{3} \right) \\
&\quad- \frac{4}{c^2} \left( \frac{\alpha e^{\kappa_{v-} h}}{\hat{\tau} (\kappa_{v+} - \kappa_{v-})}
+  \frac{\beta e^{\kappa_{w-} h}}{\hat{\theta} (\kappa_{w+} - \kappa_{w-})} \right) 
\left( \frac{\alpha e^{-\kappa_{v+} h}}{\hat{\tau} (\kappa_{v+} - \kappa_{v-})}
+  \frac{\beta e^{-\kappa_{w+} h}}{\hat{\theta} (\kappa_{w+} - \kappa_{w-})} \right). 
\end{aligned}
\end{align}
\end{theorem}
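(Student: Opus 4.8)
The plan is to start from the eigenvalue problem (\ref{eigen_hybrid}) — the linearisation of the mixed ODE--PDE system (\ref{eqn_hybrid}) about the traveling $1$-pulse solution — and to reduce it, by eliminating the slow eigenfunctions $p$ and $q$, to a $2\times2$ homogeneous linear system for the interface amplitudes $(\psi_1,\psi_2)$; the condition for this system to have a nontrivial solution will be exactly (\ref{hybrid_eigen}). To obtain (\ref{eigen_hybrid}) I would write the perturbed interfaces as $l_i\mapsto l_i+\psi_i e^{\lambda t}$ and the perturbed slow fields as $V\mapsto\tilde V+e^{\lambda t}\hat p$, $W\mapsto\tilde W+e^{\lambda t}\hat q$: differentiating $\alpha V(l_i)+\beta W(l_i)+\gamma$ then produces both the ``advection'' contributions $c\tilde V_z(l_i)\psi_i$, $c\tilde W_z(l_i)\psi_i$ (from displacing the evaluation point) and the direct contributions $p(l_i)$, $q(l_i)$, while the delta sources in the $p$- and $q$-equations come from differentiating the piecewise-constant $U=F(\cdot-l_2)-F(\cdot-l_1)-1$ in $l_1,l_2$ — this is where the factor $2$ (from $F'=-2\delta$) and the factors of $c$, $\hat\tau$, $\hat\theta$ in the jump conditions enter. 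All of this, together with the profile (\ref{profileV}) and the interface derivatives (\ref{VVVV}), is already available from the discussion preceding the statement.

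First I would solve the last two equations of (\ref{eigen_hybrid}) for $p$ and $q$. Away from $l_1,l_2$ these are decoupled constant-coefficient ODEs with characteristic exponents $\kappa_{v\pm}$ (respectively $\kappa_{w\pm}$); boundedness keeps only $e^{\kappa_{v+}z}$ on $(-\infty,l_1)$ and only $e^{\kappa_{v-}z}$ on $(l_2,\infty)$, and the four remaining constants are pinned down by continuity at $l_1,l_2$ together with the prescribed jumps of $p_z,q_z$ there. Evaluating the result at $z=l_1$ and $z=l_2$ reproduces the stated linear expressions for $p(l_1),p(l_2),q(l_1),q(l_2)$ in terms of $(\psi_1,\psi_2)$, so that $p$ and $q$ are eliminated; I would likewise differentiate (\ref{profileV}) to confirm (\ref{VVVV}).

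Next I would substitute these expressions and (\ref{VVVV}) into the first two equations of (\ref{eigen_hybrid}) and collect the coefficients of $\psi_1$ and of $\psi_2$, obtaining a homogeneous system $M(\hat\lambda,h)(\psi_1,\psi_2)^{T}=0$. The coefficient of $\psi_i$ in the $i$-th interface equation bundles the advection term with the self-coupling part of $p(l_i),q(l_i)$ and the $\hat\lambda$-term and, after rearrangement, is precisely the corresponding bracketed factor in (\ref{hybrid_eigen}); the off-diagonal entries are $\tfrac{2\alpha e^{-\kappa_{v+}h}}{c\hat\tau(\kappa_{v+}-\kappa_{v-})}+\tfrac{2\beta e^{-\kappa_{w+}h}}{c\hat\theta(\kappa_{w+}-\kappa_{w-})}$ and $\tfrac{2\alpha e^{\kappa_{v-}h}}{c\hat\tau(\kappa_{v+}-\kappa_{v-})}+\tfrac{2\beta e^{\kappa_{w-}h}}{c\hat\theta(\kappa_{w+}-\kappa_{w-})}$. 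A nontrivial perturbation $(\psi_1,\psi_2)\neq(0,0)$ exists iff $\det M(\hat\lambda,h)=0$, and expanding the $2\times2$ determinant — noting that the product of the two off-diagonal entries is exactly the $\tfrac{4}{c^2}(\cdots)(\cdots)$ term — yields (\ref{hybrid_eigen}). As a built-in consistency check, at $\hat\lambda=0$ one has $\kappa_{v\pm}=\rho_{v\pm}$, $\kappa_{w\pm}=\rho_{w\pm}$ and the translational mode $(1,1,\tilde V_z,\tilde W_z)$ solves (\ref{eigen_hybrid}), so $\hat\lambda=0$ should (and will) be a root of (\ref{hybrid_eigen}).

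The hard part will not be conceptual but rather the bookkeeping: carrying every sign and every factor of $2$, $c$, $\hat\tau$, $\hat\theta$ correctly through the linearisation and through the solution of the forced boundary-value problems for $p$ and $q$, and then recognising that the two diagonal entries of $M$ coincide with the two bracketed factors in (\ref{hybrid_eigen}). Beyond the well-posedness of those forced problems (which is immediate — four constants against four matching conditions), there is no analytic obstruction, and the $\hat\lambda=0$ check provides a convenient guard against sign slips.
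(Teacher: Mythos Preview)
Your proposal is correct and follows essentially the same route as the paper: solve the forced boundary-value problems for $p$ and $q$ via the jump conditions to obtain $p(l_i),q(l_i)$ linearly in $(\psi_1,\psi_2)$, compute $\tilde V_z(l_i),\tilde W_z(l_i)$ from the profile, substitute into the two interface equations, and set the determinant of the resulting $2\times2$ system to zero. The paper presents exactly this derivation (with the same intermediate formulas for $p(l_i),q(l_i)$ and (\ref{VVVV})) and likewise notes the $\hat\lambda=0$ translational root as a check.
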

We denote the function defined in the right-hand side of the above equation 
by $\mathcal{G}(c, \hat{\theta}, \hat{\lambda})$. It is easily found that $\mathcal{G}(c, \hat{\theta}, 0) = 0$ holds for $\hat{\lambda}=0$, corresponding to the translation invariance. 

Next, we investigate the fate of the other real root of (\ref{hybrid_eigen}), that is, 
the zero eigenvalue corresponds to the saddle-node bifurcation as indicated in Fig. \ref{fig02}.  
At the turning point of the branch of traveling pulse solutions, the equation $\mathcal{G}(c, \hat{\theta}, \tilde{\lambda}) =0$ 
has double zero root. Then, both of $\displaystyle{ \frac{\partial \mathcal{G}}{\partial \tilde{\lambda}}  =0}$ 
and $\displaystyle{\mathcal{G}=0}$ holds at $\tilde{\lambda} = 0$. 
Expanding $\mathcal{G}(c, \hat{\theta}, \tilde{\lambda})$ with respect to $\tilde{\lambda}$,  we obtain  
\begin{align}
\label{hybrid_expand}
\begin{aligned}
\mathcal{G}(c, \hat{\theta}, \tilde{\lambda}) & =  \left(\frac{\alpha  e^{-\phi_v z^*}}{c \hat{\tau} \phi_v} 
+ \frac{\beta e^{-\phi_w z^*}}{c \hat{\theta} \phi_w} \right)
\left( - \frac{8 \alpha}{c^3 \hat{\tau} \phi_v^3} \left(  e^{z^*} + e^{-z^*} - 2 e^{-\phi_v z^*} \right. 
\right.  \\ &\quad
\left. - 2 \phi_v  z^* e^{- \phi_v z^*} \right) 
- \frac{8 \beta}{c^3 \hat{\theta} \phi_w^3} \left(  e^{z^*} + e^{-z^*} - 2 e^{-\phi_w z^*} - 2 \phi_w  z^* e^{- \phi_w z^*} \right) 
\\
& \quad
 \left. + \frac{2 \sqrt{2}}{3} \left(e^{z^*} + e^{-z^*}  \right)  \right)  \tilde{\lambda} 
+ {\mathcal O}(\tilde{\lambda}^2), 
\end{aligned}
\end{align}
where we replace the pulse width $h$ by $2 z^*$. 
Finally, we find that  $\displaystyle{ \frac{\partial \mathcal{G}}{\partial \tilde{\lambda}}  = 0}$ at $\tilde{\lambda} = 0$ 
coincides with the stability criterion of $\displaystyle{ \frac{\partial}{\partial c} J_c(\bar{Z}_{p})= 0 }$ (\ref{CJzero}). 
The eigenvalue $\tilde{\lambda}$ changes its sign from plus to minus at the turning point, 
corresponding to the change of $\displaystyle{ \frac{\partial}{\partial c} J_c(\bar{Z}_{p})}$ 
from negative to positive.  

\section{Traveling $2$-pulse solutions}
\label{SS:APPL}
In this section, we apply the methodology demonstrated in \S\ref{SS:PROF}-\S\ref{SS:ACTION} to the case of the right-going traveling $2$-pulse solutions and derive the results as stated in Theorem~\ref{TH:2P}, see also Fig.~\ref{fig01_NEW}(b). Furthermore, we will deduce the necessary condition for the existence of traveling $2$-pulse solutions as stated in Lemma~\ref{L:2P} from the existence condition (\ref{existence_criterion_two}) of Theorem~\ref{TH:2P}.

\subsection{The profile and action functional of a traveling $2$-pulse solution}
\label{SS:ACTION2}
For traveling $2$-pulse solutions we have to divide the spatial domain into five slow regions $I_s^{1,3,5,7,9}$ and four fast regions $I_f^{2,4,6,8}$. In particular,
\begin{eqnarray}
\label{SLOWFAST2}
\begin{array}{rl}
&I_s^1:=(-\infty,z_1-\se]\,, \,\, I_f^2:=(z_1-\se,z_1+\se)\,,  \\
& I_s^3:=[z_1+\se, z_2-\se]\,, \,\,
I_f^4:=(z_2-\se,z_2+\se)\,, \\
& I_s^5:=[z_2+\se, z_3-\se]\,, \,\,
I_f^6:=(z_3-\se,z_3+\se)\,, \\
& I_s^7:=[z_3+\se, z_4-\se]\,, \,\,
I_f^8:=(z_4-\se,z_4+\se)\,, \\
& I_s^{9}:=[z_4+\se, \infty)\,,  
\end{array}
\end{eqnarray}  
where $z_{1,2,3,4}$ are the locations of the four interfaces of a traveling $2$-pulse solution $\bar{Z}_{2p}$, 
that is, $\bar{U}(z_i)=0$ for $i \in \{1,2,3,4\}$, see \cite{vH_EXIS, vH_ACTION1}. 
We say that the widths between interfaces are given by $2 y_1 := z_2 - z_1$, 
$2 y_2 := z_3 - z_2$ and $2 y_3 := z_4 - z_3$, and, without loss of generality, we assume that $\displaystyle{ \sum_{i=1}^4z_i = 0 }$, see Fig. \ref{fig04_DOM}. 
\begin{figure}
 \centering
 \includegraphics[width=10cm]{./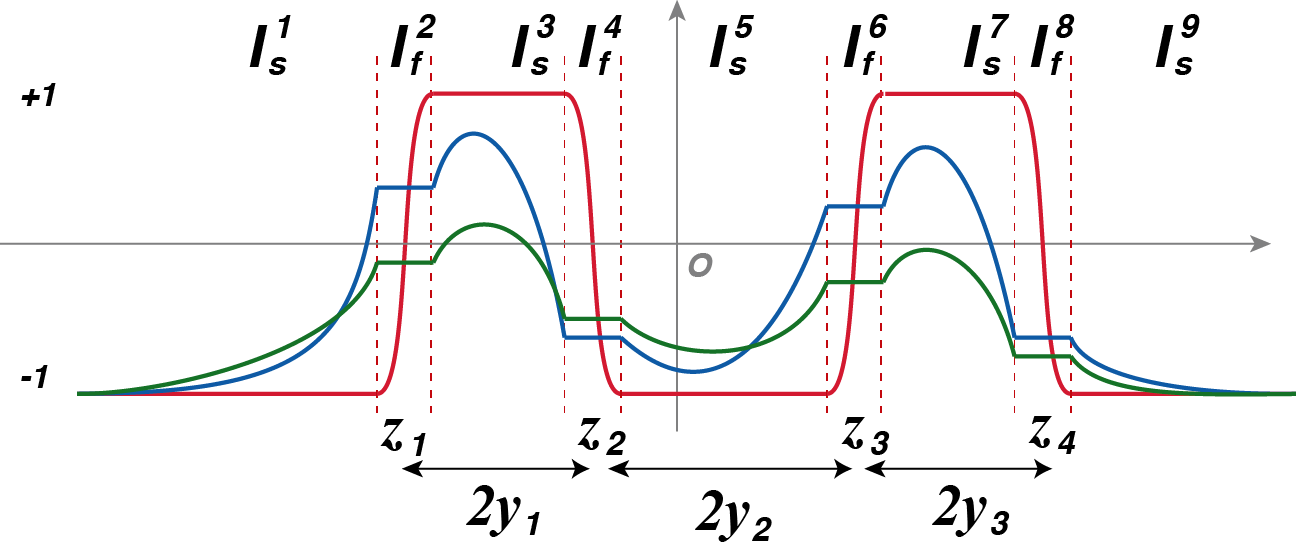}
  \caption{ 
    Schematic picture of the five slow region $I_s^{1,3,5,7,9}$ and four fast regions $I_f^{2,4,6,8}$ introduced in (\ref{SLOWFAST2}) and used in this article to study traveling $2$-pulse solutions.
}
  \label{fig04_DOM}
\end{figure}

As in \S\ref{S2}, a traveling $2$-pulse solution $\bar{Z}_{2p}$ is a solution to (\ref{comoving}) and we again set $\eps^2(\tau,\theta) = (\hat{\tau},\hat{\theta})=(1,D^2)$.
Upon using a regular expansion in $\eps$, $\bar{Z}_{2p}(z) = \bar{Z}_{0}(z) + \O(\eps)$ with $\bar{Z}_{0}=(\bar{U}_{0},\bar{V}_{0},\bar{W}_{0})$, we get that the $U$-component is to leading order given by
\begin{align*}
&\bar{U}_{0} (z)= \left\{
\begin{aligned}
  (-1)^{(i+1)/2}\,, \qquad& z\in I_s^i\,, \quad i \in \{1,3,5,7,9\}\,. \\
 (-1)^{1+i/2}\tanh{\left(\frac{z-z_{i/2}}{\sqrt2 \eps c}\right)}  \,, \qquad& z\in I_f^i\,, \quad i \in \{2,4,6,8\}\,.
\end{aligned}
\right.\, 
\end{align*}
Subsequently, we solve the linear equation for the $V$-component and this gives 
\begin{align}
\label{Vzero2}
&\bar{V}_{0}(z) = \left\{
\begin{aligned}
& \frac{ (1+\phi_v)}{\phi_v} e^{\frac{\phi_v-1}{2} z} \left(e^{-\frac{\phi_v-1}{2} z_1} - 
e^{-\frac{\phi_v-1}{2} z_2} + e^{-\frac{\phi_v-1}{2} z_3} - e^{-\frac{\phi_v-1}{2} z_4} \right) \\ &\qquad- 1 
\,, &z\in I_s^1\,,\\
&\frac{1}{\phi_v} - \frac{1+\phi_v}{\phi_v} e^{(1-\phi_v) y_1} \left( 1 - e^{(1-\phi_v) y_2} ( 1 - e^{(1-\phi_v) y_3} ) \right)  \,, 
& z\in  I_f^2\,,\\
& - \frac{1+\phi_v}{\phi_v} e^{\frac{\phi_v-1}{2} z} \left( e^{\frac{1-\phi_v}{2} z_2} - e^{\frac{1-\phi_v}{2} z_3} + e^{\frac{1-\phi_v}{2} z_4} \right)  & \\ 
&\qquad  + \frac{1-\phi_v}{\phi_v} e^{-\frac{1+\phi_v}{2} (z-z_1)} + 1 \,, & z\in I_s^3\,,\\
&\frac{1+\phi_v}{\phi_v} e^{(1-\phi_v) y_2} \left( 1 - e^{(1-\phi_v) y_3} \right)  + \frac{1-\phi_v}{\phi_v} e^{-(1+\phi_v) y_1} - \frac{1}{\phi_v}  \,, 
& z\in  I_f^4  \,, \\
& \frac{1+\phi_v}{\phi_v} e^{\frac{\phi_v-1}{2} z} \left( e^{\frac{1-\phi_v}{2} z_3} - e^{\frac{1-\phi_v}{2} z_4} \right)  & \\ 
&\qquad 
+ \frac{1-\phi_v}{\phi_v} e^{-\frac{1+\phi_v}{2} z} \left( e^{\frac{1+\phi_v}{2} z_1} - e^{\frac{1+\phi_v}{2} z_2} \right) - 1\,,  &z\in I_s^5\,,\\
&  \frac{1}{\phi_v} - \frac{1+\phi_v}{\phi_v} e^{(1-\phi_v) y_3} - \frac{1-\phi_v}{\phi_v} e^{-(1+\phi_v) y_2}  \left( 1 - e^{-(1+\phi_v) y_1} \right)   \,,  & z\in  I_f^6  \,, \\
& -\frac{1+\phi_v}{\phi_v} e^{\frac{\phi_v-1}{2} (z-z_4)} + 1  &  \\ 
&\qquad  
+ \frac{1-\phi_v}{\phi_v} e^{-\frac{1+\phi_v}{2} z} \left(  e^{\frac{1+\phi_v}{2} z_1} - e^{ \frac{1+\phi_v}{2} z_2} + e^{ \frac{1+\phi_v}{2} z_3} \right) \,,  &z\in I_s^7\,,\\
& \frac{1- \phi_v}{\phi_v} e^{-(1+\phi_v) y_3} \left( 1 - e^{-(1+\phi_v) y_2}( 1 - e^{-(1+\phi_v) y_1} )  \right)  - \frac{1}{\phi_v}   \,,  & z\in  I_f^8  \,, \\
& \frac{(1-\phi_v)}{\phi_v} e^{-\frac{1+\phi_v}{2} z} \left( e^{\frac{1+\phi_v}{2} z_1} - e^{\frac{1+\phi_v}{2} z_2} + e^{\frac{1+\phi_v}{2} z_3} - e^{\frac{1+\phi_v}{2} z_4} \right) - 1  \,, &z\in I_s^{9} \,, 
\end{aligned}
\right.
\end{align}
where $\displaystyle{\phi_v = \sqrt{1 + \frac{4}{c^2}}}$. 
The profile of $\bar{W}_{0}(z)$ is again given by replacing $\phi_v$ with $\displaystyle{\phi_w =}$ $\displaystyle{ \sqrt{1 + \frac{4}{c^2 D^2}}}$ in $\bar{V}_0(x)$ (\ref{Vzero2}).
A typical profile of a traveling $2$-pulse solutions is given in Fig. \ref{fig01_NEW}(b).
\begin{lemma}
The action functional $J_c$ (\ref{Jc}) of a traveling $2$-pulse solution $\bar{Z}_{2p}$ 
is given by
\begin{align}
\label{J2zero}
\begin{aligned}
 \frac{J_c (\bar{Z}_{2p})}{\varepsilon}  &= 
 \frac{2 \alpha}{\phi_v}  f_1(z_1, z_2, z_3, z_4; \phi_v) +  \frac{2 \beta}{\phi_w} f_1(z_1, z_2, z_3, z_4; \phi_w)  \\
& 
\qquad + 2 \ga (-e^{z_1} + e^{z_2}-e^{z_3} + e^{z_4})
+\frac{2 \sqrt{2}}{3} c (e^{z_1} + e^{z_2} + e^{z_3} + e^{z_4}) + \mathcal{O}(\se)
  \,, 
\end{aligned}
\end{align}
with 
\begin{align*}
f_1(z_1, z_2, z_3, z_4, \phi) & =  
e^{z_1} \left( - 1 + e^{(1-\phi) y_1} \left(1  - e^{(1-\phi) y_2} ( 1 - e^{(1-\phi) y_3} ) \right)  \right)  \\
& \quad +
e^{z_2} \left( - 1 + e^{(1-\phi) y_2} - e^{(1-\phi) (y_2+y_3)} + e^{-(1+\phi) y_1 }  \right)  \\
& \quad +
e^{z_3} \left( - 1 + e^{(1-\phi) y_3} - e^{-(1+\phi) (y_1+y_2)} + e^{-(1+\phi) y_2 }  \right)  \\
& \quad+
e^{z_4} \left( - 1 + e^{-(1+\phi) y_3} \left( 1 - e^{-(1+\phi) y_2} ( 1 - e^{-(1+\phi) y_1} ) \right)  \right) \,,
\end{align*}
where we recall that $2y_i=z_{i+1}-z_i$. 
\end{lemma}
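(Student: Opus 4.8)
The strategy mirrors the computation already carried out for the traveling $1$-pulse solution in the proof of the lemma giving \eqref{Jzero}: split the definite integral $J_c(\bar{Z}_{2p})$ over the nine regions \eqref{SLOWFAST2},
\[
J_c(\bar{Z}_{2p}) = \int_{I_s^1} + \int_{I_f^2} + \int_{I_s^3} + \int_{I_f^4} + \int_{I_s^5} + \int_{I_f^6} + \int_{I_s^7} + \int_{I_f^8} + \int_{I_s^9}\,,
\]
and evaluate each contribution to leading order in $\eps$ using the explicit leading-order profiles $\bar{U}_0$, $\bar{V}_0$ from \eqref{Vzero2}, and $\bar{W}_0$ (the latter obtained by $\phi_v \mapsto \phi_w$).

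\textbf{Step 1: the slow contributions.} On each slow region $I_s^i$ ($i\in\{1,3,5,7,9\}$) the fast component is constant, $\bar{U}_0 = \pm 1$, so the $F(u)-F(U_b)$ and $\tfrac{\eps^2 c^2}{2}u_z^2$ terms vanish to leading order and only the $\O(\eps)$ terms $\tfrac{\eps\al}{2}(u\mathcal{L}_{1c}u - U_b^2) + \tfrac{\eps\be}{2}(u\mathcal{L}_{2c}u - U_b^2) + \eps\ga(u-U_b)$ survive. Since $u\mathcal{L}_{1c}u = u\bar{V}_0$ there and $U_b^2 = 1 + \O(\eps)$, each slow integral reduces to $\eps\int_{I_s^i} e^z\big(\tfrac{\al}{2}(\bar{U}_0\bar{V}_0 - 1) + \tfrac{\be}{2}(\bar{U}_0\bar{W}_0 - 1) + \ga(\bar{U}_0 + 1)\big)\,dz$; these are elementary exponential integrals in $z$ because $\bar{V}_0$ is a sum of exponentials $e^{\pm\frac{1\mp\phi_v}{2}z}$ on each slow region. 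Summing the five slow contributions produces, after collecting terms, the expression $\tfrac{2\al}{\phi_v}f_1(z_1,z_2,z_3,z_4;\phi_v) + \tfrac{2\be}{\phi_w}f_1(\ldots;\phi_w) + 2\ga(-e^{z_1}+e^{z_2}-e^{z_3}+e^{z_4})$. The key algebraic bookkeeping is tracking the exponentials $e^{\frac{1+\phi_v}{2}z_i}$ appearing in $\bar{V}_0$ and checking that, after multiplication by $e^z$ and integration, the $e^{z_i}$-prefactor structure of $f_1$ emerges; using $z_{i+1}-z_i = 2y_i$ and $\sum z_i = 0$ to simplify is essential. I would cross-check the $y_2\to\infty$ limit against twice the $1$-pulse result \eqref{Jzero} as a consistency test.

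\textbf{Step 2: the fast contributions.} On each fast region $I_f^i$ ($i\in\{2,4,6,8\}$) rescale $\xi = (z - z_{i/2})/\eps$; the integrand is $e^{\eps\xi + z_{i/2}}\big(\tfrac14\sech^4 + \tfrac14\tanh^4 - \tfrac12\tanh^2 + \tfrac14\big)$ evaluated at the appropriate argument, exactly as in the $1$-pulse case, plus $\O(\eps)$ corrections from the $V$- and $W$-coupling which are negligible since $\bar{V}_0,\bar{W}_0$ are constant across a fast layer to leading order. Each fast integral therefore contributes $\tfrac{2\sqrt2}{3}\eps c\, e^{z_{i/2}}$ to leading order (the factor $e^{z_{i/2}}$ coming from $e^{\eps\xi}\to 1$ and the overall $e^{z_{i/2}}$), and summing over $i\in\{2,4,6,8\}$ gives $\tfrac{2\sqrt2}{3}\eps c(e^{z_1}+e^{z_2}+e^{z_3}+e^{z_4})$. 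Combining Steps 1 and 2 and dividing by $\eps$ yields \eqref{J2zero}.

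\textbf{Main obstacle.} There is no conceptual difficulty — the analytic structure is identical to the $1$-pulse lemma — but the bottleneck is the sheer volume of exponential bookkeeping in Step 1: there are five slow regions, each with a $\bar{V}_0$ and $\bar{W}_0$ built from three or four exponential terms with arguments involving several of the $z_i$, and one must carefully integrate $\int e^{z}\cdot(\text{exponential in }z)\,dz$ over finite intervals with endpoints $z_i\pm\se$, drop the $\O(\se)$ boundary pieces, and then collect the resulting $16$-plus terms into the compact form $f_1$. The risk is purely one of algebraic errors; I would manage this by organizing the computation region-by-region, factoring out $e^{z_i}$ early, and verifying the final answer against both the $y_2\to\infty$ reduction to the $1$-pulse formula and (numerically) against the profiles plotted in Fig.~\ref{fig01_NEW}(b).
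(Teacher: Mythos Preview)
Your proposal is correct and follows exactly the approach the paper takes: split $J_c(\bar{Z}_{2p})$ over the nine regions of \eqref{SLOWFAST2} and evaluate the slow and fast contributions separately to leading order, mirroring the $1$-pulse computation. The paper in fact omits all of the computational details you outline in Steps~1--2, so your write-up is already more explicit than the published proof; the consistency check against the $y_2\to\infty$ limit is a sensible safeguard for the bookkeeping you correctly identify as the only real obstacle.
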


\begin{proof}
This follows directly from a straightforward, but tedious, computation after splitting the indefinite integral $J_c$ (\ref{Jc}) into the nine regions   
\begin{align*}
J_c (\bar{Z}_{2p}) = \sum_{i=1}^5  \int_{ I_s^{2i-1}} dx +  \sum_{i=1}^4 \int_{ I_f^{2i}} dx.
\end{align*}
We omit the details of the computations.
\end{proof}
A traveling $2$-pulse solution will satisfy $J_c(\bar{Z}_{2p}) = 0$ and $\displaystyle{\frac{\delta J_c}{\delta u} \Phi =0}$. 
Substituting $J_c(\bar{Z}_{2p}) = 0$ from (\ref{J2zero}) into 
$\displaystyle{\frac{\partial J_c}{\partial y_1} =0}$, we arrive, to leading order, at 
\begin{align}
\label{DJ2zero1}
\begin{aligned}
\frac{1}{\varepsilon} \frac{\partial J_c}{\partial y_1}  &= 
\frac{2 \alpha }{\phi_v} f_2(z_1, z_2, z_3, z_4; \phi_v) + \frac{2 \beta }{\phi_w} f_2(z_1, z_2, z_3, z_4; \phi_w) \\
& \quad +
 \ga ( 3 e^{z_1} + e^{z_2} - e^{z_3} + e^{z_4}) +\frac{\sqrt{2}}{3} c (-3 e^{z_1} + e^{z_2} + e^{z_3} + e^{z_4}) 
 \\
& =  4 e^{z_1} \left( \frac{\alpha}{\phi_v} \left( 1 - (1+\phi_v) e^{(1-\phi_v) y_1} 
\left( 1 - e^{(1-\phi_v) y_2} ( 1 - e^{(1-\phi_v) y_3} ) \right) \right) \right.  \\
& \quad \left. + 
\frac{\beta}{\phi_w} \left( 1 - (1+\phi_w)e^{(1-\phi_w) y_1}  \left(  1 - e^{(1-\phi_w) y_2} ( 1 -  e^{(1-\phi_w) y_3} ) \right) \right) 
+ \ga - \frac{\sqrt{2}}{3} c \right) \, \\
& = \displaystyle{ 4 e^{z_1} \left( \alpha \bar{V}_0(z_1) + 
\beta \bar{W}_0(z_1) + \ga - \frac{\sqrt{2}}{3} c \right) = 0 } \,, 
\end{aligned}
\end{align}
where 
\begin{align*}
f_2(z_1, z_2, z_3, z_4; \phi) & = 
e^{z_1} \left( 3 -  (1 + 2 \phi) e^{(1-\phi) y_1} \left( 1 - e^{(1-\phi) y_2} ( 1 -  e^{(1-\phi) y_3} ) \right)  \right)  \\
& \quad  +
e^{z_2} \left( - 1 + e^{(1-\phi) y_2} - e^{(1-\phi) (y_2+y_3)} - (1 + 2 \phi)  e^{-(1+\phi) y_1 }  \right)   \\
& \quad  +
e^{z_3} \left( - 1 + e^{(1-\phi) y_3} + (1 + 2 \phi) e^{-(1+\phi) (y_1+y_2)} + e^{-(1+\phi) y_2 }  \right)   \\
& \quad + 
e^{z_4} \left( - 1 - (1 + 2 \phi) e^{-(1+\phi) (y_1+y_2+y_3)} - e^{-(1+\phi) (y_2+y_3)} \right.\\
&\quad \left.+ e^{-(1+\phi) y_3 }  \right). 
\end{align*}

We get the remaining three existence conditions from substituting $J_c(\bar{Z}_{2p}) = 0$ into $\displaystyle{\frac{\partial J_c}{\partial y_i} =0}$, where $i \in \{2,3\}$. In particular, 
from substituting $J_c(\bar{Z}_{2p}) = 0$ into $\displaystyle{\frac{\partial J_c}{\partial y_3} =0}$, we get to leading order
\begin{align}
\label{DJ2zero3}
 \frac{1}{\varepsilon} \frac{\partial J_c}{\partial y_3} 
 =  4 e^{z_4} \left( \alpha \bar{V}_0(z_4) + 
\beta \bar{W}_0(z_4) + \ga + \frac{\sqrt{2}}{3} c \right) = 0\,. 
\end{align}
Substituting (\ref{DJ2zero1}) into $\displaystyle{\frac{\partial J_c}{\partial y_2} =0}$ gives to leading order
\begin{align}
\label{DJ2zero2}
\begin{aligned}
 \frac{1}{\varepsilon} \frac{\partial J_c}{\partial y_2}&= 
 \frac{2 \alpha }{\phi_v} f_3(z_1, z_2, z_3, z_4; \phi_v) + \frac{2 \beta }{\phi_w} f_3(z_1, z_2, z_3, z_4; \phi_w)  \\
& \quad +
 2 \ga ( e^{z_1} - e^{z_2} - e^{z_3} + e^{z_4}) +\frac{2 \sqrt{2}}{3} c (- e^{z_1} - e^{z_2} + e^{z_3} + e^{z_4}) 
 \,, \\
& =  4 e^{z_2} \left( \frac{\alpha}{\phi_v} \left( -1 + (1+\phi_v) e^{(1-\phi_v) y_2} \left(  1 - e^{(1-\phi_v) y_3} \right) 
+ (1 - \phi_v) e^{-(1+\phi_v) y_1} \right) \right.  \\
&   \left. \quad+  
 \frac{\beta}{\phi_w} \left( -1 + (1+\phi_w) e^{(1-\phi_w) y_2}  \left( 1 - e^{(1-\phi_w) y_3} \right) 
+ (1 - \phi_w) e^{-(1+\phi_w) y_1} \right)  
\right.  \\
&   \left. \quad+ \ga + \frac{\sqrt{2}}{3} c \right)  \\
& =  4 e^{z_2} \left( \alpha \bar{V}_0(z_2) + 
\beta \bar{W}_0(z_2) + \ga + \frac{\sqrt{2}}{3} c \right) = 0 , 
\end{aligned}
\end{align}
where
\begin{align*}
f_3(z_1, z_2, z_3, z_4; \phi) &= 
 e^{z_1} \left( 1 -  e^{(1-\phi) y_1} + \phi e^{(1-\phi) (y_1+y_2)} - \phi e^{(1-\phi) (y_1 + y_2 + y_3) }  \right)  \\
& \quad +
e^{z_2} \left( 1 - \phi e^{(1-\phi) y_2} + \phi e^{(1-\phi) (y_2+y_3)} - e^{-(1+\phi) y_1 }  \right)    \\
& \quad +
e^{z_3} \left( - 1 + e^{(1-\phi) y_3} + \phi e^{-(1+\phi) (y_1+y_2)} - \phi e^{-(1+\phi) y_2 }  \right)   \\
& \quad +
e^{z_4} \left( - 1 - \phi e^{-(1+\phi) (y_1+y_2+y_3)} + \phi e^{-(1+\phi) (y_2+y_3)} + e^{-(1+\phi) y_3 }  \right).   
\end{align*}
Similarly, substituting (\ref{DJ2zero3}) into $\displaystyle{\frac{\partial J_c}{\partial y_2} =0}$, 
we get  
\begin{eqnarray} \label{DJ2zero4}
\frac{1}{\varepsilon} \frac{\partial J_c}{\partial y_2} &= &
\displaystyle{ 4 e^{z_3} \left( \alpha \bar{V}_0(z_3) + 
\beta \bar{W}_0(z_3) + \ga - \frac{\sqrt{2}}{3} c \right) } = 0.
\end{eqnarray}
In other words, the existence conditions for a traveling $2$-pulse solution in terms of the four undetermined variables $y_1, y_2, y_3$ and $c$ are given by (\ref{DJ2zero1})-(\ref{DJ2zero4}),
which coincides with (\ref{existence_criterion_two}).
Also, observe the difference in the sign in front of the $\displaystyle{\frac{\sqrt2}{3}c}$-term in (\ref{DJ2zero1})/(\ref{DJ2zero3}) and (\ref{DJ2zero2})/ (\ref{DJ2zero4}). This difference is due to the fact that $y_1$ and $y_3$ are related the a front ($U$ jumps from $-1$ to $+1$), while $y_2$ is related the a back ($U$ jumps from $+1$ to $-1$), see also Fig.~\ref{fig01_NEW}(b).

By solving (\ref{DJ2zero1})-(\ref{DJ2zero4})/(\ref{existence_criterion_two}) for $y_1, y_2, y_3$ and $c$, we get the solution branches with respect to $\gamma$ as shown in Fig. \ref{fig05}. 
At $\gamma \approx 0.873$, the traveling $2$-pulse solutions are emanated from 
the standing $2$-pulse solutions (as studied in \cite{vH_EXIS, vH_ACTION1, vH_STAB}) in a subcritical manner. 
\begin{figure}[t!]
 \centering
 \includegraphics[width=13cm]{./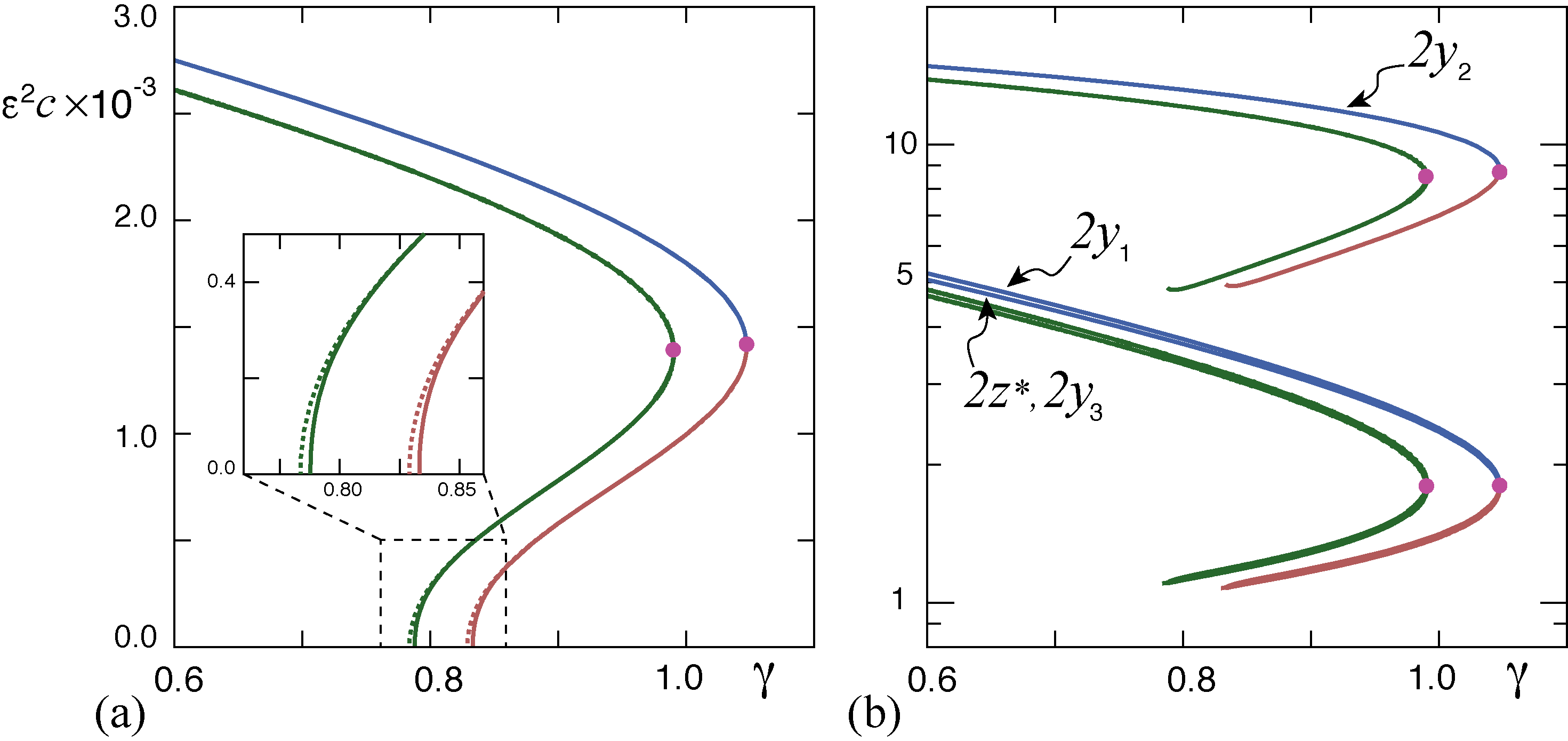}
  \caption{ The bifurcation diagram of a traveling $2$-pulse and $1$-pulse solution with respect to $\gamma$. 
  The system parameters are set to $(\al, \be, D^2) = (4, -1, 3)$. 
     The vertical axis in (a) is the propagation velocity $c$ 
     obtained by solving the existence conditions of (\ref{existence_criterion_two})
     and subsequently rescaled to the original scale of $\eps^2 c$ to compare with the numerical results. The numerical results are indicated by green curve and they are obtained by numerical continuation of the original PDE (\ref{FHNd}) with $\eps = 0.025$, see Appendix~\ref{SS:NUM} for more details on the numerical continuation. Blue and red curves indicate the solution branches obtained analytically with $\displaystyle{ \frac{\partial}{\partial c} J_c(\bar{Z}_{2p}) >0}$ and  $<0$, respectively. The inset shows the magnified figure around the onset of the solution branches. Solid and dotted curves indicate the branch curves for traveling $2$-pulse and $1$-pulse solutions. The differences between them appear only around the locations of their subcritical bifurcation points from the stationary pulse solutions.  
     The solid disks indicate the locations of turning points of solution branches. 
    The vertical axis in (b) are in log scale and depict the pulse widths $2 y_1, 2 y_2$ and $2 y_3$ for the traveling $2$-pulse solution, 
    and $2 z^*$ for the traveling $1$-pulse solutions, respectively. Again, the blue and red curves are
    obtained from (\ref{existence_criterion_two}) (and rescaled by $2 y_1/c, 2 y_2/c, 2 y_3/c$ and $2 x^*/ c$, respectively) while the green curves are obtained 
    by numerical continuation of the PDE (\ref{FHNd}).
    The curve $2 z^*$ for the traveling $1$-pulse solution width almost coincides with that of $2 y_3$ for the traveling $2$-pulse solution. 
     }
  \label{fig05}
\end{figure}
We also observe that the solution branch of the traveling $2$-pulse solutions has a turning point.
We take the derivative of the action functional $J_c$ with respect to $c$ to detect this turning point. We get
\begin{align*}
\begin{aligned}
 \frac{1}{\varepsilon} \frac{\partial J_c}{\partial c}  & =   
- \frac{8 \alpha}{c^3 \hat{\tau} \phi_v^3} f(z_1, z_2, z_3, z_4; \phi_v) 
- \frac{8 \beta}{c^3 \hat{\theta} \phi_w^3} f(z_1, z_2, z_3, z_4; \phi_w)   \\
& \quad
+\frac{2 \sqrt{2}}{3} (e^{z_1} + e^{z_2} + e^{z_3} + e^{z_4})  \,, 
\end{aligned}
\end{align*}
which coincides with (\ref{SN2}) and where $f$ is given in (\ref{func4}). 
By solving $\displaystyle{ \frac{\partial}{\partial c} J_c(\bar{Z}_{2p})= 0}$ combined with $J_c(\bar{Z}_{2p}) = 0$ and 
$\displaystyle{\frac{\partial}{\partial y_i} J_c(\bar{Z}_{2p})=0}$, we can detect 
the turning point of solution branch curve as $\gamma \approx 1.04724$ with $(y_1, y_2, y_3, c) \approx (2.03023, 9.91354, 1.99716,$ $2.25153)$ 
for  $(\al, \be, D^2) = (4, -1, 3)$, see Figure \ref{fig05}. 
Therefore, the traveling $2$-pulse solution, as long as the remaining small eigenvalues coming from the essential spectrum still have negative real part, see \cite{vH_STAB}, \S\ref{SS:COL} and Remark~\ref{R:0}, recovers their stability via saddle-node bifurcation as the derivative $\displaystyle{ \frac{\partial}{\partial c} J_c(\bar{Z}_{2p})}$ changes its sign 
from minus to plus at the turning point of the solution branch. 
Combining all of the above now gives the results as stated in Theorem~\ref{TH:2P}.

\subsection{A necessary condition for the existence of a traveling $2$-pulse solution}
\label{SS:EXIST2}
We finish this section by looking at the positive solutions $y_1$, $y_2$ and $y_3$ of the existence condition (\ref{existence_criterion_two}) to derive the necessary condition $\al\be<0$ for the existence of traveling $2$-pulse solutions (as stated in lemma~\ref{L:2P}). 
To do so, it is insightful to first look at the special parameter choice $\hat{\theta} = \hat{\tau} = 1$, that is, $D=1$\footnote{For $D=1$, (\ref{FHNd}) effectively reduces to the $2$-component model (\ref{FHN2}).}. 
The existence condition  (\ref{existence_criterion_two}) reduces to 
\begin{align*}
(\alpha + \beta) \bar{V}_i + \gamma - \frac{\sqrt{2}}{3} c = 0 \,, \quad i=1, 3, & & (\alpha + \beta)\bar{V}_j + \gamma + \frac{\sqrt{2}}{3} c = 0 \,, \quad j=2, 4.\end{align*}
Subtracting the $\bar{V}_2$ from the $\bar{V}_1$ equation yields 
\begin{align*}
\frac{2 \sqrt{2}}{3} c & = (\alpha + \beta) (\bar{V}_1 - \bar{V}_2)  \\ 
 & =   \frac{2 (\alpha + \beta)}{\phi_v} 
 \left( 1 - g_1(y_1, \phi_v) - \frac{1 + \phi_v}{2} 
 (1 - e^{(1-\phi_v) y_1} ) e^{(1-\phi_v) y_2} (1 - e^{(1-\phi_v) y_3} ) \right) 
\,, 
\end{align*}
where $\displaystyle{ g_1(y,\phi) = e^{-\phi y} (\cosh y + \phi \sinh y) }$ is a monotonically decreasing function with $\lim_{y \to + \infty} g_1(y, \phi) = 0$ for $\phi>1$.
Similarly, subtracting the $\bar{V}_4$ from the $\bar{V}_3$ equation yields
\begin{align*}
 \frac{2 \sqrt{2}}{3} c & = (\alpha + \beta) (\bar{V}_3 - \bar{V}_4)  \\ 
 & =   \frac{2 (\alpha + \beta)}{\phi_v} 
 \bigg( 1 - g_1(y_3, \phi_v) \\& \left. \quad- \frac{1 - \phi_v}{2} 
 (1 - e^{-(1+\phi_v) y_1} ) e^{-(1+\phi_v) y_2} (1 - e^{-(1+\phi_v) y_3} ) \right) 
 .
\end{align*}
Upon equating the two previous expressions, and recalling that $\phi_v>1$, we deduce that for $\hat{\theta}=1$ we necessarily have
\begin{align*}
g_1(y_3, \phi_v) - g_1(y_1, \phi_v)  &=
\frac{1 + \phi_v}{2} 
 (1 - e^{(1-\phi_v) y_1} ) e^{(1-\phi_v) y_2} (1 - e^{(1-\phi_v) y_3} ) \\&\quad  - \frac{1 - \phi_v}{2} 
 (1 - e^{-(1+\phi_v) y_1} ) e^{-(1+\phi_v) y_2} (1 - e^{-(1+\phi_v) y_3} )>0\,,  
\end{align*}
from which it follows that $y_1 >y_3$. 
\begin{figure}
 \centering
 \includegraphics[width=10cm]{./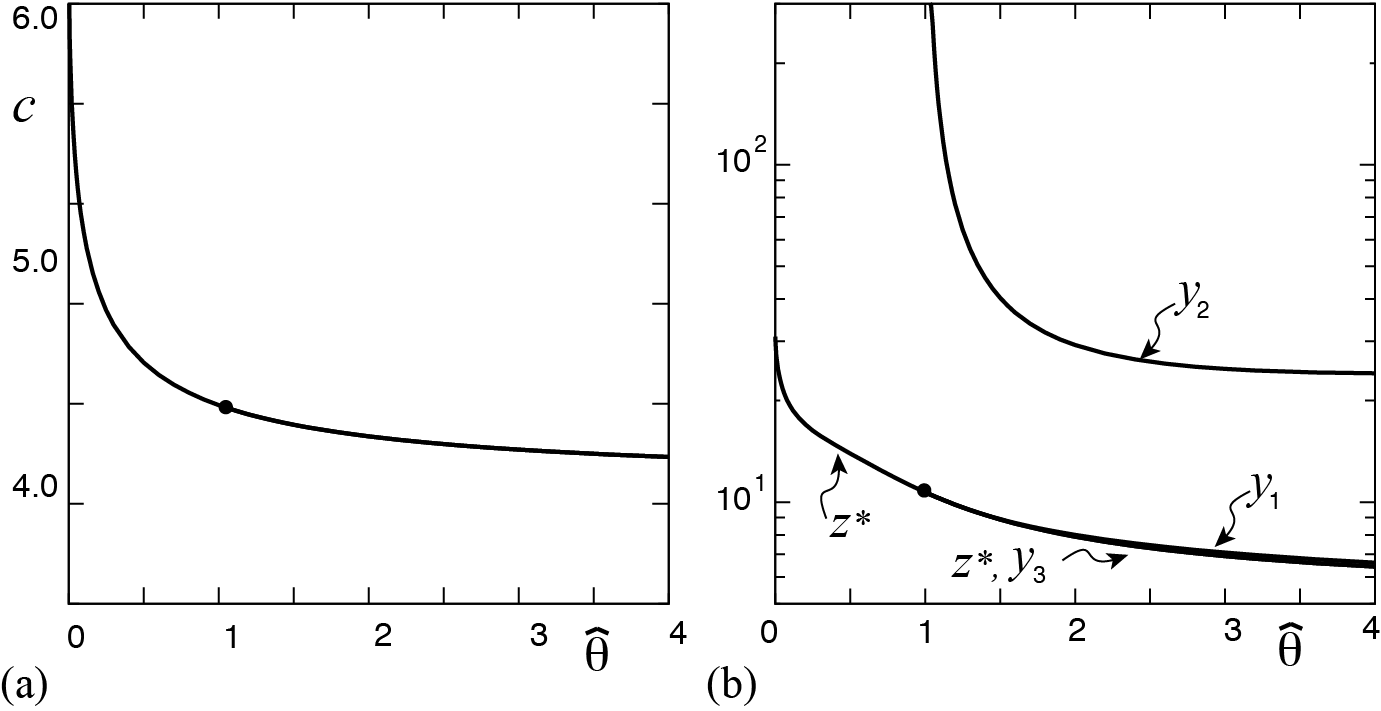}
  \caption{The bifurcation diagram of traveling $1$-pulse and $2$-pulse solutions with respect to $\hat{\theta}$, i.e., $D^2$, obtained from analytic results of Theorem~\ref{TH:1P} and Theorem~\ref{TH:2P}. The other system parameters are set to $(\al, \be, \gamma, \hat{\tau}) = (4, -1, 0.8,1)$. 
In panel (a), the vertical axis is the scaled propagating velocity $c$ and  
we observe that for $\hat{\theta}>1$ the propagating velocity $c$ of the traveling $1$-pulse and $2$-pulse solutions nearly coincide, while traveling $2$-pulse solutions do not exist for $\hat{\theta}<1$.
In panel (b), the vertical axis are the scaled pulse widths: $y_1, y_2, y_3$ for the $2$-pulse solutions and $z^*$ for the $1$-pulse solutions.  
We observe that the pulse width $z^*$ of the traveling $1$-pulse solutions nearly coincides with the pulse widths $y_1$ and $y_3$ of the traveling $2$-pulse solutions. This is not a surprise since $y_2$ is larger than $y_1$ and $y_3$, and the existence condition for the traveling $2$-pulse solution (\ref{existence_criterion_two}) reduces to leading order to twice the existence condition for a traveling $1$-pulse solution (\ref{existence_criterion_one}) -- once for $y_1$ and once for $y_3$ -- upon expanding (\ref{existence_criterion_two}) for large $y_2$.
The solid disks indicate the point $\hat{\theta}=1$ 
  where the solution branch of the $2$-pulse solutions merges into that for the $1$-pulse solutions. 
  }
  \label{fig06}
\end{figure}

Similarly, we also get  
\begin{align*}
- 2 \gamma & = (\alpha + \beta) (\bar{V}_1 + \bar{V}_2)  \\ 
 & = \frac{2 (\alpha + \beta)}{\phi_v} 
 \left( \frac{1 + \phi_v}{2} 
 (1 + e^{(1-\phi_v) y_1} ) e^{(1-\phi_v) y_2} (1 - e^{(1-\phi_v) y_3} )
 - g_2(y_1, \phi_v)  \right)
 \end{align*}
 and
 \begin{align*}
- 2 \gamma & = (\alpha + \beta) (\bar{V}_3 + \bar{V}_4)  \\ 
 & = \frac{2 (\alpha + \beta)}{\phi_v} 
 \left( - \frac{1 - \phi_v}{2} 
 (1 + e^{-(1+\phi_v) y_3} ) e^{-(1+\phi_v) y_2} (1 - e^{-(1+\phi_v) y_1} ) 
 - g_2(y_3, \phi_v) \right) , 
\end{align*}
where $\displaystyle{ g_2(y,\phi) = e^{-\phi y} (\sinh y + \phi \cosh y) }$ 
is a monotonically decreasing function with 
$\lim_{y \to + \infty} g_2(y, \phi) = 0$ for $\phi>1$. 
Subtracting the above two expressions gives
$
0 = (\al+\be)(\bar{V}_1 + \bar{V}_2-\bar{V}_3 - \bar{V}_4)
$, which implies, after some algebra, that
\begin{align*}
g_1(y_2, \phi_v) - g_1(y_1+y_2+y_3, \phi_v) &+ g_2(y_1 + y_2, \phi_v) - g_2(y_1, \phi_v)\\
&+ g_2(y_3, \phi_v)  - g_2(y_2 + y_3, \phi_v) = 0\,.
\end{align*} 
We know that $g_1(y_2, \phi_v) - g_1(y_1+y_2+y_3, \phi_v)>0$, since $g_1$ is a monotonically decreasing function and $y_i>0$ by construction. Thus, we necessarily have 
\begin{align} \label{CON}
g_2(y_1, \phi_v)  - g_2(y_1 + y_2, \phi_v) > g_2(y_3, \phi_v) - g_2(y_2 + y_3, \phi_v).
\end{align}
 The fact that $\displaystyle{ \frac{\partial^2 g_2}{\partial y^2}  = e^{-\phi y} (\phi^2-1) (\phi \cosh y  - \sinh y)> 0 }$ 
now yields that we need that $y_1 < y_3$ for (\ref{CON}) to hold. This contradicts the previous observation that $y_1 >y_3$. 
In other words, (\ref{existence_criterion_two}) has no solution for $\hat{\theta} = 1$.
Figure~\ref{fig06} shows the bifurcation diagrams and behavior of the solution branches for traveling $1$-pulse and $2$-pulse solutions upon changing $\hat{\theta}$. 
We observe that the traveling $2$-pulse solutions disappear at $\hat{\theta} = 1$ and merge into 
the solution branch of the traveling $1$-pulse solutions. 
We also observe that 
the distance $y_2$ between the two pulses of a traveling $2$-pulse solution diverges
as $\hat{\theta} \to 1^+$, see the $y_2$-branch in panel (b) of Figure~\ref{fig06}. That is, 
a traveling $2$-pulse solution splits into two traveling $1$-pulse solutions as 
$\hat{\theta} \to 1^+$.

Finally, 
we are interested in 
parameter combinations $(\alpha, \beta, \gamma,D)$ 
such that (\ref{existence_criterion_two}) has positive finite solutions 
$y_1, y_2$ and $y_3$ for $D^2 = \hat{\theta} \neq \hat{\tau} = 1$. 
By adding and subtracting, (\ref{existence_criterion_two}) can be transformed 
into \begin{align}
\label{E2P}\alpha (\bar{V}_1 +\bar{V}_2 - \bar{V}_3 - \bar{V_4} )
+  \beta (\bar{W}_1 +\bar{W}_2 - \bar{W}_3 - \bar{W_4} ) = 0\,,\end{align} with 
\begin{align}
\label{2Pab}
\begin{aligned}
\bar{V}_1 +\bar{V}_2 - \bar{V}_3 - \bar{V_4} 
& =  \frac{2}{\phi_v}
\left( g_1(y_2, \phi_v) - g_1(y_1+y_2+y_3, \phi_v) \right.  \\
& \quad \left. + g_2(y_1 + y_2, \phi_v) - g_2(y_1, \phi_v) - g_2(y_2 + y_3, \phi_v) 
+ g_2(y_3, \phi_v) \right), 
\end{aligned}
\end{align} 
and $\bar{W}_1 +\bar{W}_2 - \bar{W}_3 - \bar{W_4}$ is obtained from (\ref{2Pab}) by replacing $\phi_v$ by $\phi_w$.
 These expressions are positive for $y_1 > y_3$ and negative for $y_1 < y_3$. Thus, (\ref{E2P}) has no solutions if $\alpha \beta > 0$ and 
 we can thus conclude that a necessary condition for the existence of 
traveling $2$-pulse solutions is $\alpha \beta < 0$. 
In other words, for $\alpha \beta > 0$ 
the only potential solutions of (\ref{existence_criterion_two}) are $y_2$ is infinite and 
$y_1 = y_3$, that is, there only exist 
traveling $1$-pulse solutions.
This completes the proof of Lemma~\ref{L:2P}.

\section{Concluding remarks}
\label{SS:CONC}
In this article, we used geometric singular perturbation techniques and an action functional to show that a singularly perturbed three-component FitzHugh--Nagumo model 
supports traveling $1$-pulse and $2$-pulse solutions, see Fig.~\ref{fig01_NEW}.  
In particular, and as stated in detail in Theorem~\ref{TH:1P} and Theorem~\ref{TH:2P}, we derived explicit existence conditions as the combination of the roots of the action functional $J_c(\bar{Z}_{p,2p}) = 0$ and roots of its derivate $\displaystyle{ 
J'_c(\bar{Z}_{p,2p}) = 0 }$, where $'$ is the derivate with respect to the undermined variables of the scaled pulse width, $z^*$ for a traveling $1$-pulse solution and $y_1, y_2, y_3$ for a traveling $2$-pulse solution, and the propagating velocity $c$. Moreover, we derived the condition for a saddle-node bifurcation as $\displaystyle{\frac{\partial}{\partial c} J_c(\bar{Z}_{p,2p}) = 0 }$, see (\ref{SN}) and (\ref{SN2}), and this derivative changes from positive to negative at the turning point. This indicates that the lower branch of traveling pulse solutions is unstable, while the upper branch is potentially stable. Upon studying the existence condition (\ref{existence_criterion_two}) of Theorem~\ref{TH:2P}, we also determined a necessary condition for the existence of traveling $2$-pulse solutions, see Lemma~\ref{L:2P}.

Following this approach, we can consider the traveling $N$-pulse solutions, $\bar{Z}_{Np}$, which goes asymptotically to $(U_b, U_b, U_b)$ as $x \to \pm \infty$, that consists of $2N$ interfaces of fronts or backs of a $N$-pulse, $z_i \, (i = 1, 2, \cdots, 2N)$ with $z_i < z_{i+1}$. The leading order of the action functional can be computed explicitly after some straightforward computations (which we present without proof).
\begin{lemma}
The action functional $J_c$ (\ref{Jc}) of a traveling $N$-pulse solution $\bar{Z}_{Np}$ 
is given by
\begin{align}
\label{JNzero}
\begin{aligned}
 \frac{J_c (\bar{Z}_{Np})}{\varepsilon}  &= 
 \frac{2 \alpha}{\phi_v}  h(z_1, z_2, \cdots, z_{2N}; \phi_v) +  \frac{2 \beta}{\phi_w} h(z_1, z_2, \cdots, z_{2N}; \phi_w)  \\
& 
\qquad + 2 \ga \left( \sum_{i=1}^{2N} (-1)^i e^{z_i} \right)
+\frac{2 \sqrt{2}}{3} c \left( \sum_{i=1}^{2N} e^{z_i} \right) + \mathcal{O}(\se)
  \,, 
\end{aligned}
\end{align}
with 
\begin{align*}
& h(z_1, z_2, \cdots, z_{2N}, \phi) \\
& =  
e^{z_1} \left( - 1 - \sum_{i=1}^{2N-1} (-1)^i e^{(1-\phi) \sum \limits_{k=1}^i y_k} 
\right)  \\
& \quad +
\sum_{i=2}^{2N-1}
e^{z_i} \left( - 1 
 + \sum_{j=i}^{2N-1} (-1)^{i+j} e^{(1-\phi) \sum \limits_{k=1}^j y_k}
 - \sum_{j=1}^{i-1} (-1)^j e^{-(1+\phi) \sum \limits _{k=1}^i y_{i-k}} 
\right)  \\
& \quad+
e^{z_{2N}} \left( - 1 + \sum \limits_{i=1}^{2N-1} (-1)^i e^{-(1+\phi) \sum \limits_{k=1}^i y_{2N-k}} 
\right) \,,
\end{align*}
where we recall that $2y_i=z_{i+1}-z_i$ and $\displaystyle{\sum_{i=1}^{2N} z_i = 0}$. 
\end{lemma}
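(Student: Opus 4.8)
The plan is to repeat, for an arbitrary (fixed, finite) number of interfaces, the two computations already carried out for the traveling $1$-pulse solution in \S\ref{SS:ACTION} and for the traveling $2$-pulse solution in \S\ref{SS:ACTION2}; the only genuinely new element is the bookkeeping. First I would construct the leading order profile $\bar{Z}_0=(\bar{U}_0,\bar{V}_0,\bar{W}_0)$ of a traveling $N$-pulse solution as in (\ref{Uzero})/(\ref{Vzero2}): divide $\mathbb{R}$ into the $2N+1$ slow regions $I_s^1,I_s^3,\dots,I_s^{4N+1}$ and the $2N$ fast regions $I_f^2,I_f^4,\dots,I_f^{4N}$ with $I_f^{2i}$ centered at the interface $z_i$, take $\bar{U}_0=(-1)^{(i+1)/2}$ on $I_s^i$ smoothed on $I_f^{2i}$ by a $\tanh((z-z_i)/(\sqrt{2}\eps c))$ transition, and solve $\bar V_0=\mathcal{L}_{1c}\bar U_0$, $\bar W_0=\mathcal{L}_{2c}\bar U_0$, i.e.\ the constant-coefficient linear problems $c^2\bar V_{zz}+c^2\bar V_z-\bar V=-\bar U_0$ and $D^2 c^2\bar W_{zz}+D^2 c^2\bar W_z-\bar W=-\bar U_0$, on each slow region with $C^1$-matching at the interfaces (no jump survives, the $\bar U$-jump living on an $\O(\eps)$ layer) and with decay to $\mp 1$ at $\pm\infty$. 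With $\rho_{v\pm}=(-1\pm\phi_v)/2$, $\rho_{w\pm}=(-1\pm\phi_w)/2$ the characteristic roots, on the $m$-th slow region (where $\bar U_0=(-1)^m$) this gives $\bar V_0=(-1)^m+A_m e^{\rho_{v+}z}+B_m e^{\rho_{v-}z}$, the coefficients $A_m,B_m$ being fixed by the matching chain together with $B_1=0$ and $A_{2N+1}=0$; equivalently $\bar V_0=-1+2\sum_{k=1}^N\mathcal{L}_{1c}\mathbf{1}_{[z_{2k-1},z_{2k}]}$, written out via the Green's function of $\mathcal{L}_{1c}$.

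Next I would split the functional (\ref{Jc}) accordingly, $J_c(\bar Z_{Np})=\sum_{i=1}^{2N+1}\int_{I_s^{2i-1}}+\sum_{i=1}^{2N}\int_{I_f^{2i}}$. On each fast layer, rescaling $\xi=(z-z_i)/\eps$: the $\eps\al,\eps\be,\eps\ga$ terms gain an extra $\eps$ from $dz=\eps\,d\xi$ and are $\O(\eps^2)$, while $\tfrac{\eps^2 c^2}{2}\bar U_z^2+F(\bar U)-F(U_b)$ collapses, exactly as in \S\ref{SS:ACTION}, to $\tfrac12\sech^4(\xi/(\sqrt{2}c))$ at leading order, with $\int_{\mathbb{R}}\sech^4(\xi/(\sqrt{2}c))\,d\xi=\tfrac{4\sqrt{2}}{3}c$; combined with $e^z=e^{z_i}e^{\eps\xi}=e^{z_i}(1+\O(\se))$ this yields $\tfrac{2\sqrt{2}}{3}\eps c\,e^{z_i}$ per interface, hence the $\tfrac{2\sqrt{2}}{3}c\sum_{i=1}^{2N}e^{z_i}$ term of (\ref{JNzero}). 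This step is literally the $N$-fold copy of the fast-field estimate already done and requires no new input.

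On each slow region the reduction is equally direct: $F'(\pm 1)=0$ gives $F(\bar U)-F(U_b)=\O(\eps^2)$, $\bar U_{0z}=0$ kills the kinetic term, and $U_b=-1+\O(\eps)$, so the only $\O(\eps)$ contribution is $\tfrac{\eps\al}{2}((-1)^m\bar V_0-1)+\tfrac{\eps\be}{2}((-1)^m\bar W_0-1)+\eps\ga((-1)^m+1)$. Each resulting integral $\int e^z[\cdots]\,dz$ is elementary — a combination of $\int e^{(1+\rho)z}\,dz=\tfrac{1}{1+\rho}e^{(1+\rho)z}$ for $\rho\in\{0,\rho_{v\pm},\rho_{w\pm}\}$ — and the $\pm\se$ endpoints perturb it only by $\O(\ese)$, i.e.\ by $\O(\se)$ after dividing by $\eps$. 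The $\ga$ terms survive only on the pulse interiors ($m$ even, region $(z_{m-1},z_m)=(z_{2k-1},z_{2k})$) and sum to $2\ga\sum_{k=1}^N(e^{z_{2k}}-e^{z_{2k-1}})=2\ga\sum_{i=1}^{2N}(-1)^i e^{z_i}$. Collecting, for each fixed $i$, all terms proportional to $e^{z_i}$ and rewriting the surviving factors $e^{\rho z_i}$ as products of $e^{(1\mp\phi)y_k}$ via $2y_k=z_{k+1}-z_k$ (the normalization $\sum_i z_i=0$ merely fixing the overall translation) produces $\tfrac{2\al}{\phi_v}h(z_1,\dots,z_{2N};\phi_v)+\tfrac{2\be}{\phi_w}h(z_1,\dots,z_{2N};\phi_w)$, and comparison with (\ref{Jzero}) and (\ref{J2zero}) for $N=1,2$ serves as a consistency check.

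The real difficulty is purely combinatorial: solving the matching chain for the slow-field coefficients $A_m,B_m$ (equivalently, computing the $N$ convolutions $\mathcal{L}_{1c}\mathbf{1}_{[z_{2k-1},z_{2k}]}$) for general $N$, and then re-summing the $\O(N)$ exponential contributions from all slow integrals into the single double sum defining $h$. I would handle this by noting that the matching conditions give a first-order recurrence for $(A_m,B_m)$ in $m$ with $m$-dependent forcing $\propto(-1)^m e^{-\rho z_m}$, whose closed form, inserted into the now telescoping slow-region sums, reproduces exactly the alternating signs $(-1)^{i+j}$ in $h$. No new estimate is needed beyond those of the $N=1,2$ cases, and since all the slow/fast bounds above are $N$-independent the $\eps\to 0$ reduction is uniform in the (fixed) number of pulses; this yields (\ref{JNzero}).
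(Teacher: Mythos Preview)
Your proposal is correct and follows exactly the approach the paper intends: the paper presents this lemma explicitly \emph{without proof}, stating only that it ``can be computed explicitly after some straightforward computations,'' and your sketch spells out precisely those computations by replicating the slow/fast decomposition and integral evaluations of \S\ref{SS:ACTION} and \S\ref{SS:ACTION2} for $2N$ interfaces. There is nothing to compare --- your write-up is a faithful expansion of what the paper leaves implicit.
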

As the combination of the roots of $J(\bar{Z}_{Np})=0$ and its 
derivative 
\begin{align*}
\displaystyle{ \left \{ \frac{\partial J(\bar{Z}_{Np})}{\partial y_i} = 0 
\right \}_{i=1}^{2N-1} }, 
\end{align*}
we can derive the existence conditions of traveling $N$-pulse solutions $\bar{Z}_{Np}$. The computations will be straightforward, but extremely tedious, and we decided not to pursue this direction. 

\subsection{Collision dynamics and Hopf instabilities near turning points} \label{SS:COL}
We end this article by discussing some interesting results of numerical simulations of (\ref{FHNd}). 
Figure~\ref{fig07} shows the numerical simulations of interacting counter-propagating $1$-pulse and $2$-pulse solutions for $(\alpha, \beta, \gamma, D^2) = (4, -1, 0.8, 3)$. Note that from Theorem~\ref{TH:1P} and Theorem~\ref{TH:2P} it follows that for this parameter set traveling $1$-pulse and $2$-pulse solutions coexist and  
we take these counter-propagating pulse solutions as the initial conditions. 
As shown in Fig.~\ref{fig05}, the traveling pulse solutions 
we are dealing with are the fast type, that is, they emanate through a subcritical bifurcation 
from the stationary solutions (in contrast, the slow type are emanated through a supercritical bifurcation). These pulse solutions appear to be unstable, 
and then recover their stabilities after turning around the saddle-node points. 
In (a) two counter-propagating $1$-pulse solutions 
collide at the center part of the domain, and then they disappear and settle into the background uniform state. We observe the same phenomena for  two counter-propagating $2$-pulse solutions in (b).
In (c) we show the collision between a left-going $1$-pulse solution and a right-going $2$-pulse solution. The $1$-pulse solution and the first peak of the $2$-pulse solution annihilate after their collision and only the second peak of $2$-pulse solution survives and turns into the right-going $1$-pulse solution. \begin{figure}
 \centering
 \includegraphics[width=10cm]{./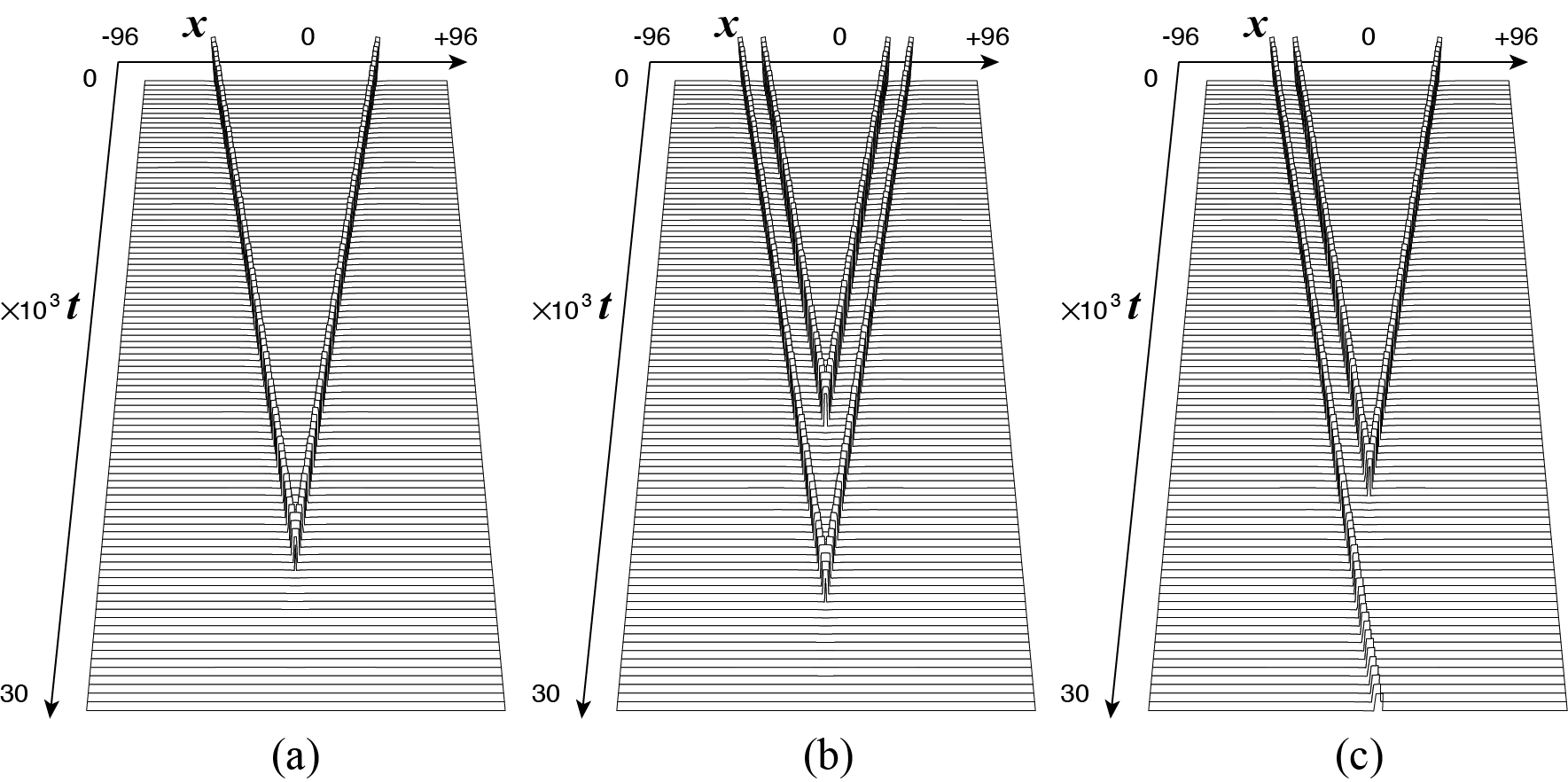}
  \caption{ Numerical simulations of (\ref{FHNd}) 
  showcasing the collision dynamics of traveling $1$-pulse and $2$-pulse solutions. (a) Collision 
  between two counter-propagating $1$-pulse solutions, (b) collision between two counter-propagating $2$-pulse solutions and  
  (c) collision between a counter-propagating $1$-pulse and $2$-pulse solution. The parameters are set to 
  $(\alpha, \beta, \gamma, D^2, \varepsilon) = (4, -1, 0.8, 3, 0.025)$ and $(\hat{\tau},\hat{\theta})=\eps^2(1,D^2)$.    }
  \label{fig07}
\end{figure}
\begin{figure}[ht!]
 \centering
 \includegraphics[width=10cm]{./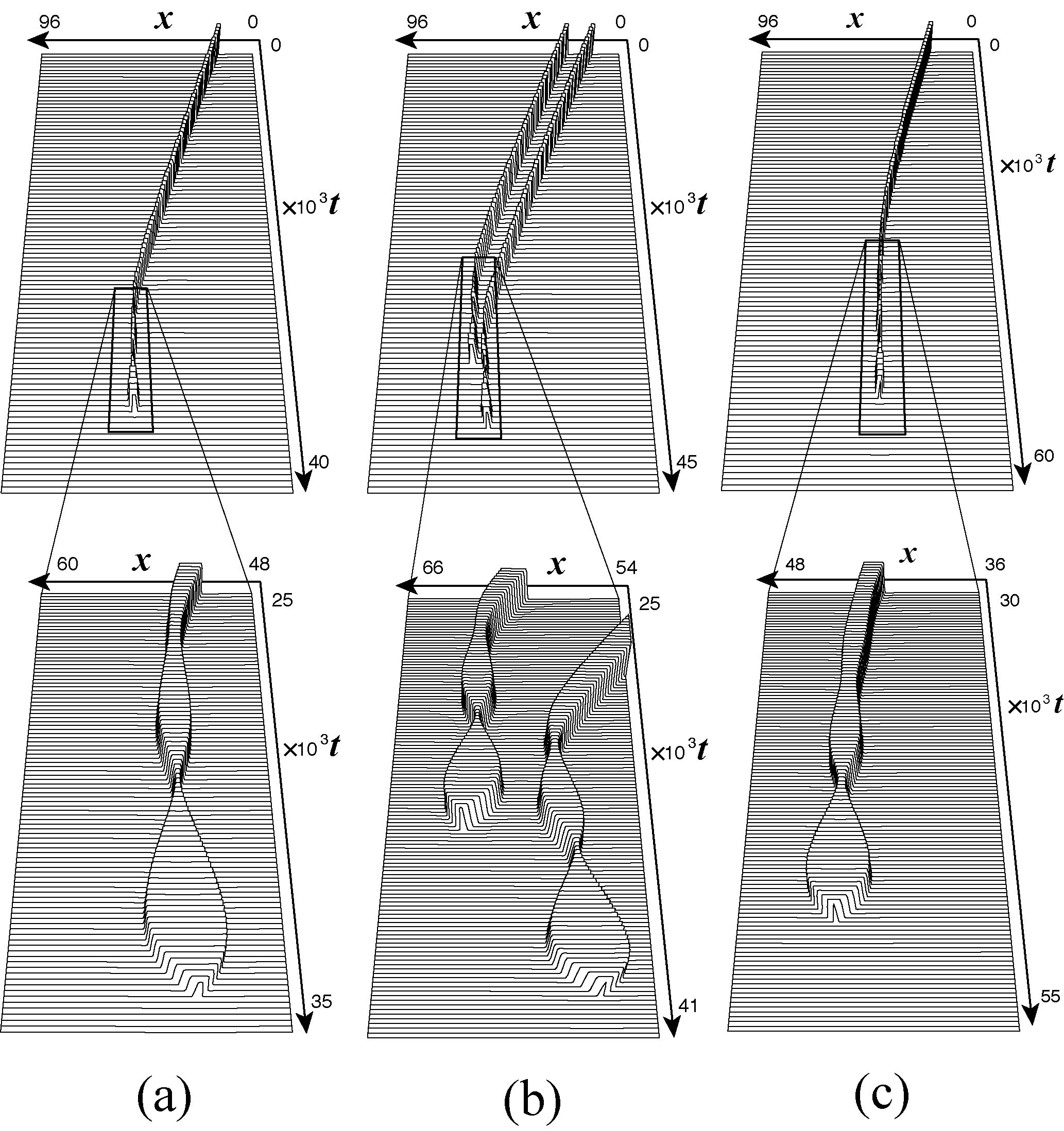}
  \caption{ Numerical simulations of 
  the oscillatory destabilization to the uniform background state near $-1$.
  In (a) and (b) the parameters are set to 
  $(\alpha, \beta, D^2, \eps) = (4, -1, 3, 0.025)$ and we observe that the $1$-pulse and $2$-pulse solutions disappear for $\gamma = 0.913$, which is near the (numerically computed) turning point $\gamma_{SN}^{num} \approx 0.990$. 
  Just before the annihilation, we observe the oscillatory behavior of the pulses as shown 
  in the magnified figures. 
 In (c) the parameters are set to  $(\alpha, \beta, \gamma, \eps) = (2, 1, 1, 0.02)$ 
 and we observe that the $1$-pulse solution disappears around  $D^2 = 0.680$, which is again near the (numerically computed) turning point $(D^2)_{SN}^{num} \approx 0.500$.  
  Just before the annihilation, we again observe the oscillatory behavior of the pulse as shown 
  in the magnified figure. 
  }
  \label{fig08}
\end{figure}

We observe stable traveling pulse solutions in the parameter regions associated with the upper parts of solution branches 
in Figs.~\ref{fig02} and \ref{fig05}. However, these pulse solutions lose their stability just before the turning point on the solution branches. 
Fig.~\ref{fig08} shows the spatio-temporal behavior of a $1$-pulse and $2$-pulse solution near the turning point. In particular, in (a) and (b) we look at the negative case of $\alpha \beta < 0$~\cite{vH_ACTION2} and set $(\al, \be, D^2) = (4, -1, 3)$ and $\gamma= 0.913$, such that $\gamma$ is in the neighborhood of the turning point at $\gamma_{SN}^{num} \approx 0.990$. The traveling $1$-pulse and $2$-pulse solutions travel with constant speed for a while, then they start to oscillate and finally they annihilate to the uniform backgrounds state near $-1$. 
These observations indicate that a Hopf instability occurs just before the turning points of the upper branches in the bifurcation diagram of Fig.~\ref{fig06} , and the traveling pulse solutions become unstable. 
In Fig.~\ref{fig08}(c) we observe similar behavior of oscillatory destabilization to the uniform background state for the positive case where $\alpha \beta > 0$. 
In particular, we set $(\al, \be, \gamma) = (2, 1, 1)$ and $D^2 = 0.680$, such that $D^2$ is in the neighborhood of the turning point at $(D^2)_{SN}^{num} \approx 0.500$. See also the bifurcation diagram of Fig.~\ref{fig02}.  By increasing $D^2$, traveling pulse solutions appear from the stable standing pulse solutions in a subcritical manner. 

The action functional approach, demonstrated in \S\ref{SS:ACTION} and \S\ref{SS:APPL}, 
did not cover the stability analysis related to the complex eigenvalues emerging from the essential spectrum upon increasing $\tau$ and/or $\theta$, see Remark~\ref{R:0}, i.e., we cannot use the action functional approach to unravel the Hopf bifurcation. %
Here, we shortly discuss how the Hopf bifurcation can also be discovered from the singular limit analysis
in \S\ref{SS:HYBD}. At a Hopf bifurcation we have a purely imaginary eigenvalue. Therefore, we set $\hat{\lambda} = i \Omega$ in (\ref{hybrid_eigen}). Moreover,
we set $\displaystyle{ \kappa_{v \pm} = \frac{-1 \pm (p_v + i q_v) }{2} }$ and $\displaystyle{ \kappa_{w \pm} = \frac{-1 \pm (p_w + i q_w) }{2} }$ with $p_v, q_v, p_w, q_w \in \mathbb{R}$ to obtain a system of four equations 
\begin{align*}
c^2 \hat{\tau}( p_v^2 - q_v^2) = c^2 \hat{\tau} + 4, 
& \qquad c^2 p_v q_v =2 \Omega\,,  \\
c^2 \hat{\theta}( p_w^2 - q_w^2) = c^2 \hat{\theta} + 4, 
& \qquad c^2 p_w q_w =2 \Omega\,.  
\end{align*}
Similarly, we can split (\ref{hybrid_eigen}) into the real and imaginary parts as follows
\begin{align}
\label{realimag}
\begin{aligned}
A_+ A_- - B^2 - C_+ C_- + D_+ D_- & =  0, \\
(A_+ + A_-) B + C_+ D_- + C_- D_+ & =  0, 
\end{aligned}
\end{align}
where 
\begin{align*}
\begin{aligned}
A_{\pm} & =   \frac{\alpha}{\hat{\tau} \phi_v} (1- e^{\mp 2 \rho_{v\pm} z^*})
- \frac{\alpha p_v}{\hat{\tau} (p_v^2 + q_v^2)} + \frac{\beta}{\hat{\theta} \phi_w} (1- e^{\mp 2 \rho_{w\pm} z^*})
- \frac{\beta p_w}{\hat{\theta} (p_w^2 + q_w^2)} ,  \\
B & =  \frac{\alpha q_v}{\hat{\tau} (p_v^2 + q_v^2)} +  \frac{\beta q_w}{\hat{\theta} (p_w^2 + q_w^2)} - \frac{\sqrt{2}}{6} c \Omega,  \\
C_{\pm} &= \frac{\alpha}{\hat{\tau} (p_v^2 + q_v^2)} e^{\mp (1 \pm p_v) z^*} \left( p_v \cos \left( q_v z^* \right) - q_v \sin \left( q_v z^* \right) \right)  \\
& \quad
+ \frac{\beta}{\hat{\theta} (p_w^2 + q_w^2)} e^{\mp (1 \pm p_w) z^*} \left( p_w \cos \left( q_w z^* \right) - q_w \sin \left( q_w z^* \right) \right),  \\
D_{\pm} & =  \frac{\alpha}{\hat{\tau} (p_v^2 + q_v^2)} e^{\mp (1 \pm p_v) z^*} \left( p_v \sin \left( q_v z^* \right) + q_v \cos \left( q_v z^* \right) \right)  \\
& \quad
+ \frac{\beta}{\hat{\theta} (p_w^2 + q_w^2)} e^{\mp (1 \pm p_w)z^*} \left( p_w \sin \left( q_w z^* \right) + q_w \cos \left( q_w z^* \right) \right). 
\end{aligned}
\end{align*}
Upon setting the parameters to $(\alpha, \beta, \gamma) = (2,1,1)$ as in Fig. \ref{fig02}, i.e.,
for the positive case of $\alpha \beta > 0$, we solve the equations of (\ref{realimag}) and (\ref{existence_criterion_one})/(\ref{existence_criterion})
with respect to $(c, z^*, \Omega, D^2)$ (recall $(\hat{\tau},\hat{\theta})=(1,D^2)$). 
 We get 
$(c, z^*, \Omega, D^2) = (2.47084, 2.58710,$ $1.86382, 0.385431)$, 
which indicates that the traveling $1$-pulse solution 
loses their stability just before the turning point of $(D^2)_{SN} 
\approx 0.359912$. 
On the other hand, setting the parameters to $(\alpha, \beta, D^2) = (4,-1,3)$ as in Fig. \ref{fig05}, i.e.,
for the negative case of $\alpha \beta < 0$, 
and solving the equations, we get $(c, z^*, \Omega, \gamma) 
= (2.52536, 2.52380, 1.35627, 1.03814)$.
Again, the Hopf bifurcation occurs in the neighborhood 
of the turning point at $\hat{\gamma}_{SN} \approx 1.04724$. 
These calculations are consistent with the numerical observations in Fig. \ref{fig08}, 
in which the traveling $1$-pulse solutions lose their stabilities via Hopf bifurcations 
just before the turning points of solution branches.  
\section*{Acknowledgements}
The authors would like to thank the 2nd Joint Australia-Japan workshop on 
dynamical systems with applications in life science (AJwsDSALS2, Biei, Japan, July 15-17, 2018) for the opportunity to work on this project. 

\appendix
\section{Numerics}
\label{SS:NUM}
In this appendix, we present a brief description of the numerical method used for path following the solution branches, demonstrated in Figs. \ref{fig02} and \ref{fig05}. 
This method is based on the predictor-corrector method of pseudo-arclength continuation \cite{Doedel, Keller}.

We begin with 
the general form of a
traveling wave problem for a scalar reaction-diffusion equation in a comoving frame $y=x-ct$  
(so for one of the components) in a one dimensional domain 
$\Omega := [-L, L]$ with periodic boundary conditions: 
$$
0 = D U_{xx} + c U_{x} + F(U; \gamma) \, , \qquad \qquad U(-L)=U(L)\,, U_x(-L)=U_x(L)\,,
$$
where $D$ is the diffusion coefficient and $F: \mathbb{R} \times \mathbb{R} \rightarrow \mathbb{R}$ is the reaction term and $\gamma \in \mathbb{R}$ represents a continuation parameter. Note that we set $L=48$ in this article.

We spatially discretize $U$ by setting $\Delta x = 2L/n$. In other words, $U$ 
becomes $\bmU:= (U_0, U_1, \cdots, U_{n-1})^T$ where 
$U_i = U(-L +i \Delta x)$, $i=0, 1, \ldots, n-1$.
So, we get
\begin{eqnarray}
\bmG(\bmU; c, \gamma)  & := & \bmD \bmU_{xx} + c \bmU_x + \bmF(\bmU; \gamma) = \bmxero, \qquad 
U_n = U_0, \; \; U_{-1} = U_{n-1},
\label{num1}
\end{eqnarray}
where $\bmD= D\bmI$ with $\bmI$ the $n \times n$ identity matrix, $\bmF$ and $\bmG: \mathbb{R}^n \times \mathbb{R} \rightarrow \mathbb{R}^n$, and $\bmxero$ is the $n$-dimensional zero-vector. Note that we set $n=12288$ in this article.

Next, we show how a single continuation step with respect to the continuation parameter $\gamma$ is implemented. Namely, the transition from a $j$-th calculated solution set to
the next solution set of the branch.
If the Jacobian matrix $\partial \bmG (\bmU^j; c^j, \gamma^j)/
\partial \bmU $ is non-singular, the implicit function theorem assures the existence of the solution branch in the neighborhood of a $j$-th solution set.
Under the periodic boundary condition, an infinite set of traveling wave solutions occur due to translation invariance of the system. To uniquely pinpoint a solution $\bmU$, we add the following integral phase condition 
\begin{eqnarray}
P(\bmU; c, \gamma) & := & \sum_{i=0}^{n-1}
U_i \cdot (U_{i+1}(s^j) - U_{i}(s^j)) = 0. 
\label{num4}
\end{eqnarray}
Since system (\ref{num1}) coupled with (\ref{num4}) 
consists of $(n+1)$ equations for $(n+2)$ unknowns. Therefore, we append these equations with 
a quadratic scalar equation for the small distance $\Delta s := s - s^j$ between two consecutive solution sets
\begin{eqnarray}
\sum_{i=0}^{n-1} (U_i - U_i(s^j))^2 \Delta x
+ (c - c(s^j))^2
+ (\gamma - \gamma(s^j))^2 - (\Delta s)^2 = 0,
\label{num2}
\end{eqnarray}
where the $j$-th solution set $\bmZ^j := (\bmU(s^j), c(s^j), \gamma(s^j))^T$
are implicitly parameterized as function of the arclength parameter $s$ along the branch.
Using a Taylor series expansion $\displaystyle{ \bmZ - \bmZ^j = \frac{d \bmZ^{j}}{d s} \Delta s + {\mathcal{O}}((\Delta s)^2) }$, we replace (\ref{num2}) by the following linear form with respect to the increments
\begin{align}
\label{num3}
\begin{aligned}
N(\bmU; c, \gamma) & =  \sum_{i=0}^{n-1}
\frac{d U_i^{j}}{d s} (U_i - U_i(s^j)) \Delta x
+ \frac{d c^{j}}{d s} (c - c(s^j))  \\
& \quad + \frac{d \gamma^{j}}{d s} (\gamma - \gamma(s^j)) - \Delta s = 0,
\end{aligned}
\end{align}
where $d \bmZ^j/ ds$ is supposed to be an unit vector tangent to the solution branch curve at the current position $\bmZ^j$. 

By solving the $(n+2)$ equations of (\ref{num1}), (\ref{num4}) and (\ref{num3})
for each continuation step, a solution branch is obtained as a chain of solutions $\bmZ^j$.
The initial guess for the next solution set is obtained in the direction of $d \bmZ^j/ ds$.
We then iteratively solve the equations using Newton's method,
\begin{eqnarray*}
\displaystyle{
\frac{\partial (\bmG, P, N)(\bmZ^j)}{\partial (\bmU, c, \gamma)}
\Delta \bmZ
} & = &
- \begin{pmatrix} \bmG(\bmZ^j) \\ P(\bmZ^j)  \\ N(\bmZ^j) \end{pmatrix},
\end{eqnarray*}
where $\Delta \bmZ := \bmZ^{j+1} - \bmZ^j$.
If the step size $\Delta s$ is given small enough, the Newton's iteration
converges to the next solution set $\bmZ^{j+1}$ on the branch
in the direction perpendicular to $d \bmZ^{j}/ d s$.
After converging, we compute the new tangent vector
$d \bmZ^{j+1}/ d s$ by solving $(n+2)$ equations using the Jacobian matrix evaluated at $\bmZ^j$
\begin{eqnarray*}
\displaystyle{
\frac{\partial (\bmG, P, N)(\bmZ^{j})}{\partial (\bmU, c, \gamma)}
\frac{d \bmZ^{j+1}}{d s}
} & = &
\begin{pmatrix} \bmxero \\ 0 \\ 1 \end{pmatrix}.
\end{eqnarray*}
The new tangent vector is rescaled to satisfy
$|d \bmZ^{j+1}/ d s|^2 = 1$,
thus preserving the right direction along the branch. 

\end{document}